\newcommand{\splitatcommas}[1]{%
	\begingroup
	\begingroup\lccode`~=`, \lowercase{\endgroup
		\edef~{\mathchar\the\mathcode`, \penalty0 \noexpand\hspace{0pt plus 1em}}%
	}\mathcode`,="8000 #1%
	\endgroup
}
\newcommand*{\Comb}[2]{{}^{#1}C_{#2}}%
\newtheorem{fact}{\bf Fact}
\begin{document}
	
\mainmatter          

\title{On the Construction of Near-MDS Matrices}

\author{Kishan Chand Gupta\inst{1} \and
Sumit Kumar Pandey\inst{2} \and
Susanta Samanta\inst{3} \orcidlink{0000-0003-4643-5117}}

\authorrunning{K. C. Gupta et al.}
	
\institute{Applied Statistics Unit, Indian Statistical Institute, \\ 203, B.T. Road, Kolkata-700108, INDIA. \\ \email{kishan@isical.ac.in} \and Computer Science and Engineering, Indian Institute of Technology Jammu,\\ Jagti, PO Nagrota, Jammu-181221, INDIA. \\ \email{emailpandey@gmail.com} 
\and Applied Statistics Unit, Indian Statistical Institute, \\ 203, B.T. Road, Kolkata-700108, INDIA. \\ \email{susantas\_r@isical.ac.in}}
\maketitle
\begin{abstract}
The optimal branch number of MDS matrices makes them a preferred choice for designing diffusion layers in many block ciphers and hash functions. However, in lightweight cryptography, Near-MDS (NMDS) matrices with sub-optimal branch numbers offer a better balance between security and efficiency as a diffusion layer, compared to MDS matrices.
In this paper, we study NMDS matrices, exploring their construction in both recursive and nonrecursive settings. 
We provide several theoretical results and explore the hardware efficiency of the construction of NMDS matrices. Additionally, we make comparisons between the results of NMDS and MDS matrices whenever possible.
For the recursive approach, we study the DLS matrices and provide some theoretical results on their use. Some of the results are used to restrict the search space of the DLS matrices.
%
We also show that over a field of characteristic 2, any sparse matrix of order $n\geq 4$ with fixed XOR value of 1 cannot be an NMDS when raised to a power of $k\leq n$.
Following that, we use the generalized DLS (GDLS) matrices to provide some lightweight recursive NMDS matrices of several orders that perform better than the existing matrices in terms of hardware cost or the number of iterations.
For the nonrecursive construction of NMDS matrices, we study various structures, such as circulant and left-circulant matrices, and their generalizations: Toeplitz and Hankel matrices. In addition, we prove that Toeplitz matrices of order $n>4$ cannot be simultaneously NMDS and involutory over a field of characteristic 2.
Finally, we use GDLS matrices to provide some lightweight NMDS matrices that can be computed in one clock cycle. The proposed nonrecursive NMDS matrices of orders 4, 5, 6, 7, and 8 can be implemented with 24, 50, 65, 96, and 108 XORs over $\mathbb{F}_{2^4}$, respectively. 

\keywords{Diffusion Layer\and Branch number \and MDS matrix \and Near-MDS matrix \and DLS matrix \and XOR count.}		
\end{abstract}

\section{Introduction}
Shannon's concepts of confusion and diffusion~\cite{SHANON} are well demonstrated through the design of symmetric key cryptographic primitives. In many cases, the round function of the design uses both non-linear and linear layers to achieve confusion and diffusion, respectively.
The focus of this paper is the construction of linear diffusion layers whose purpose is to maximize the spreading of internal dependencies. Optimal diffusion layers can be achieved by utilizing MDS matrices with maximum branch numbers. An example of this is the MixColumn operation in AES~\cite{AES} which employs a $4\times 4$ MDS matrix. The use of an MDS matrix makes AES robust against differential and linear attacks with a low number of rounds, which is ideal for low-latency applications.

\par The growth of ubiquitous computing such as IoT has highlighted new security needs, driving research in the field of lightweight cryptography. Currently, there is a growing focus on the study of lightweight diffusion matrices. Numerous proposals for constructing lightweight MDS and involutory MDS matrices have been made~\cite{BKL2016,GR15,CYCLICM,GHADA,XORM,PXOR1}. It is important to note that elements of an MDS matrix must be nonzero, resulting in a high hardware implementation cost. To minimize hardware cost, Guo et al.~\cite{PHOTON,LED} proposed a novel design approach of using recursive MDS matrices, which have a low hardware area but require additional clock cycles. Subsequently, researchers have focused on designing recursive MDS matrices, producing a significant number of results~\cite{Augot2014,Berger2013,Rec_MDS_2022,GuptaPV17_2,GuptaPV17_1,Sajadieh_Recursive_Diffusion_Layer12,DSI,Recursive_Diffusion_Layer12,Recursive_Diffusion_Layer2014}. Instead of recursive approach, Sajadieh et al.~\cite{SM19} constructed lightweight MDS matrices by the composition of different sparse matrices and the block cipher FUTURE~\cite{FUTURE} use this idea for the MDS matrix in its MixColumn operation.

\par We have gained a thorough understanding of local optimization techniques so far. As a consequence, recent efforts have shifted to addressing the problem on a more fundamental level, viewing it as the well-known Shortest Linear Straight-Line Problem (SLP), which was first used in~\cite{Boyar_2008} for global optimization of a pre-defined linear function. This leads to the construction of more lightweight MDS matrices~\cite{Duval_Leurent_2018,Kranz_Leander_Stoffelen_Wiemer_2017,Li_Sun_Li_Wei_Hu_2019,Yang_Zeng_Wang_2021}.

\par However, the trade-off between security and efficiency may not be optimal with MDS and recursive MDS matrices. Near-MDS (NMDS) matrices are characterized by having sub-optimal branch numbers, leading to a slower diffusion speed and smaller minimum active Sboxes per round compared to ciphers using MDS matrices.
However, studies such as~\cite{Alfarano_2018,MIDORI} have demonstrated that NMDS matrices, when used with a well-chosen permutation, can improve security against differential and linear cryptanalysis. Thus, NMDS matrices offer better trade-offs between security and efficiency when used in the diffusion layer of lightweight cryptography.
Some recent lightweight block ciphers, such as PRIDE~\cite{PRIDE}, Midori~\cite{MIDORI}, MANTIS~\cite{SKINNY}, FIDES~\cite{Fides} and PRINCE~\cite{PRINCE} have utilized NMDS matrices. The importance of lightweight symmetric key primitives with low power, low energy, or low latency is growing, and NMDS matrices are commonly employed in the construction of lightweight block ciphers. However, NMDS or recursive NMDS matrices have been less studied in the literature up to now. This inspires us to introduce new results on NMDS matrices and in this work, we look at both recursive NMDS matrices and NMDS matrices that can be constructed using one clock cycle.
%
%
\\ \\
\textbf{Contributions.~}
The primary focus of this paper is the study of NMDS matrices, examining their construction using both recursive and nonrecursive approaches. We present various theoretical results and also discuss the hardware efficiency of the construction of NMDS matrices. We also compare the results of NMDS with MDS matrices whenever possible.

\par The exhaustive search using a naive way of finding a higher order recursive NMDS matrix using DLS matrices is impractical. In this regard, we present some theoretical results that are used to restrict the search space to a small domain. Following that, we observe that there are no $k$-NMDS (see Definition~\ref{Deef_Recursive_NMDS}) DLS matrices over $\mathbb{F}_{2^4}$ for orders $n=5,6,7,8$ and $k\in \set{n-1,n}$ if the fixed XOR (say $\mathcal{K}$) of the matrix is taken to be less than $\ceil{\frac{n}{2}}$.
\par In~\cite[Theorem~$3$]{Rec_MDS_2022}, the authors proved that for a DLS matrix $M$ of order $n\geq 2$, $M^k$ is not MDS for $k < n-1$. In Theorem~\ref{th_DLS_power_need_nMDS}, we demonstrate that for a DLS matrix $M$ of order $n\geq 2$, $M^k$ is not NMDS for $k<n-2$.
Furthermore, we discuss the impact of the permutation $\rho$ in a DLS matrix $DLS(\rho; D_1, D_2)$ for the construction of recursive NMDS matrices. More specifically, in Corollary~\ref{corollary_min_power_in_PD1+D2_if_P_not_full_cycle}, we show that if $\rho$ is not an $n$-cycle, a DLS matrix of order $n\geq 2$, cannot be $k$-NMDS for $k\leq n$.

\par In~\cite[Theorem~2]{OUR_DSI}, it was proved that no sparse matrix of order $n\geq 3$ with $\mathcal{K}=1$ can yield an MDS matrix when raised to power $n$. In Theorem~\ref{Th_1_XOR_matrix_not_NMDS}, we show that over a field of characteristic $2$, there is no sparse matrix of order $n\geq 4$ with $\mathcal{K}=1$ can yield NMDS matrix when raised to power $k\leq n$. As a consequence, it is shown that for a $k$-NMDS matrix of orders $4$ with $k\leq 4$, the lowest XOR count is $2r$ over the field $\mathbb{F}_{2^r}$. 

\par Using GDLS matrices, we also introduce some lightweight recursive NMDS matrices of orders $4, 5, 6$ and $7$ that can be implemented with $8, 13, 13$, and $18$ XORs over $\FF_{2^4}$, respectively. Besides searching over $\FF_{2^4}$, we also provide some efficient $k$-NMDS GDLS matrices with $k\in \set{n-1,n}$ over $GL(8,\FF_{2})$ for various orders $n$. Table~\ref{table_comparison_k-NMDS} compares our results with the known results.

\par We examine different structures for the nonrecursive construction of NMDS matrices, including circulant and left-circulant matrices, as well as their generalizations such as Toeplitz and Hankel matrices. Proposition 3 in~\cite{Li_Wang_2017} demonstrates that circulant matrices of order $n>4$ cannot be both NMDS and involutory over $\mathbb{F}_{2^r}$. In Theorem~\ref{Th_Toeplitz_NMDS_involutory}, we prove that this result also holds for Toeplitz matrices. 

\par According to \cite[Lemma~5]{GR15}, it has been established that no orthogonal circulant matrix of order $2^n$ over the field $\mathbb{F}_{2^r}$ can be MDS for $n \geq 2$. However, in Remark~\ref{Remark_circulant_NMDS_orthogonal}, we see that NMDS circulant orthogonal matrices of any order may exist over the field $\mathbb{F}_{2^r}$. Similar result shows up for left-circulant matrices. Table~\ref{Table_Comparision_MDS_NMDS} is provided to compare the involutory and orthogonal properties of MDS and NMDS matrices constructed from the circulant, left-circulant, Toeplitz, and Hankel families.

\par For Hadamard, circulant, or left-circulant NMDS matrices of order $n$, we must have $\mathcal{K}\geq n(n-2)$, which results in a high implementation cost for constructing NMDS matrices from such matrices. To address this issue, we use composition of different GDLS matrices to construct nonrecursive NMDS matrices. Our proposed nonrecursive NMDS matrices for orders 4, 5, 6, 7, and 8 can be implemented using 24, 50, 65, 96, and 108 XORs over $\mathbb{F}_{2^4}$, respectively. Table~\ref{table_comparison_nonrec_NMDS} compares our results for nonrecursive NMDS matrices with the known results. 
\par To compare with MDS matrices, we examine certain well-known results of MDS matrices and apply them to NMDS matrices. For example, in Remark~\ref{Remark_submatrix_NMDS_is_MDS} and Remark~\ref{Remark_NMDS_is_singular}, we observe that a submatrix of an NMDS matrix may not necessarily be an NMDS matrix, and an NMDS matrix can also be singular. However, we prove that the inverse of a nonsingular NMDS matrix is also an NMDS matrix. In Corollary~\ref{corollary_D1AD2_nMDS}, we prove that if $M$ is an NMDS matrix and $D_1$ and $D_2$ are two nonsingular diagonal matrices, then $D_1MD_2$ is also an NMDS matrix.
\\ \\
\textbf{Outline.~}The remaining sections of this paper are structured as follows. In Section~\ref{Sec_Definitions}, definitions and preliminaries are presented. Section~\ref{Section_Rec_NMDS} discusses DLS matrices for the construction of recursive NMDS matrices. Some lightweight recursive NMDS matrices of various orders, using GDLS matrices, are proposed in Section~\ref{Section_Rec_NMDS_GDLS}. Section~\ref{Section_Single_clock_NMDS} discusses circulant, left-circulant, Toeplitz and Hankel matrices for the construction of nonrecursive NMDS matrices. Section~\ref{Section_nonrecursive_NMDS_GDLS} provides some lightweight nonrecursive NMDS matrices constructed from GDLS matrices. Section~\ref{Section_NMDS_conclusion_future_work} concludes the paper and discusses some possible future work.

\section{Definition and Preliminaries}\label{Sec_Definitions}
Before discussing NMDS matrices in depth, let us recall some basic notations for finite fields, their representations, and the construction of matrices.

Let $\mathbb{F}_{2^r}$ be the finite field of $2^r$ elements and $\mathbb{F}_{2^r}^n$ be the set of vectors of length $n$ with entries from the finite field $\mathbb{F}_{2^r}$. Let $\mathbb{F}_{2^r}^*$ be the multiplicative group of $\mathbb{F}_{2^r}$. 
It is a well-established fact that elements of a finite field with characteristic $2$ can be represented as vectors with coefficients in $\mathbb{F}_2$. In other words, there exists a vector space isomorphism from $\mathbb{F}_{2^r}$ to $\mathbb{F}_2^r$ defined by $x=(x_1\alpha_1+x_2\alpha_2+ \cdots +x_r\alpha_r) \rightarrow (x_1,x_2, \ldots,x_r)$, where $\{\alpha_1,\alpha_2,\ldots,\alpha_r\}$ is a basis of $\mathbb{F}_{2^r}$.
If $\alpha$ is a primitive element of $\mathbb{F}_{2^r}$, every nonzero element of $\mathbb{F}_{2^r}$ can be expressed as a power of $\alpha$. Hence, $\mathbb{F}_{2^r}^*=\set{1,\alpha,\alpha
^2,\alpha^3,\ldots,\alpha^{2^r-1}}$. In favor of a more compact notation, we also use a hexadecimal representation of a finite field $\FF_{2^r}$ throughout the paper. For instance, $\FF_{2^4}/0$x$13$ denotes the finite field $\FF_{2^4}$ constructed by the polynomial $x^4+x+1$. 

A square matrix is a matrix with the same number of rows and columns. An $n\times n$ matrix is known as a matrix of order $n$. 
The ring of $n\times n$ matrices over $\mathbb{F}_{2^r}$ is denoted by $M_n(\mathbb{F}_{2^r})$ and the general linear group consisting of nonsingular $n\times n$ matrices over $\mathbb{F}_{2^r}$ is denoted by $GL(n,\mathbb{F}_{2^r})$.

The diffusion power of a linear transformation, as specified by a matrix, is quantified by its branch numbers~\cite[pages 130--132]{AES}.
\begin{definition}~\cite[page $132$]{AES}
    The differential branch number, $\beta_{d}(M)$, of a matrix $M$ of order $n$ over the finite field $\FF_{2^r}$ is defined as the smallest number of nonzero components in both the input vector $x$ and the output vector $Mx$, as we consider all nonzero $x$ in $(\FF_{2^r})^n$ i.e.
    \begin{center}
        $\beta_{d}(M)=min_{x\neq \mathbf{0}}(w(x)+w(Mx))$,
    \end{center}
    where $w(x)$ represents the number of nonzero components in the vector $x$.
\end{definition}
\begin{definition}~\cite[page $132$]{AES}
    The linear branch number, $\beta_{l}(M)$, of a matrix $M$ of order $n$ over the finite field $\FF_{2^r}$ is defined as the smallest number of nonzero components in both the input vector $x$ and the output vector $M^Tx$, as we consider all nonzero $x$ in $(\FF_{2^r})^n$ i.e.
    \begin{center}
        $\beta_{l}(M)=min_{x\neq \mathbf{0}}(w(x)+w(M^Tx))$,
    \end{center}
    where $w(x)$ represents the number of nonzero components in the vector $x$.
\end{definition}

\begin{remark}~\cite[page $144$]{AES}\label{Remark_min_dis_is_branch_number}
    The differential branch number $\beta_{d}(M)$ of a matrix $M$ is equal to the minimum distance of the linear code $C$ generated by the matrix $[I~|~M]$. Furthermore, $\beta_{l}(M)$ is equivalent to the minimum distance of the dual code of $C$.
\end{remark}

\begin{remark}~\cite[page $132$]{AES}\label{Remark_branch_number}
    It is important to note that the maximum value for both $\beta_{d}(M)$ and $\beta_{l}(M)$ is $n + 1$. While $\beta_{d}(M)$ and $\beta_{l}(M)$ are not always equal, a matrix with the highest possible differential or linear branch number will have the same value for both.
\end{remark}
The singleton bound tells us that $d\leq n-k+1$ for a $[n,k,d]$ code $C$. We call $C$ an MDS (maximum distance separable) code if $d=n-k+1$. In this context, we state an important theorem from coding theory.

\begin{theorem}~\cite[page 321]{FJ77}
    An $[n, k, d]$ code $C$ with generator matrix $G = [ I ~|~ M ]$, where $M$ is a $k \times ( n - k )$ matrix, is MDS if and only if every square submatrix (formed from any $i$ rows and any $i$ columns, for any $i = 1, 2 , \ldots, min \{k, n - k \}$) of $M$ is nonsingular.
\end{theorem}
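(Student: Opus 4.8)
The plan is to prove both implications by converting ``a low-weight codeword of $C$'' into ``a linear dependence among some columns of $M$ restricted to the rows supporting the information symbols''. Since $C$ is linear, $d$ equals the smallest weight of a nonzero codeword, and every codeword is of the form $c=(x~|~xM)$ for a row vector $x\in\mathbb{F}_{2^r}^k$ (so that $c\neq 0$ iff $x\neq 0$, and the first $k$ coordinates of $c$ are exactly those of $x$). By the Singleton bound $d\le n-k+1$, hence $C$ is MDS if and only if there is no nonzero codeword of weight at most $n-k$. The whole argument is then just a careful translation between these two facts.

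For the ``if'' direction, I would assume every square submatrix of $M$ is nonsingular and suppose, toward a contradiction, that some nonzero $c=(x~|~xM)$ has $w(c)\le n-k$. Let $T$ be the support of $x$ and $j=|T|\ge 1$; from $w(x)+w(xM)\le n-k$ we get $w(xM)\le (n-k)-j$, so at least $j$ of the $n-k$ coordinates of $xM$ are zero, say those in a set $S$ with $|S|=j$. Reading off the columns in $S$ yields $x_T\,M_{T,S}=\mathbf{0}$ with $x_T\neq\mathbf{0}$, where $M_{T,S}$ is the $j\times j$ submatrix of $M$ on rows $T$ and columns $S$; thus $M_{T,S}$ is singular. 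One also notes $j\le n-k$ (otherwise $w(c)\ge w(x)=j>n-k$), so $1\le j\le\min\{k,n-k\}$, contradicting the hypothesis.

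For the ``only if'' direction, I would run the same correspondence backwards: if some $j\times j$ submatrix $M_{T,S}$ with $1\le j\le\min\{k,n-k\}$ were singular, choose a nonzero left null vector of it and extend it by zeros outside $T$ to obtain $x\in\mathbb{F}_{2^r}^k$ with $w(x)\le j$ and $x\neq\mathbf{0}$; then the coordinates of $xM$ indexed by $S$ all vanish, so $w(c)=w(x)+w(xM)\le j+\big((n-k)-j\big)=n-k$. This is a nonzero codeword of weight at most $n-k$, contradicting $d=n-k+1$.

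The conceptual content is exactly the equivalence ``$C$ has a nonzero codeword of weight $\le n-k$'' $\Longleftrightarrow$ ``$M$ has a singular square submatrix'', so there is no genuine obstacle; the only point requiring care is the bookkeeping — tracking which coordinates live in the identity block versus the $M$ block, and checking the size bounds $1\le j\le\min\{k,n-k\}$ in both directions. (An alternative route is to quote the standard fact that $C$ is MDS iff every $k$ columns of $G$ are linearly independent and unwind that in terms of $M$, but the direct weight argument above is self-contained.) Since the statement is quoted verbatim from~\cite{FJ77}, I would cite that source and include the short argument above only if a self-contained treatment is wanted.
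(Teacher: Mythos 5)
Your argument is correct and complete: both directions of the weight-to-submatrix correspondence are handled properly, including the size check $1\le j\le\min\{k,n-k\}$. The paper itself gives no proof of this statement --- it is quoted as a known result from~\cite{FJ77} --- so there is nothing to compare against; your proof is the standard textbook argument for exactly this equivalence, and it would serve as a self-contained justification if one were wanted.
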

Another way to define an MDS matrix is as follows.

\begin{fact}
    A square matrix $M$ is an MDS matrix if and only if every square submatrices of $M$ are nonsingular. 
\end{fact}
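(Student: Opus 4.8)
The plan is to read this off directly from the coding-theoretic facts just recalled. By definition, a square matrix $M$ of order $n$ over $\mathbb F_{2^r}$ is an MDS matrix exactly when the linear code $C$ with generator matrix $[\,I_n \mid M\,]$ is an MDS code; by Remark~\ref{Remark_min_dis_is_branch_number} this is the same as $\beta_d(M)=n+1$, the largest value permitted by Remark~\ref{Remark_branch_number}. So it suffices to apply the preceding theorem of \cite{FJ77} to this particular code.

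The one thing to watch is that in that theorem $n$ denotes the length of the code, whereas in the Fact $n$ is the order of the matrix. Instantiate the theorem with $C$, which has length $2n$ and dimension $n$: its generator matrix is $[\,I_n\mid M\,]$, and our $n\times n$ matrix $M$ plays the role of the $n\times(2n-n)$ block $M$ of the theorem. The theorem then asserts that $C$ is MDS iff every submatrix of $M$ built from some $i$ rows and some $i$ columns is nonsingular, for each $i=1,\dots,\min\{n,\,2n-n\}=n$. Since an $n\times n$ matrix has no square submatrix of order above $n$, the range $i=1,\dots,n$ already sweeps over all square submatrices, and combining with the definition of an MDS matrix delivers the Fact. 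Beyond this bookkeeping with the two roles of $n$ there is nothing to do, and so no real obstacle; the content is entirely in the cited theorem.

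For completeness I would also give the short self-contained argument, which exhibits the equivalence directly. If some $j\times j$ submatrix $M[R,S]$ (taken from rows $R$ and columns $S$, with $|R|=|S|=j$) is singular, pick a nonzero vector $y$ in its kernel and let $x\in\mathbb F_{2^r}^n$ be $y$ padded with zeros outside $S$; then $w(x)\le j$ while $Mx$ vanishes on all rows in $R$, so $w(Mx)\le n-j$, whence $w(x)+w(Mx)\le n$ with $x\neq\mathbf 0$, i.e.\ $\beta_d(M)\le n$. Conversely, if $\beta_d(M)\le n$, choose a nonzero $x$ with $w(x)+w(Mx)\le n$, let $j=w(x)\ge 1$ and let $S$ be the support of $x$; then $Mx$ has at least $j$ zero coordinates, and taking any $j$ of those rows as $R$, the $j\times j$ block $M[R,S]$ annihilates the nonzero vector $x|_S$ and is therefore singular. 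Thus some square submatrix of $M$ is singular iff $\beta_d(M)\le n$; as $\beta_d(M)\le n+1$ always, $M$ is an MDS matrix (i.e.\ $\beta_d(M)=n+1$) iff every square submatrix of $M$ is nonsingular.
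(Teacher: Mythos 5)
Your proposal is correct and takes essentially the same route as the paper, which states this Fact without proof as an immediate consequence of the quoted theorem from \cite{FJ77} applied to the $[2n,n]$ code generated by $[\,I\mid M\,]$; your parameter bookkeeping ($\min\{k,n-k\}=n$, so all square submatrices are covered) is exactly the implicit step. The additional self-contained argument via the branch number is also correct and is a nice bonus, but it goes beyond what the paper records.
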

In this paper, we discuss the diffusion matrices with the highest branch numbers among non-MDS matrices.

\begin{definition}\cite{Li_Wang_2017}
    A matrix $M$ of order $n$ is called a near-MDS (in short NMDS) matrix if $\beta_{d}(M)=~\beta_{l}(M)= n$.
\end{definition}

In~\cite{DL95_nMDS}, an $[n, k, d]$ NMDS code $C$ is defined by the conditions $d = n-k$ and $d'= k$, where $d'$ is the minimum distance of the dual code of $C$. Thus, by Remark~\ref{Remark_min_dis_is_branch_number}, for a matrix $M$ of order $n$ with $\beta_{d}(M)=\beta_{l}(M)=n$, the matrix $[I~|~M]$ is exactly a generator matrix of a $[2n, n, n]$ NMDS code. Thus, the following characterization of an NMDS matrix is obtained.

\begin{lemma}\cite{Li_Wang_2017,Viswanath2006_nMMDS}\label{Lemma_nMDS_matrix_characterization}
    Let $M$ be a non-MDS matrix of order $n$, where $n$ is a positive integer with $n \geq 2$. Then $M$ is NMDS if and only if for any $1 \leq g \leq n-1$ each $g \times (g + 1)$ and $(g + 1) \times g$ submatrix of $M$ has at least one $g \times g$ nonsingular submatrix.
\end{lemma}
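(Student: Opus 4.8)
The plan is to characterize the two inequalities $\beta_{d}(M)\ge n$ and $\beta_{l}(M)\ge n$ separately in terms of ranks of small rectangular blocks of $M$, and then to observe that for a \emph{non-MDS} $M$ these two bounds together are equivalent to $\beta_{d}(M)=\beta_{l}(M)=n$, i.e.\ to $M$ being NMDS. For the reduction I would first record the parallel (and routine) fact that $\beta_{d}(M)=n+1$ — equivalently, by Remark~\ref{Remark_branch_number}, the maximum value — holds iff every $g\times g$ submatrix of $M$ is nonsingular, i.e.\ iff $M$ is MDS (this is the standard characterization reflected in the Fact and the coding-theory theorem quoted above), and the same for $\beta_{l}(M)$. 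Since $M$ is assumed non-MDS and $\beta_{d}(M),\beta_{l}(M)\le n+1$, this forces $\beta_{d}(M)\le n$ and $\beta_{l}(M)\le n$; hence ``$M$ is NMDS'' is equivalent to ``$\beta_{d}(M)\ge n$ and $\beta_{l}(M)\ge n$'', and only these lower bounds remain to be translated.

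Next I would treat $\beta_{d}(M)\ge n$, i.e.\ the absence of a nonzero $x$ with $w(x)+w(Mx)\le n-1$. If such an $x$ has support $S$ of size $g$, then necessarily $1\le g\le n-1$ and $Mx$ vanishes on at least $g+1$ rows; choosing a set $T$ of exactly $g+1$ of them, the $(g+1)\times g$ block $M[T,S]$ satisfies $M[T,S]\,x|_{S}=0$ with $x|_{S}\neq 0$, so its columns are linearly dependent and it contains no nonsingular $g\times g$ submatrix. Conversely, any $(g+1)\times g$ submatrix $M[T,S]$ with dependent columns has a nonzero kernel vector $u$; extending $u$ by zeros to a vector $x$ gives $Mx$ zero on $T$, hence $w(x)+w(Mx)\le g+(n-g-1)=n-1$. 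Therefore $\beta_{d}(M)\ge n$ iff every $(g+1)\times g$ submatrix of $M$ has a nonsingular $g\times g$ submatrix, for all $1\le g\le n-1$. Running the identical argument with $M^{T}$ in place of $M$ — using $\beta_{l}(M)=\beta_{d}(M^{T})$, which is immediate from the definitions, and the fact that transposition preserves nonsingularity of square submatrices — yields: $\beta_{l}(M)\ge n$ iff every $g\times(g+1)$ submatrix of $M$ has a nonsingular $g\times g$ submatrix, for all $1\le g\le n-1$. Combining these two statements with the reduction of the first paragraph gives exactly the claimed equivalence.

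The whole argument is a bookkeeping translation between the weight inequality defining the branch number and rank conditions on submatrices, so I do not anticipate a genuine obstacle. The one place that needs a moment of care is the converse in the $\beta_{d}$ step: the kernel vector $u$ of a column-rank-deficient $(g+1)\times g$ block need not be supported on all of $S$, so the resulting $x$ may have weight $g'<g$ — but then $w(x)+w(Mx)\le g'+(n-g-1)\le n-1$ still holds, so nothing breaks. One should also note that $g=n$ is correctly excluded from the range, since a vector $x$ with $w(x)=n$ has $w(x)+w(Mx)\ge n$ automatically and imposes no constraint.
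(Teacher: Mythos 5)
Your proof is correct. Note, however, that the paper does not actually prove this lemma: it is imported from \cite{Li_Wang_2017,Viswanath2006_nMMDS}, and the surrounding text only motivates it by observing (via Remark~\ref{Remark_min_dis_is_branch_number}) that $\beta_d(M)=\beta_l(M)=n$ says exactly that $[I\,|\,M]$ generates a $[2n,n,n]$ NMDS code in the sense of~\cite{DL95_nMDS}, after which the submatrix criterion is quoted as a known characterization of such codes. Your argument replaces that citation with a self-contained, first-principles derivation: you reduce ``NMDS'' to the two lower bounds $\beta_d(M)\ge n$ and $\beta_l(M)\ge n$ using non-MDS-ness to cap both branch numbers at $n$, and then translate each lower bound into the rank condition on $(g+1)\times g$ (resp.\ $g\times(g+1)$) blocks via the correspondence between a low-weight witness $x$ and a kernel vector of a column-rank-deficient block. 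The two directions of that translation are handled correctly, including the two genuinely delicate points: the kernel vector need not have full support in $S$ (the weight bound $g'+(n-g-1)\le n-1$ still closes the argument), and the case $w(x)=n$ imposes no constraint, which is why $g$ ranges only up to $n-1$. The one step you use silently is that a $(g+1)\times g$ matrix contains a nonsingular $g\times g$ submatrix iff it has full column rank $g$; this is standard linear algebra and is exactly what links ``columns linearly dependent'' to ``no nonsingular $g\times g$ submatrix,'' but it deserves an explicit sentence if this proof were to be written out. Compared with the paper's route, your version buys independence from the NMDS-code literature at the cost of a page of bookkeeping; the paper's route buys brevity and situates the lemma in its coding-theoretic context.
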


In Lemma~\ref{Lemma_nMDS_matrix_characterization}, if we assume $g=1$, we can deduce that there is at most one 0 in each row and each column of an NMDS matrix. Hence, we have the following corollary.

\begin{corollary}\label{corollary_min_nonzero_nMDS}
    An NMDS matrix  $M$ of order $n$ must contain at least $n^2-n$ nonzero elements.
\end{corollary}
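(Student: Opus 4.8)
The plan is to read off the $g=1$ case of Lemma~\ref{Lemma_nMDS_matrix_characterization} and then run an elementary counting argument on the zero entries of $M$. First I would fix an arbitrary row of $M$ and any two distinct column indices; the $1\times 2$ submatrix cut out by this row and these two columns is, in the notation of the lemma, a $g\times(g+1)$ submatrix with $g=1$, so it must contain a nonsingular $1\times 1$ submatrix, i.e.\ a nonzero entry. Since the two columns were arbitrary, no row of $M$ can have two zero entries; that is, every row contains at most one zero. Applying the same reasoning to the $(g+1)\times g = 2\times 1$ submatrices gives that every column of $M$ also contains at most one zero (the row statement already suffices for the count, but this symmetric fact is worth recording).

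Next I would simply total the zeros: with at most one zero per row and $n$ rows, $M$ has at most $n$ zero entries among its $n^2$ entries, hence at least $n^2-n$ nonzero entries, which is the claim. I would also note in passing that the statement is consistent with the MDS case, where $M$ has no zero entries at all and the bound $n^2-n$ is met with room to spare; the content of the corollary is precisely that an NMDS matrix is allowed at most $n$ zeros, arranged so that no row or column carries more than one.

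I do not anticipate a genuine obstacle here: the entire argument is a one-line specialization of Lemma~\ref{Lemma_nMDS_matrix_characterization} followed by a pigeonhole count. The only point requiring a word of care is making explicit the identification ``$1\times 1$ nonsingular submatrix $\iff$ the entry is nonzero,'' and being clear that the lemma's hypothesis ranges over \emph{all} $1\times 2$ (resp.\ $2\times 1$) submatrices, so that the ``at most one zero'' conclusion applies uniformly to every row (resp.\ column) rather than to just one of them.
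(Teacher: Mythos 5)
Your argument is exactly the paper's: the text preceding the corollary derives it by setting $g=1$ in Lemma~\ref{Lemma_nMDS_matrix_characterization} to conclude that each row and column has at most one zero, and then counts. Your write-up just makes the same specialization and pigeonhole count explicit, so it is correct and takes essentially the same approach.
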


\begin{remark}\label{Remark_submatrix_NMDS_is_MDS}
    We know that every square submatrices of an MDS matrix are MDS. However, a square submatrix of an NMDS matrix may not be an NMDS matrix. For example, consider the matrix 
    \begin{equation*}
        M=\begin{bmatrix}
            0 & \alpha & 1 & \alpha + 1 \\
            \alpha + 1 & 0 & \alpha & 1 \\
            1 & \alpha + 1 & 0 & \alpha \\
            \alpha & 1 & \alpha + 1 & 0
        \end{bmatrix}
    \end{equation*}
    over $\FF_{2^4}/0$x$13$, where $\alpha$ is a root of $x^4+x+1$. Then it can be checked that $M$ is an NMDS matrix. However, the $2\times 2$ submatrix 
    $\begin{bmatrix}
    1 & \alpha + 1 \\
    \alpha & 1
    \end{bmatrix}$
    of $M$ is an MDS matrix.
\end{remark}

\begin{corollary}\label{transpose_is_near-MDS}
    If $A$ is an NMDS matrix, then $A^{T}$ is also an NMDS matrix.
\end{corollary}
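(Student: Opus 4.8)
The plan is to read the result directly off the definitions of the two branch numbers, since transposition simply interchanges the roles of $\beta_d$ and $\beta_l$. Concretely, I would observe that
\[
\beta_d(A^T) = \min_{x \neq \mathbf{0}}\bigl(w(x) + w(A^T x)\bigr)
\]
is literally the expression that defines $\beta_l(A)$, while
\[
\beta_l(A^T) = \min_{x \neq \mathbf{0}}\bigl(w(x) + w((A^T)^T x)\bigr) = \min_{x \neq \mathbf{0}}\bigl(w(x) + w(A x)\bigr)
\]
is the expression that defines $\beta_d(A)$. Hence $\beta_d(A^T) = \beta_l(A)$ and $\beta_l(A^T) = \beta_d(A)$. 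If $A$ is NMDS, so that $\beta_d(A) = \beta_l(A) = n$, then both branch numbers of $A^T$ equal $n$, i.e. $A^T$ is NMDS.

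As an alternative route one could argue via the submatrix characterization of Lemma~\ref{Lemma_nMDS_matrix_characterization}: transposing $A$ turns each $g \times (g+1)$ submatrix of $A$ into a $(g+1) \times g$ submatrix of $A^T$ and conversely, while a square submatrix is nonsingular if and only if its transpose is; since the defining condition is symmetric in the two shapes of submatrices (and $A^T$ is non-MDS because $A$ is), it transfers to $A^T$. Either way, there is no real obstacle here — the only care needed is to match the quantified $\min$-expressions to the correct branch-number definition and to use $(A^T)^T = A$.
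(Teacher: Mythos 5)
Your proof is correct: the paper states this corollary without an explicit proof, and your observation that transposition swaps $\beta_d$ and $\beta_l$ (so both equal $n$ for $A^T$ whenever both equal $n$ for $A$) is exactly the intended justification, with the submatrix characterization of Lemma~\ref{Lemma_nMDS_matrix_characterization} serving as an equally valid alternative. No gaps here.
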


\begin{remark}\label{Remark_NMDS_is_singular}
    We know that an MDS matrix cannot be singular, whereas an NMDS matrix can be singular. For example, the NMDS matrix $M$ in Remark~\ref{Remark_submatrix_NMDS_is_MDS} is singular.
\end{remark}

\begin{lemma}
    For a nonsingular NMDS matrix $M$, its inverse $M^{-1}$ is also an NMDS matrix.
\end{lemma}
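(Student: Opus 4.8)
The plan is to bypass the submatrix characterisation of Lemma~\ref{Lemma_nMDS_matrix_characterization} and argue directly with branch numbers, exploiting the fact that $x \mapsto Mx$ is a bijection of $\mathbb{F}_{2^r}^{n}$ onto itself whenever $M$ is nonsingular.

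First I would record the trivial identity $\beta_{l}(A) = \beta_{d}(A^{T})$, valid for every square matrix $A$, since by the two definitions both sides equal $\min_{x \neq \mathbf{0}}\bigl(w(x)+w(A^{T}x)\bigr)$. Next comes the only real observation: put $y = Mx$. As $M$ is invertible, $x \mapsto y$ is a bijection on $\mathbb{F}_{2^r}^{n}$ carrying nonzero vectors to nonzero vectors, with $x = M^{-1}y$; hence
\[
\beta_{d}\bigl(M^{-1}\bigr) = \min_{y \neq \mathbf{0}}\bigl(w(y) + w(M^{-1}y)\bigr) = \min_{x \neq \mathbf{0}}\bigl(w(Mx) + w(x)\bigr) = \beta_{d}(M) = n .
\]
Applying the same change of variable to $M^{T}$, which is also nonsingular with $\bigl(M^{T}\bigr)^{-1} = \bigl(M^{-1}\bigr)^{T}$, together with the identity above yields
\[
\beta_{l}\bigl(M^{-1}\bigr) = \beta_{d}\bigl((M^{-1})^{T}\bigr) = \beta_{d}\bigl((M^{T})^{-1}\bigr) = \beta_{d}\bigl(M^{T}\bigr) = \beta_{l}(M) = n .
\]
Thus $\beta_{d}(M^{-1}) = \beta_{l}(M^{-1}) = n$, so $M^{-1}$ is NMDS (and automatically non-MDS, since the maximal branch number is $n+1$ by Remark~\ref{Remark_branch_number}).

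I do not expect any genuine obstacle here: the entire content is the substitution $y = Mx$, which turns the defining minimum for $M^{-1}$ into the one for $M$, together with the transpose identity that handles the linear branch number. For completeness I would also mention the coding-theoretic viewpoint: row-reducing $[\,I \mid M\,]$ to $[\,M^{-1} \mid I\,]$ and swapping the two $n$-column blocks gives $[\,I \mid M^{-1}\,]$, so these generator matrices produce permutation-equivalent codes with identical weight enumerators and identical dual weight enumerators; hence one is a $[2n,n,n]$ NMDS code precisely when the other is. I would nevertheless keep the branch-number argument as the actual proof, as it is shorter and entirely self-contained.
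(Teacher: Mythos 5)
Your proof is correct, but it takes a genuinely different route from the paper. The paper argues through coding theory: it views $[\,I \mid M\,]$ as a generator matrix of a $[2n,n,n]$ NMDS code, row-reduces it to $[\,M^{-1} \mid I\,]$, and concludes that the code generated by $[\,I \mid M^{-1}\,]$ has the same minimum distance, hence $M^{-1}$ is NMDS. (Strictly speaking the paper's last step also involves swapping the two $n$-column blocks, which is a coordinate permutation preserving both the minimum distance and the dual distance --- exactly the point you make explicit in your closing aside, which is essentially the paper's proof.) Your primary argument instead works directly with the definitions of $\beta_d$ and $\beta_l$: the substitution $y = Mx$ gives $\beta_d(M^{-1}) = \beta_d(M)$ for any nonsingular $M$, and the identity $\beta_l(A) = \beta_d(A^T)$ combined with $(M^T)^{-1} = (M^{-1})^T$ handles the linear branch number. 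This is more elementary and self-contained --- it needs nothing beyond the two branch-number definitions and in fact proves the stronger statement that \emph{both} branch numbers of any nonsingular matrix are invariant under inversion, not just in the NMDS case. The paper's route is shorter on the page but leans on the identification of NMDS matrices with $[2n,n,n]$ NMDS codes via Remark~\ref{Remark_min_dis_is_branch_number}, and it handles the dual distance (i.e.\ $\beta_l$) only implicitly. Your remark that $M^{-1}$ is automatically non-MDS because the maximal branch number is $n+1$ is a nice touch of completeness that the paper omits.
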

\begin{proof}
    Let $G=[I~|~M]$ be a generator matrix of an NMDS code. Elementary row operation change $G=[I~|~M]$ to $G'=[M^{-1}~|~I]$. Since elementary row operations do not change the code, $G'$ is also a generator matrix of the NMDS code. So the code defined by $[I~|~M^{-1}]$ has the same minimal distance. Therefore, $M^{-1}$ is an NMDS matrix.
\end{proof}

\begin{definition}\label{Deef_Recursive_NMDS}
    Let $k$ be a positive integer. A matrix $B$ is said to be recursive NMDS or $k$-NMDS if the matrix $M=B^k$ is NMDS. If $B$ is $k$-NMDS then we say $B$ yields an NMDS matrix.
\end{definition}

\begin{example}\label{Example_22-MDS}
    For example, the matrix	
    $$B=
    \left[ \begin{array}{rrrr}
    0 & 1 & 0 & 0 \\
    0 & 0 & 1 & 0 \\
    0 & 0 & 0 & 1 \\
    1 & \alpha & 0 & 0
    \end{array} \right]
    $$
    is $10$-NMDS, where $\alpha$ is a primitive element of the field $\FF_{2^4}$ and a root of $x^4+x+1$.
\end{example}

\begin{definition}
    A matrix $D$ of order $n$ is said to be diagonal if $(D)_{i,j}=0$ for $i\neq j.$ By setting $d_i=(D)_{i,i}$, we denote the diagonal matrix $D$ as $diag(d_1, d_2,\ldots ,d_n)$.
\end{definition}

It is obvious to see that the determinant of $D$ is $det(D) =\prod_{i=1}^{n}d_i$. Hence, the diagonal matrix $D$ is nonsingular if and only if $d_i\neq 0$ for $1\leq i \leq n$.

\noindent It is worth exploring whether the NMDS property of a matrix remains invariant under the elementary row operation of multiplying a row or column by a nonzero scalar. Therefore, we have the following lemma.

\begin{lemma}\label{lemma_D1AD2_nMDS}
	Let $M$ be an NMDS matrix over $\mathbb{F}_{2^r}$, then $M'$, obtained by multiplying a row (column) of $M$ by any $c \in \mathbb{F}_{2^r}^*$ will also be an NMDS.
\end{lemma}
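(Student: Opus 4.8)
The plan is to reduce everything to the submatrix characterization of NMDS matrices given in Lemma~\ref{Lemma_nMDS_matrix_characterization}, together with the elementary fact that multiplying a single row or column of a matrix by a nonzero scalar $c$ scales the determinant of any square submatrix that meets that row or column by $c$, and leaves every other square submatrix unchanged. Since $c \in \mathbb{F}_{2^r}^*$ is invertible, this operation preserves the singularity/nonsingularity of \emph{every} square submatrix of $M$. I would first record this observation precisely: if $M'$ is obtained from $M$ by multiplying, say, row $i_0$ by $c$, then for any index sets $I,J$ of equal size, $\det(M'_{I,J}) = c\cdot\det(M_{I,J})$ if $i_0 \in I$ and $\det(M'_{I,J}) = \det(M_{I,J})$ otherwise; in both cases $\det(M'_{I,J}) = 0 \iff \det(M_{I,J}) = 0$. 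The column case is symmetric (or follows from the row case applied to $M^T$ via Corollary~\ref{transpose_is_near-MDS}).

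Next I would verify the two conditions that make $M'$ NMDS. First, $M'$ is non-MDS: if $M'$ were MDS then every square submatrix of $M'$ would be nonsingular, hence by the observation every square submatrix of $M$ would be nonsingular, making $M$ MDS — contradicting that $M$ is NMDS (which by definition requires $\beta_d(M)=\beta_l(M)=n$, i.e. $M$ is non-MDS). Second, the submatrix condition of Lemma~\ref{Lemma_nMDS_matrix_characterization} carries over: fix $1\le g\le n-1$ and an arbitrary $g\times(g+1)$ (or $(g+1)\times g$) submatrix $N'$ of $M'$, sitting in rows $I$ and columns $J$; the corresponding submatrix $N$ of $M$ has, by hypothesis, some $g\times g$ nonsingular submatrix, and by the observation the analogously-placed $g\times g$ submatrix of $N'$ has determinant equal to that of $N$'s submatrix up to the nonzero factor $c$, hence is also nonsingular. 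Therefore $M'$ satisfies the hypothesis of Lemma~\ref{Lemma_nMDS_matrix_characterization} and is NMDS.

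There is no real obstacle here; the only point requiring a moment's care is making sure the statement is invoked correctly — Lemma~\ref{Lemma_nMDS_matrix_characterization} presupposes $M'$ is non-MDS, so the non-MDS check must be done first, and one should also note $n\ge 2$ is inherited from $M$. Alternatively, and perhaps more cleanly, one can argue via Remark~\ref{Remark_min_dis_is_branch_number}: multiplying a column of $M$ by $c$ corresponds to scaling one coordinate of the code $C$ generated by $[I\,|\,M]$, which is a monomial (diagonal) equivalence of codes and hence preserves both the minimum distance of $C$ and of its dual; multiplying a row of $M$ is an elementary row operation on the generator matrix restricted to the $M$-block, which after re-normalizing can be seen to give a monomially-equivalent code as well. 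Either route gives the result, but I would present the submatrix argument as the main proof since it is self-contained and matches the combinatorial flavor of the surrounding lemmas.
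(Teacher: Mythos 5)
Your proposal is correct and follows essentially the same route as the paper's own proof: both reduce to the submatrix characterization in Lemma~\ref{Lemma_nMDS_matrix_characterization} and observe that scaling a row or column by $c\in\mathbb{F}_{2^r}^*$ multiplies the determinant of an affected $g\times g$ submatrix by $c$ and leaves the others unchanged, preserving nonsingularity. Your explicit verification that $M'$ remains non-MDS is a small point of extra care that the paper's proof leaves implicit, but it does not change the argument.
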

\begin{proof}
	Take $B'$ be an arbitrary $g\times (g+1)$  ($(g+1)\times g$) submatrix of $M'$. Suppose $B$ is the corresponding submatrix of $M$. Since $M$ is NMDS matrix, $B$ must have a nonsingular $g\times g$ submatrix $I$. Let $I'$ be the corresponding submatrix of $B'$. If the submatrix $I'$ contains the row (column) in which $c$ has multiplied, then $det(I')=c\cdot det(I)$ otherwise $det(I')=det(I)$. Thus, $B'$ contains a nonsingular $g\times g$ submatrix $I'$. Therefore, by Lemma~\ref{Lemma_nMDS_matrix_characterization}, $M'$ is also an NMDS matrix.\qed
\end{proof}
Let $D=diag(c_1,c_2,\ldots ,c_{n})$ be a diagonal matrix. Then by the multiplication $DM$ ($MD$) it means multiply the $i$-th row ($i$-th column) of $M$ by $c_i$ for $1\leq i \leq n.$ Hence, we can generalize the Lemma~\ref{lemma_D1AD2_nMDS} as follows.

\begin{corollary}\label{corollary_D1AD2_nMDS}
	Let $M$ be an NMDS matrix, then for any nonsingular diagonal matrices $D_1$ and $D_2$, $D_1MD_2$ will also be an NMDS matrix.
\end{corollary}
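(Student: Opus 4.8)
The plan is to reduce the statement to a finite chain of applications of Lemma~\ref{lemma_D1AD2_nMDS}. First I would record that, since $D_1$ and $D_2$ are nonsingular diagonal matrices, all of their diagonal entries lie in $\mathbb{F}_{2^r}^*$; write $D_1=diag(c_1,c_2,\ldots,c_n)$ and $D_2=diag(e_1,e_2,\ldots,e_n)$ with every $c_i\neq 0$ and every $e_j\neq 0$. This observation is exactly where the nonsingularity hypothesis is used: it guarantees that each scalar we are about to multiply a row or column by is a legitimate element of $\mathbb{F}_{2^r}^*$, which is precisely what Lemma~\ref{lemma_D1AD2_nMDS} requires.

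Next I would factor each diagonal matrix into ``single-coordinate'' diagonal matrices. Let $R_i=diag(1,\ldots,1,c_i,1,\ldots,1)$ (the entry $c_i$ in position $i$, ones elsewhere) and $C_j=diag(1,\ldots,1,e_j,1,\ldots,1)$. Then $D_1=R_1R_2\cdots R_n$ and $D_2=C_1C_2\cdots C_n$, and left-multiplication by $R_i$ is exactly the operation ``multiply row $i$ by $c_i$'', while right-multiplication by $C_j$ is exactly ``multiply column $j$ by $e_j$''. Hence
$$D_1MD_2 = R_1\cdots R_n\, M\, C_1\cdots C_n .$$
Now I would peel off these factors one at a time: by Lemma~\ref{lemma_D1AD2_nMDS}, $MC_1$ is NMDS; applying the lemma again, $MC_1C_2$ is NMDS; after $n$ steps $MC_1\cdots C_n = MD_2$ is NMDS. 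Applying the lemma $n$ more times on the left, $R_nMD_2$, then $R_{n-1}R_nMD_2$, and finally $R_1\cdots R_nMD_2 = D_1MD_2$ is NMDS. (Equivalently one can phrase this as an induction on the number of nontrivial factors.)

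There is essentially no genuine obstacle here: the entire content is already packaged in Lemma~\ref{lemma_D1AD2_nMDS}, and the corollary is just its iteration. The one point worth a sentence of care is that the induction closes only because at every intermediate stage the matrix obtained is again a bona fide NMDS matrix (in particular still non-MDS, as is implicit in the hypothesis of Lemma~\ref{Lemma_nMDS_matrix_characterization} on which Lemma~\ref{lemma_D1AD2_nMDS} rests); but this is exactly what the conclusion of Lemma~\ref{lemma_D1AD2_nMDS} asserts, so the hypothesis needed for the next step is automatically available. Thus the argument is clean and requires no new computation.
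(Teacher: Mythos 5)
Your proposal is correct and matches the paper's argument: the paper likewise observes that left (right) multiplication by a nonsingular diagonal matrix amounts to scaling each row (column) by a nonzero scalar and then iterates Lemma~\ref{lemma_D1AD2_nMDS}. Your explicit factorization into single-coordinate diagonal matrices just makes that iteration more formal; no new idea is involved.
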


\begin{corollary}\label{corollary_DAD^(-1)_MDS}
	Let $B$ be a recursive NMDS matrix, then for any nonsingular diagonal matrix $D$, $DBD^{-1}$ will also be a recursive NMDS matrix.
\end{corollary}
\begin{proof}
	Suppose $D$ is a nonsingular diagonal matrix and $B$ is $k$-NMDS i.e. $B^k$ is an NMDS matrix. Then we have 
	\begin{align*}
		(DBD^{-1})^k &= \underbrace{DBD^{-1}\cdot DBD^{-1}\cdot~ \ldots ~\cdot DBD^{-1}}_{k-\text{times}}= DB^kD^{-1}
	\end{align*}
Now since $D$ is a nonsingular diagonal matrix and $B^k$ is an NMDS matrix, from Corollary~\ref{corollary_D1AD2_nMDS}, we can say that $DB^kD^{-1}$ is again an NMDS matrix. Hence, $DBD^{-1}$ is a recursive NMDS matrix. More specifically $DBD^{-1}$ is $k$-NMDS. \qed
\end{proof}

\begin{definition}
	Let $\rho$ be an element of the symmetric group $S_n$ (set of all permutations over the set $\set{1,2,\ldots ,n}$~). Then by $\rho=~ [i_1,~i_2,~i_3,~\ldots~,i_n]$, where $1\leq i_j\leq n$ for $j= 1,2,3,\ldots,n$, we mean $\rho=\begin{pmatrix}
	1 & 2 & 3 & \ldots  & n \\
	i_1 & i_2 & i_3 & \ldots &  i_n 
	\end{pmatrix}$
	i.e. $1\rightarrow i_1$, $2 \rightarrow i_2$, $\ldots$, $n\rightarrow i_n$.
\end{definition}
Then the product of two permutations $\rho_1=~ [i_1,~i_2,~i_3,~\ldots~,i_n]$ and $\rho_2=~ [\splitatcommas{j_1,~j_2,~j_3,~\ldots~,j_n}]$ is given by
	$\rho_1\cdot \rho_2=[i_{j_1},~i_{j_2},~i_{j_3},~\ldots~,i_{j_n}]$
and the inverse of a permutation $\rho=~ [i_1,~i_2,~i_3,~\ldots~,i_n]$ is the permutation $\delta=~ [\splitatcommas{j_1,~j_2,~j_3,~\ldots~,j_n}]$ such that $\rho\cdot \delta=\delta\cdot \rho=~[1,2,3,~\ldots~,n]$.

\begin{example}
	For the two permutations $\rho_1=[2,3,4,5,1,6]$ and $\rho_2=[1,4,3,2,6,5]$ over $S_6$, their product is given by
		$$\rho_1\cdot \rho_2=[2,5,4,3,6,1]~\text{and}~\rho_2\cdot \rho_1=[4,3,2,6,1,5].$$
	The inverse of the permutation $\rho_1=[2,3,4,5,1,6]$ is given by $\delta=[5, 1, 2, 3, 4, 6]$.
\end{example}

\begin{definition}
	A permutation matrix $P$ of order $n$ related to a permutation $\rho=~ [\splitatcommas{i_1,~i_2,~i_3,~\ldots,~i_n}]$ is the binary matrix which is obtained from the identity matrix of order $n$  by permuting the rows (columns) according to the permutation $\rho$.
\end{definition}

\noindent In this paper, we will use the row permuted identity matrix to represent permutation matrices. For instance, the permutation matrix $P$ related to the permutation $[4,2,3,1]$ is given by
$$P=~
\begin{bmatrix}
0&~0&~0&~1\\
0&~1&~0&~0\\
0&~0&~1&~0\\
1&~0&~0&~0
\end{bmatrix}.
$$ 
Note that a permutation matrix is invertible and the inverse of $P$ is the transpose of $P$, i.e. $P^{-1}=P^T$. The product of two permutation matrices is a permutation matrix.

\begin{definition}
	Let $\rho$ be an element of the symmetric group $S_n$. Then $\rho$ is called a $k$ length cycle or $k$-cycle, written $(j_1~j_2~j_3~\ldots~j_k)$, if $\rho=\begin{pmatrix}
	j_1 & j_2 & j_3 & \ldots  & j_k \\
	j_2 & j_3 & j_4 & \ldots &  j_1 
	\end{pmatrix}$
	i.e. $j_1\rightarrow j_2$, $j_2 \rightarrow j_3$, $\ldots$, $j_k\rightarrow j_1$.
\end{definition}

For example, the permutation $\rho= [3,2,4,1,5]$ can be written as $(1~3~4)$. So $\rho$ is a 3-cycle in $S_5$.

\noindent The branch number of a matrix is invariant under row (column) permutation. So, from~\cite{MDS_Survey}, we get the following result.

\begin{lemma}\cite{MDS_Survey}\label{Lemma_same_branch_perm}
	For any permutation matrices $P$ and $Q$, the branch numbers of the two matrices $M$ and $PMQ$ are same.
\end{lemma}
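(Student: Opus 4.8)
The plan is to reduce everything to two elementary observations about a permutation matrix $P$ of order $n$: (i) the map $v \mapsto Pv$ merely rearranges the coordinates of $v$, hence $w(Pv) = w(v)$ for every $v \in \mathbb{F}_{2^r}^n$; and (ii) $P$ is invertible with $P^{-1} = P^{T}$ again a permutation matrix, so $v \mapsto Pv$ is a weight-preserving bijection of $\mathbb{F}_{2^r}^n$ that fixes $\mathbf{0}$. Both facts were already noted in the preceding paragraphs of the paper.

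First I would treat the differential branch number. By definition, $\beta_{d}(PMQ) = \min_{x \neq \mathbf{0}}\left(w(x) + w(PMQx)\right)$. In this minimum perform the substitution $y = Qx$, so that $x = Q^{-1}y$. By (ii), as $x$ ranges over the nonzero vectors of $\mathbb{F}_{2^r}^n$ so does $y$; by (i), $w(x) = w(Q^{-1}y) = w(y)$ and $w(PMQx) = w(PMy) = w(My)$. Hence $\beta_{d}(PMQ) = \min_{y \neq \mathbf{0}}\left(w(y) + w(My)\right) = \beta_{d}(M)$.

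Next, for the linear branch number I would use $(PMQ)^{T} = Q^{T}M^{T}P^{T}$, where $Q^{T}$ and $P^{T}$ are themselves permutation matrices. Thus $\beta_{l}(PMQ) = \min_{x \neq \mathbf{0}}\left(w(x) + w(Q^{T}M^{T}P^{T}x)\right)$, and substituting $y = P^{T}x$ and arguing exactly as before (using $w(x) = w(Py) = w(y)$ and $w(Q^{T}M^{T}P^{T}x) = w(Q^{T}M^{T}y) = w(M^{T}y)$) yields $\beta_{l}(PMQ) = \min_{y \neq \mathbf{0}}\left(w(y) + w(M^{T}y)\right) = \beta_{l}(M)$. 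Together with the previous paragraph this proves the claim.

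There is no genuine obstacle in this argument; the only point requiring care is that the change of variables is legitimate precisely because $P$ and $Q$ are invertible, so nonzero vectors are sent to nonzero vectors and the index set of the minimum is left unchanged. (One could instead invoke Remark~\ref{Remark_min_dis_is_branch_number}: passing from $M$ to $PMQ$ amounts, at the level of the generator matrix $[I~|~M]$, to a sequence of elementary row operations together with a permutation of the columns within each of the two blocks, which produces a permutation-equivalent code with the same minimum distance and the same dual minimum distance; but the direct computation above is shorter.)
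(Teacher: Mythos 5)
Your proof is correct: the change of variables $y=Qx$ (resp.\ $y=P^{T}x$) is legitimate because permutation matrices are weight-preserving bijections of $\mathbb{F}_{2^r}^n$ fixing $\mathbf{0}$, and the identity $(PMQ)^{T}=Q^{T}M^{T}P^{T}$ correctly reduces the linear branch number to the same computation. The paper itself states this lemma without proof, citing the survey and offering only the one-line justification that the branch number is invariant under row and column permutations; your direct substitution argument is exactly the standard way to make that remark precise, so there is nothing substantive to compare against and no gap to report.
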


\begin{corollary}\label{corollary_PAP^(-1)_MDS}
	Let $B$ be a recursive NMDS matrix, then for any permutation matrix $P$, $PBP^{-1}$ will also be a recursive NMDS matrix.
\end{corollary}

\begin{definition}\label{Def_perm_diag_similar}
	We will call a matrix $M_1$ to be diagonal (permutation) similar to a matrix $M_2$ if $M_1=DM_2D^{-1}$ ($M_1=PM_2P^{-1}$) for some nonsingular diagonal matrix $D$ (permutation matrix $P$).
\end{definition}

\begin{fact}\label{Fact_perm_similar_mds_nmds}
	Diagonal (permutation) similar of a $k$-NMDS matrix is again a $k$-NMDS matrix.
\end{fact}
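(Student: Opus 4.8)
The plan is to observe that this \textbf{Fact} is an immediate consequence of Corollaries~\ref{corollary_DAD^(-1)_MDS} and~\ref{corollary_PAP^(-1)_MDS}, which have already been established, together with Definition~\ref{Def_perm_diag_similar}. First I would recall that, by Definition~\ref{Def_perm_diag_similar}, saying $M_1$ is diagonal similar to $M_2$ means $M_1 = DM_2D^{-1}$ for some nonsingular diagonal matrix $D$, and saying $M_1$ is permutation similar to $M_2$ means $M_1 = PM_2P^{-1}$ for some permutation matrix $P$. So it suffices to treat these two cases separately.

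For the diagonal case, suppose $M_2$ is $k$-NMDS, i.e.\ $M_2^k$ is NMDS. The key algebraic step is the telescoping identity $(DM_2D^{-1})^k = DM_2^kD^{-1}$, which holds because all the intermediate factors $D^{-1}D$ cancel. Since $D$ is a nonsingular diagonal matrix and $M_2^k$ is NMDS, Corollary~\ref{corollary_D1AD2_nMDS} gives that $DM_2^kD^{-1}$ is NMDS, and hence $M_1 = DM_2D^{-1}$ is $k$-NMDS. This is precisely the argument already carried out in the proof of Corollary~\ref{corollary_DAD^(-1)_MDS}.

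For the permutation case, the same telescoping argument (using $P^{-1}P = I$) gives $(PM_2P^{-1})^k = PM_2^kP^{-1}$. By Lemma~\ref{Lemma_same_branch_perm}, the branch numbers $\beta_d$ and $\beta_l$ are unchanged under pre- and post-multiplication by permutation matrices, so $PM_2^kP^{-1}$ has the same branch numbers as $M_2^k$; in particular it is NMDS, and thus $M_1$ is $k$-NMDS, which is exactly the content of Corollary~\ref{corollary_PAP^(-1)_MDS}. I do not anticipate any genuine obstacle here: the \textbf{Fact} merely repackages Corollaries~\ref{corollary_DAD^(-1)_MDS} and~\ref{corollary_PAP^(-1)_MDS} into a single statement phrased in the language of Definition~\ref{Def_perm_diag_similar}. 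The only point requiring a little care is to note that the exponent $k$ is preserved (both corollaries yield not just recursive-NMDS-ness but the same value of $k$), which is immediate from the telescoping identities above.
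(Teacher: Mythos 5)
Your proposal is correct and matches the paper's intent exactly: the Fact is stated as an immediate consequence of Corollaries~\ref{corollary_DAD^{}(-1)_MDS} and~\ref{corollary_PAP^{}(-1)_MDS} (the paper gives no separate proof), and your telescoping identities $(DM_2D^{-1})^k = DM_2^kD^{-1}$ and $(PM_2P^{-1})^k = PM_2^kP^{-1}$ are precisely the arguments underlying those corollaries, including the observation that the exponent $k$ is preserved.
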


\subsubsection{XOR count:}
The efficiency of hardware implementation in a given operation is typically assessed by the amount of area required. It is worth noting that the diffusion matrix can only be implemented using XOR gates, which leads to the following definition.

\begin{definition}~\cite{Lukas_2019}
    The direct XOR count (d-XOR count) of a matrix $M \in GL(r,\mathbb{F}_{2})$, denoted by d-XOR($M$), is determined by \[\text{d-XOR}(M) = wt(M)-r,\]
	where $wt(M)$ denotes the number of ones in the matrix $M$.
\end{definition}

\begin{definition}~\cite{Lukas_2019}
    The sequential XOR count (s-XOR count) of a matrix $M \in GL(r,\mathbb{F}_{2})$, denoted by s-XOR($M$), is equal to the smallest non-negative integer $t$ such that $M$ can be represented as
    \begin{equation*}
        M=P\prod _{k=1}^{t}{(I+E_{i_k,j_k})},
    \end{equation*}
    where $P$ is a permutation matrix and $E_{i_k,j_k}$, with $i_k\neq j_k$ for all $k$, is a binary matrix with 1 as $(i_k,j_k)$-th entry and 0 elsewhere.
\end{definition}

\noindent When a basis of $\mathbb{F}_{2^r}$ is given, the multiplication of $\alpha \in \mathbb{F}_{2^r}$ given by $x\mapsto \alpha x$ can be expressed as the multiplication of a matrix in $GL(r,\mathbb{F}_{2})$. The matrix depends not only on $\alpha$, but also on the choice of basis of $\mathbb{F}_{2^r}$ over $\mathbb{F}_{2}$. Let $M_{\alpha,B}$ be the matrix representation of the mapping $x\mapsto \alpha x$ with respect to the basis $B$.


\begin{definition}\cite{Lukas_2019}
	Let $\alpha \in \mathbb{F}_{2^r}$. Then the d-XOR count and s-XOR count of $\alpha$, denoted by d-XOR($\alpha$) and s-XOR($\alpha$) respectively, is given by
	\[\text{d-XOR(}\alpha)=~\min_{B} \text{d-XOR}(M_{\alpha,B})~~\text{and}~~\text{s-XOR(}\alpha)=~\min_{B} \text{s-XOR}(M_{\alpha,B}),\]
	where the minimum is taken over all bases of $\mathbb{F}_{2^r}$ over $\mathbb{F}_{2}$.
\end{definition}

The d-XOR count (s-XOR count) of $M_{\alpha,B}$ generally differs from the d-XOR count (s-XOR count) of $M_{\alpha,B'}$ for different bases $B$ and $B'$.
For more details about the two XOR count metrics, see~\cite{Lukas_2019} and the related references mentioned therein.

We denote the XOR count of $\alpha \in \mathbb{F}_{2^r}$ as XOR($\alpha$), which can either be the d-XOR count or the s-XOR count, unless specified otherwise.

The cost of implementing a diffusion matrix can be determined by adding up the XOR counts of each entry in the matrix. If a row has $k_i$ nonzero elements from the field $\mathbb{F}_{2^r}$, these $k_i$ elements must be combined, which incurs a fixed XOR cost of $(k_i-1)r$. Therefore, if an $n$ order matrix has $k_1, k_2, \ldots, k_n$ nonzero elements in its $n$ rows, the matrix incurs a fixed XOR cost of $\sum_{i=1}^{n}(k_i-1)r$ in $\FF_{2^r}$.

The sum, $\mathcal{K}=\sum_{i=1}^{n}(k_i-1)$, is referred to as the fixed XOR of the matrix in this paper. For an MDS matrix of order $n$, its fixed XOR is $\mathcal{K}=n(n-1)$. The XOR count of an $n$ order matrix $M$, denoted by $XOR(M)$, over the field $\mathbb{F}_{2^r}$ is calculated as $\sum_{i,j=1}^{n} {XOR((M)_{i,j})}+\mathcal{K}\cdot r$, where $XOR((M)_{i,j})$ is the XOR count of the entry $(M)_{i,j}$ in $M$.

\noindent Recently, the search for MDS matrices that can be efficiently implemented using global optimization techniques has received a lot of attention~\cite{Banik_Funabiki_Isobe_2019,Duval_Leurent_2018,Kranz_Leander_Stoffelen_Wiemer_2017,Li_Sun_Li_Wei_Hu_2019,Yang_Zeng_Wang_2021}. An implementation that utilizes global optimization can result in significant cost savings compared to local optimization, as common intermediate values can be computed only once and reused. However, this paper focuses on the d-XOR (s-XOR) metric to calculate the implementation cost of the matrices and presents several new NMDS and recursive NMDS matrices with a lower XOR count than previously reported results.

\vspace*{0.1cm}
\textbf{Other Notations:}~Here are the other notations used in the paper.
\begin{enumerate}
	\item The $(i,j)$-th entry of a matrix $A$ is denoted by $(A)_{i,j}$.
	\item We denote by $|A|$ for the number of nonzero entries in the matrix $A$ and $|A|\leq |B|$  means the number of nonzero elements in $A$ is less than or equal to the number of nonzero elements in $B$.
	\item For two $m\times n$ matrices $A$ and $B$, we symbolize $A\leqq  B$ if $(A)_{i,j}\neq 0$ implies $(B)_{i,j}\neq 0$.
	\item $\mathbb{F}_2[L]$ denotes the set of polynomials of $L$ over $\mathbb{F}_2$.
	\item For simplicity, we use nonzero positions in each row of a binary matrix as a representation of the matrix. For example, $[[1,2,3],[1,3],[2]]$ represents the binary matrix $\begin{bmatrix}
	1 & 1 & 1 \\
	1 & 0 & 1 \\
	0 & 1 & 0
	\end{bmatrix}$.
\end{enumerate}

\noindent An MDS matrix must have all its entries nonzero. Therefore, any $n \times n$ matrix cannot be MDS if the number of nonzero entries is less than $n^2$. Whereas, an NMDS matrix of order $n$ must have at least $n^2-n$ nonzero entries. In the following section, we use this fact to obtain some interesting results on NMDS matrices.

\section{Construction of Recursive NMDS matrices}\label{Section_Rec_NMDS}
The importance of recursive MDS matrices is that they are especially well suited for lightweight implementations: the diffusion layer is constructed by recursively executing the implementation of the sparse matrices, which requires some clock cycles.
The use of recursive MDS matrices using companion matrices has been observed in the PHOTON~\cite{PHOTON} family of hash functions and the block cipher LED~\cite{LED} due to their ability to be constructed with a simple LFSR.
Subsequently, Generalized-Feistel-Structure (GFS)~\cite{Recursive_Diffusion_Layer12}, Diagonal-Serial-Invertible (DSI)~\cite{DSI} and diagonal-like sparse (DLS)~\cite{Rec_MDS_2022} matrices were proposed for constructing recursive MDS matrices.
On the other hand, there has been insufficient study on the construction of NMDS matrices, as well as recursive NMDS matrices. This inspires us to present some new results on NMDS matrices and in the following section, we are considering the DLS matrices for the construction of recursive NMDS matrices.

\subsection{Construction of NMDS matrices from DLS matrices}\label{Section_DLS_NMDS}

\begin{definition}\cite{Rec_MDS_2022}\label{def_DLS_matrix_NMDS}
	Let $\rho=~ [i_1,~i_2,~i_3,~\ldots~,i_n]$ be a permutation such that $i_k\neq k$ for $k=1,2,\ldots,n$,  $D_1$ be a nonsingular diagonal matrix and $D_2$ be a diagonal matrix (may be singular). Then we will call the matrix 
	$$B=PD_1+D_2$$ 
	as the diagonal-like sparse (DLS) matrix, where $P$ is the permutation matrix of order $n$ related to the permutation $\rho$. The matrices denoted as $DLS(\rho;D_1,D_2)$.
\end{definition}

\begin{example}
	An example of a DLS matrix of order $4$ is given by
	\begin{center}
		$DLS(\rho;D_1,D_2)=PD_1+D_2=
		\begin{bmatrix}
		0 & 0 & 0 & 1 \\
		0 & 0 & 1 & 0 \\
		1 & 0 & 0 & 0 \\
		0 & 1 & 0 & 0	
		\end{bmatrix}
		\cdot
		\begin{bmatrix}
		a & 0 & 0 & 0 \\
		0 & b & 0 & 0 \\
		0 & 0 & c & 0 \\
		0 & 0 & 0 & d
		\end{bmatrix}
		+
		\begin{bmatrix}
		e & 0 & 0 & 0 \\
		0 & 0 & 0 & 0 \\
		0 & 0 & f & 0 \\
		0 & 0 & 0 & 0
		\end{bmatrix}
		= 
		\begin{bmatrix}
		e & 0 & 0 & d \\
		0 & 0 & c & 0 \\
		a & 0 & f & 0 \\
		0 & b & 0 & 0	
		\end{bmatrix},
		$
	\end{center}
where $P$ is the permutation matrix related to $\rho=[3,4,2,1]$ and $D_1$=$diag(\splitatcommas{a,b,c,d})$ and $D_2=diag(e,0,f,0)$.
\end{example}
In~\cite{Recursive_nMDS_2021}, the authors have obtained some lightweight recursive NMDS matrices with the lowest fixed XOR value of one (i.e. $\mathcal{K}$=1), and it requires a large number of iterations. Thus, these matrices are not useful for low latency purposes. In this paper, we consider the case of checking whether $B^k$ is NMDS or not for $k\leq n$.
\par In~\cite[Theorem~$3$]{Rec_MDS_2022}, the authors proved that for a DLS matrix $M$ of order $n$, $M^k$ contains at least one zero for $0\leq k < n-1$ and $n\geq 2$. In Theorem~\ref{th_DLS_power_need_nMDS}, we have used the similar proof technique to show that for a DLS matrix $M=DLS(\rho;D_1,D_2)$ of order $n$, $M$ cannot be $k$-NMDS for $0\leq k < n-2$. For this, we need the following lemmas.

\begin{lemma}\cite[Lemma~4]{Rec_MDS_2022}\label{lemma_DP=PD_1}
	For any permutation matrix $P$ related to some permutation $\rho$ and any diagonal matrix $D$, we have $DP=PD_1$ for some diagonal matrix $D_1$. Also, the number of nonzero entries in $D$ and $D_1$ are same.
\end{lemma}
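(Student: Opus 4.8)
The plan is to prove this by a direct computation at the level of matrix entries, exploiting the fact that left multiplication by a diagonal matrix rescales rows while right multiplication by a diagonal matrix rescales columns, together with the bijectivity of $\rho$.

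First I would fix notation: writing $D = diag(d_1, d_2, \ldots, d_n)$ and recalling that, in the row-permuted-identity convention used in this paper, the permutation matrix $P$ associated with $\rho = [i_1, i_2, \ldots, i_n]$ has exactly one nonzero entry in row $k$, namely $(P)_{k, i_k} = 1$. Then $(DP)_{k,\ell} = d_k \, (P)_{k,\ell}$, so $DP$ again has a single nonzero entry in row $k$, equal to $d_k$ and located in column $i_k$.

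Next I would define the candidate diagonal matrix $D_1$ by the rule $(D_1)_{i_k, i_k} := d_k$ for $k = 1, \ldots, n$. Because $\rho$ is a permutation, the map $k \mapsto i_k$ is a bijection of $\set{1, \ldots, n}$, so this assigns a value to every diagonal position exactly once and $D_1$ is well defined and unique; equivalently, $D_1 = P^{-1} D P = P^{T} D P$. Right multiplication by $D_1$ rescales the $\ell$-th column by $(D_1)_{\ell,\ell}$, and since column $i_k$ of $P$ has its single $1$ in row $k$, the matrix $PD_1$ has a single nonzero entry in column $i_k$, equal to $(D_1)_{i_k,i_k} = d_k$, located in row $k$. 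This is exactly the entry pattern of $DP$ computed above, so $DP = PD_1$.

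Finally, the count of nonzero entries is immediate: by construction the multiset of diagonal entries of $D_1$ is $\set{d_1, \ldots, d_n}$, i.e.\ a rearrangement of the diagonal of $D$, hence $D_1$ and $D$ have the same number of nonzero entries. I do not anticipate any genuine obstacle here; the only points requiring a moment's care are keeping the indexing convention for $P$ straight and invoking the bijectivity of $\rho$ to see that the defining rule for $D_1$ indeed produces a (well-defined, diagonal) matrix.
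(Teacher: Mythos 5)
Your proof is correct; the paper states this lemma only by citation to \cite[Lemma~4]{Rec_MDS_2022} and gives no proof of its own, and your entrywise computation with $D_1 = P^{-1}DP = P^{T}DP$ is exactly the standard argument one would expect there. The only wording to tighten is the phrase ``a single nonzero entry in row $k$, equal to $d_k$,'' which silently assumes $d_k\neq 0$ (the lemma allows $D$ to be singular); this is harmless, since the identity $DP=PD_1$ and the equality of nonzero counts both follow from the diagonal of $D_1$ being a rearrangement of that of $D$, exactly as you conclude.
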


\begin{lemma}\cite[Lemma~5]{Rec_MDS_2022}\label{DLS_lemma_power_of_P+D}
	Let $M=P+D_2$ be an $n\times n$ matrix, where $D_2$ is a diagonal matrix (may be singular) and $P$ is a permutation matrix. Then 
	$$M^r\leqq P^r+P^{r-1}D+P^{r-2}D+\cdots +PD+D_{2}^2$$
	for $r\geq 2$, where $D$ denotes some nonsingular diagonal matrix.
\end{lemma}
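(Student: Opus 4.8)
The plan is to expand $M^r=(P+D_2)^r$ as a sum of its $2^r$ monomials and then collect the permutation factors at the front of each monomial, using Lemma~\ref{lemma_DP=PD_1} repeatedly. A monomial is a word $w=X_1X_2\cdots X_r$ with each $X_i\in\{P,D_2\}$; suppose it contains exactly $k$ copies of $P$. Rewriting any occurrence of $EP$ (with $E$ diagonal) as $PE'$ via Lemma~\ref{lemma_DP=PD_1} and iterating, the word collapses to $w=P^{k}E_w$ for some diagonal matrix $E_w$ (a product of the diagonal matrices obtained from $D_2$ by the lemma). Hence $M^r=\sum_{w}P^{k(w)}E_w$, and I would sort this sum by the value of $k=k(w)$.

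For $1\le k\le r-1$, the matrix $P^{k}E_w$ has at most one nonzero per row, in the same column as the nonzero of $P^{k}$; so $P^{k}E_w\leqq P^{k}$ for any diagonal $E_w$, and therefore $\sum_{k(w)=k}P^{k}E_w=P^{k}F_k\leqq P^{k}D$ for every nonsingular diagonal matrix $D$ (here $F_k$ is the diagonal matrix $\sum_{k(w)=k}E_w$). There is exactly one monomial with $k=r$, namely $P^{r}$ itself, and exactly one with $k=0$, namely $D_2^{r}$; since $d^{r}=0\iff d=0$ over a field, $\operatorname{supp}(D_2^{r})=\operatorname{supp}(D_2^{2})$, so $D_2^{r}\leqq D_2^{2}$. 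Putting the pieces together gives $M^r\leqq P^{r}+P^{r-1}D+\cdots+PD+D_2^{2}$. Alternatively one may induct on $r$: for $r=2$ one has $M^2=P^2+PD_2+D_2P+D_2^2=P^2+P(D_2+D')+D_2^2\leqq P^2+PD+D_2^2$ using $D_2P=PD'$, and the inductive step multiplies the bound for $M^{r}$ on the right by $M=P+D_2$ and re-sorts by powers of $P$ exactly as above.

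The delicate point is cancellation in characteristic~$2$: when $P^{i}$ and $P^{j}$ overlap in a position, the entries of $P^{i}D+P^{j}D$ there may cancel, so the right-hand side read as one literal matrix (with a common $D$) can have support strictly smaller than $\operatorname{supp}(D_2)\cup\bigcup_{k=1}^{r}\operatorname{supp}(P^{k})$. What the argument above proves unconditionally is precisely the support containment $\operatorname{supp}(M^r)\subseteq\operatorname{supp}(D_2)\cup\bigcup_{k=1}^{r}\operatorname{supp}(P^{k})$; the nonsingular diagonal matrices in the statement are to be understood as chosen (independently for each power of $P$) so that no cancellation shrinks the right-hand side below this union, which is the only form in which the bound is used later, namely to exhibit a forced zero entry of $M^r$ for small $r$. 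I expect that the main work is exactly in pinning down this clean support statement: verifying that pushing the $P$'s to the front never places support outside $\operatorname{supp}(P^{k})$ for the $k$-fold-$P$ monomials, and that the pure-$D_2$ monomial stays on $\operatorname{supp}(D_2)$; the rest is bookkeeping.
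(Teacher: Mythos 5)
Your argument is correct. Note that the paper does not actually prove this lemma --- it is imported verbatim by citation from~\cite{Rec_MDS_2022} (their Lemma~5), so there is no in-paper proof to compare against; I can only assess your proposal on its own terms. The expansion of $(P+D_2)^r$ into $2^r$ words, the normalization of each word to $P^{k}E_w$ with $E_w$ diagonal by repeated use of Lemma~\ref{lemma_DP=PD_1}, the grouping by $k$, and the observation that $\operatorname{supp}(D_2^{r})=\operatorname{supp}(D_2^{2})$ over a field together give exactly the support containment
$\operatorname{supp}(M^r)\subseteq\operatorname{supp}(P^r)\cup\bigcup_{k=1}^{r-1}\operatorname{supp}(P^k)\cup\operatorname{supp}(D_2^{2})$,
which is the content of the $\leqq$ statement. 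You are also right to flag the one genuinely delicate point: read literally as a single matrix with a common nonsingular $D$, the right-hand side could have smaller support in characteristic~$2$ because overlapping terms $P^iD$ and $P^jD$ may cancel, so the lemma must be understood as the union-of-supports bound (equivalently, with the diagonal matrices chosen per term to avoid cancellation). This reading is consistent with the only way the lemma is used downstream, namely the passage from Equation~(\ref{Eqn_calculation_nonzeros_in_last_diagonal_matrix}) to the count $|M^{k}|\leq |P^{k}D|+\cdots+|PD|+|D_2^{2}|$, where the sum of term-wise supports is what is actually invoked; so no gap remains in the argument as applied.
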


From Corollary~\ref{corollary_min_nonzero_nMDS}, we know that any matrix of order $n$ cannot be NMDS if the number of nonzero entries is less than $n^2-n$. We are using this fact for the proof of the following theorem.

\begin{theorem}\label{th_DLS_power_need_nMDS}
	Given a DLS matrix $M=DLS(\rho;D_1,D_2)$ of order $n\geq 2$, $M^k$ is not NMDS for $k<n-2$.
\end{theorem}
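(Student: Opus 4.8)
The plan is to follow the proof technique of~\cite[Theorem~3]{Rec_MDS_2022} and track how the number of nonzero entries in $M^k$ grows with $k$, then invoke Corollary~\ref{corollary_min_nonzero_nMDS} which says an NMDS matrix of order $n$ needs at least $n^2 - n$ nonzero entries. So the goal reduces to showing that for $k < n-2$, the matrix $M^k$ has strictly fewer than $n^2 - n$ nonzero entries, i.e. it has more than $n$ zero entries.

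First I would reduce to a cleaner normal form. Writing $M = PD_1 + D_2$, I would use Lemma~\ref{lemma_D1AD2_nMDS}/Corollary~\ref{corollary_D1AD2_nMDS} (diagonal scaling preserves the NMDS property and the zero-pattern), and note $M = (P + D_2 D_1^{-1})D_1$, so that the zero pattern of $M^k$ matches that of $(P + D)^k D_1$ for a diagonal $D$ (possibly singular) — equivalently we may as well study $N = P + D$ with $D$ diagonal, since $D_1$ is nonsingular and right-multiplication by a nonsingular diagonal matrix does not change which entries are zero. Then I would apply Lemma~\ref{DLS_lemma_power_of_P+D}: for $r \ge 2$,
\[
N^r \leqq P^r + P^{r-1}D' + P^{r-2}D' + \cdots + PD' + D_2^2,
\]
where $D'$ is some nonsingular diagonal matrix. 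The key observation is that each term $P^{j}D'$ (for $j \ge 1$) is a monomial matrix — it has exactly $n$ nonzero entries, one in each row and column, located at positions given by the permutation $\rho^{j}$ — and $D_2^2$ contributes nonzeros only on the main diagonal, at the positions where $D_2$ is nonzero.

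Next I would count. The nonzero positions of $N^r$ are contained in the union of the "diagonals" (permutation-graphs) of $\rho^r, \rho^{r-1}, \ldots, \rho^{1}$ together with a subset of the positions of the identity permutation (from $D_2^2$). Each permutation-graph is a set of $n$ cells, so the union over the $r$ permutations $\rho, \rho^2, \ldots, \rho^r$ together with the identity's diagonal has size at most $(r+1)n$ — but I need a bound on how many cells are covered, which translates to a lower bound on the number of zero cells: $N^r$ has at least $n^2 - (r+1)n = n(n - r - 1)$ zero entries. Hence for $r = k$ with $k < n - 2$, i.e. $k \le n - 3$, we get at least $n(n - k - 1) \ge n(n - (n-3) - 1) = 2n$ zeros, which is more than $n$, so by Corollary~\ref{corollary_min_nonzero_nMDS}, $M^k$ cannot be NMDS. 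I would also handle the small cases $k = 0$ (then $M^0 = I$, which has $n^2 - n > n$... wait, $n^2 - n$ zeros, certainly not NMDS for $n \ge 2$) and $k = 1$ (then $M$ itself is a DLS matrix, which by Definition~\ref{def_DLS_matrix_NMDS} has at most $2n$ nonzero entries, hence at least $n^2 - 2n \geq n+1 > n$ zeros when $n \geq 3$; the case $n=2$ with $k < 0$ is vacuous) directly, and note that for $k \ge 2$ Lemma~\ref{DLS_lemma_power_of_P+D} applies.

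The main obstacle I anticipate is making the counting argument tight and rigorous: I must be careful that the $\leqq$ relation in Lemma~\ref{DLS_lemma_power_of_P+D} gives an \emph{over}-estimate of the nonzero support, so the union-of-$(r+1)$-permutation-graphs bound is legitimate as an upper bound on nonzeros (hence a lower bound on zeros), and I must confirm that the constant is exactly right — specifically that the threshold $k < n-2$ (rather than $k < n-1$ as in the MDS case) is what the count $n(n-r-1) > n$ forces. This reflects the structural difference: MDS needs \emph{all} $n^2$ entries nonzero (so even one guaranteed zero kills it, giving $k \le n-2$ excluded... actually $k < n-1$), whereas NMDS tolerates up to $n$ zeros, so we need the guaranteed zero count to exceed $n$, which requires $r + 1 < n - 1$, i.e. $k < n - 2$. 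A secondary technical point is justifying the reduction from $DLS(\rho; D_1, D_2)$ to $P + D$: this uses that the zero-pattern of $AB$ for a monomial/diagonal $B$ equals that of $A$ up to permutation of columns, which should be stated as a short lemma or absorbed into the argument via Lemma~\ref{lemma_DP=PD_1}.
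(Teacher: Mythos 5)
Your proposal is correct and follows essentially the same route as the paper: bound the support of $M^k$ via $M\leqq P+D_2$ and Lemma~\ref{DLS_lemma_power_of_P+D}, count at most $(k+1)n$ nonzero entries since each $P^jD$ is a permutation pattern, and conclude from Corollary~\ref{corollary_min_nonzero_nMDS} that $k<n-2$ forces fewer than $n^2-n$ nonzeros. Your extra care with the reduction to $P+D$ and the small cases $k=0,1$ (where the lemma's hypothesis $r\geq 2$ fails) is a minor tidying of details the paper glosses over, not a different argument.
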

\begin{proof}
	
	We have $M=DLS(\rho;D_1,D_2)\leqq P+D_2$, where $P$ is the permutation matrix corresponding to $\rho$. From Lemma~\ref{DLS_lemma_power_of_P+D}, we have 
	\begin{equation}\label{Eqn_calculation_nonzeros_in_last_diagonal_matrix}
	\begin{aligned}
		|M^{k}|&\leq |P^{k}+P^{k-1}D+\cdots +PD+D_{2}^2|\\
		&\leq |P^{k}D|+|P^{k-1}D|+\cdots +|PD|+|D_{2}^2|.
	\end{aligned}
	\end{equation}
Since power of a permutation matrix is again a permutation matrix, we have
\begin{equation}\label{Eqn_calculation_nonzero_kth_power_P+D2}
	|M^{k}|\leq \underbrace{|D|+|D|+\cdots +|D|}_{\mbox{$k$ times}}+|D_{2}^2|\leq kn+n.
\end{equation}
Now for $k < n-2$, we have 
\begin{align*}
	|M^k|&< (n-2)n+n= n^2-n
		\implies  |M^k|< n^2-n.
\end{align*}
Hence, $M^k$ is not NMDS for $k<n-2$.\qed
\end{proof}

\begin{remark}
	In Equation~\ref{Eqn_calculation_nonzero_kth_power_P+D2}, if we assume $k<n-1$, we can see that $|M^k|<n^2$. Hence, given a DLS matrix $M$, $M^k$ is not MDS for $k<n-1$.
\end{remark}

\begin{remark}
	From Theorem~\ref{th_DLS_power_need_nMDS}, we know that for a DLS matrix $M=DLS(\rho;D_1,D_2)$ of order $n\geq 2$, $M^k$ is not an NMDS for $k<n-2$. However, there exist $k$-NMDS DLS matrix for $k=n-2$. For example, consider the DLS matrix $M=DLS(\rho; D_1, D_2)$ of order $4$ with $\rho=[4,1,2,3]$, $D_1=diag(\alpha^2,\alpha^2,\alpha^2,\alpha^2)$ and $D_2=diag(\alpha^2,1,\alpha^2,1)$, where $\alpha$ is a primitive element of $\mathbb{F}_{2^4}$ with $\alpha^4+\alpha+1=0$. It can be checked that the matrix
	\begin{equation*}
		\begin{aligned}
			M &= DLS(\rho; D_1, D_2)\\
			& =\begin{bmatrix}
				\alpha^{2} & \alpha^{2} & 0 & 0 \\
				0 & 1 & \alpha^{2} & 0 \\
				0 & 0 & \alpha^{2} & \alpha^{2} \\
				\alpha^{2} & 0 & 0 & 1
			\end{bmatrix}
		\end{aligned}
	\end{equation*}
	is 2-NMDS.
\end{remark}

\noindent We will now explore how the permutation $\rho$ impacts a DLS matrix $DLS(\rho; D_1, D_2)$ in the construction of recursive NMDS matrices. For this, we will use the following lemma.

\begin{lemma}\label{Lemma_Effect_full_cycle}
	If $\rho$ is not an $n$-cycle for a DLS matrix $M=DLS(\rho;D_1,D_2)$ of order $n\geq 2$, then $M^{n-1}$ and $M^n$ contain at most $n^2-2n$ and $n^2-n-2$ nonzero elements respectively.
\end{lemma}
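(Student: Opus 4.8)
The plan is to bound the number of nonzero entries of $M^{n-1}$ and $M^{n}$ by exploiting the fact that a permutation with more than one cycle makes the DLS matrix $M$ block-diagonal, which confines the support of every power of $M$; the two numerical bounds then follow from an elementary estimate on $\sum\ell_i^{2}$. First I would fix the cycle structure of $\rho$. Since $M=DLS(\rho;D_1,D_2)$, the permutation $\rho$ is fixed-point free, and by hypothesis it is not an $n$-cycle, so its disjoint-cycle decomposition has $t\geq 2$ cycles, acting on orbits $O_1,\dots,O_t$ with $|O_i|=\ell_i\geq 2$ and $\ell_1+\dots+\ell_t=n$. (For $n\leq 3$ no fixed-point-free non-$n$-cycle exists, so the statement is vacuous; hence one may assume $n\geq 4$, and then $\ell_i\leq n-2$ for every $i$.) Because $(P)_{a,b}\neq 0$ forces $b=\rho(a)$, which lies in the orbit of $a$, the matrix $PD_1$ has all nonzero entries inside $\bigcup_i (O_i\times O_i)$, and so does the diagonal matrix $D_2$; hence $(M)_{a,b}=0$ whenever $a,b$ belong to distinct orbits, i.e.\ $M$ is block-diagonal with blocks of sizes $\ell_1,\dots,\ell_t$ (after relabelling indices so that each orbit is an interval, which does not change the number of nonzero entries).

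Next I would note that a product of matrices block-diagonal with respect to the same partition is again block-diagonal with respect to that partition, so $M^{k}$ is block-diagonal with blocks of sizes $\ell_1,\dots,\ell_t$ for every $k\geq 1$. Since the $i$-th diagonal block contains at most $\ell_i^{2}$ entries and every entry outside the blocks vanishes, $|M^{k}|\leq\sum_{i=1}^{t}\ell_i^{2}$. (This can also be obtained in the style of Theorem~\ref{th_DLS_power_need_nMDS}: the $i$-th block of $M$ satisfies $C_iD_1^{(i)}+D_2^{(i)}\leqq C_i+D_2^{(i)}$, with $C_i$ an $\ell_i$-cycle and $D_1^{(i)},D_2^{(i)}$ the $i$-th blocks of $D_1,D_2$, and by Lemma~\ref{DLS_lemma_power_of_P+D}, since $k\in\{n-1,n\}$ and $\ell_i\leq n-2\leq k$, the matrices $C_i,C_i^{2},\dots,C_i^{k}$ run through all $\ell_i$ powers of $C_i$, which together fill the block.)

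Finally, the arithmetic step: from $n^{2}=\bigl(\sum_i\ell_i\bigr)^{2}=\sum_i\ell_i^{2}+\sum_i\ell_i(n-\ell_i)$ and the observation that $n-\ell_i=\sum_{j\neq i}\ell_j\geq 2$ for every $i$ (there is always another orbit, of size at least $2$), one gets $\sum_i\ell_i(n-\ell_i)\geq\sum_i 2\ell_i=2n$, hence $\sum_{i=1}^{t}\ell_i^{2}\leq n^{2}-2n$. This yields $|M^{n-1}|\leq n^{2}-2n$ at once; and since $n^{2}-2n\leq n^{2}-n-2$ for all $n\geq 2$, it also yields $|M^{n}|\leq n^{2}-n-2$, completing the proof.

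I do not expect a genuine obstacle here: once the block-diagonal structure of $M$ is recognized, the remainder is bookkeeping. The points requiring a little care are the reduction to $n\geq 4$ (making the hypothesis non-vacuous and guaranteeing $\ell_i\leq n-2\leq k$) and the short optimization of $\sum\ell_i^{2}$ over the integer partitions of $n$ into at least two parts, each at least $2$, for which the identity $\sum_i\ell_i^{2}=n^{2}-\sum_i\ell_i(n-\ell_i)$ is the key simplification.
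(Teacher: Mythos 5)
Your proof is correct, and it takes a genuinely different route from the paper's. The paper's proof starts from the generic bound $|M^{k}|\leq kn+n$ of Equation~\ref{Eqn_calculation_nonzero_kth_power_P+D2} (obtained by expanding $M^k\leqq (P+D_2)^k$ via Lemma~\ref{DLS_lemma_power_of_P+D}) and then subtracts positions that are counted more than once in the sum $P^{k}+P^{k-1}D+\cdots+PD+D_{2}^{2}$, using three ad hoc observations about coincidences among the supports of the terms $P^{j}D$ (e.g.\ that $P^{r_i}D$ meets the diagonal in $r_i$ positions, that $P^{r_i+1}D$ and $PD$ share $r_i$ positions, and that some $P^{r_i}D$ and $P^{2r_i}D$ overlap); this bookkeeping produces exactly the two stated bounds $n^{2}-2n$ and $n^{2}-n-2$. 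You instead exploit structure the paper never names: $M=PD_1+D_2$ is block-diagonal with respect to the orbit partition of $\rho$, hence so is every power, giving $|M^{k}|\leq\sum_i\ell_i^{2}$ for all $k\geq 1$, and the identity $\sum_i\ell_i^{2}=n^{2}-\sum_i\ell_i(n-\ell_i)$ together with $n-\ell_i\geq 2$ yields $\sum_i\ell_i^{2}\leq n^{2}-2n$. Your argument is shorter, avoids the overlap-counting entirely, and proves the stronger uniform bound $|M^{k}|\leq n^{2}-2n$ for every power $k$ (which implies the paper's $n^{2}-n-2$ bound for $M^{n}$ since $n\geq 2$, and also makes the subsequent claim $|M^{n-2}|<n^{2}-n$ immediate); everything used downstream, in particular Corollary~\ref{corollary_min_power_in_PD1+D2_if_P_not_full_cycle}, only needs $|M^{k}|<n^{2}-n$, so nothing is lost by replacing the paper's two separate bounds with your single sharper one. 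The observations about vacuousness for $n\leq 3$ and the relabelling of orbits into intervals are correctly handled.
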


\begin{proof}
	If $\rho$ is not an $n$-cycle, then it is a product of disjoint cycles in $S_n$. Suppose that $\rho=\rho_{1} \rho_2\ldots \rho_v$, where  $\rho_i$ is a $r_{i}$-cycle in $S_n$ for $i=1,2,\ldots,v$ and $v\in \set{2,3,\ldots,\floor{\frac{n}{2}}}$. 
	But by the definition of the DLS matrix, $\rho$ has no fixed points, we have $2\leq r_{i}\leq n-2$ and $r_1+r_2+\cdots+r_v=n$. 
	\par Now from Equation~\ref{Eqn_calculation_nonzero_kth_power_P+D2}, we have $|M^{n-1}|\leq n^2$ and $|M^n|\leq n^2+n$.
	However, we can eliminate some counting of nonzero elements based on the following conditions:
	\begin{enumerate}[1.]
		\item For the permutation matrix $P$ related to $\rho$, $P^{r_{i}}D$ has $r_i$ nonzero elements in the diagonal. Also, $D$ has $n$ many nonzero elements in the diagonal.
		\item $P^{r_{i}+1}D$ and $PD$ have $r_i$ many nonzero elements in the same positions.
		\item Since $v\leq \floor{\frac{n}{2}}$, at least two multiples of some $r_i$ occurs in the set $\set{2,3,\cdots,n}$. Thus, $P^{r_{i}}D$ and $P^{2r_{i}}D$ have at least $r_i$ nonzero elements in the same diagonal position.
	\end{enumerate}
	Therefore, we have
	\begin{align*}
		&|M^{n-1}| \leq n^2-2\cdot (r_1+r_2+\cdots+r_v )\leq n^2-2n\\
		\text{and}~&|M^n| \leq n^2+n-2\cdot (r_1+r_2+\cdots+r_v)-r_i \leq n^2-n-2.
	\end{align*} 
	Hence, the result.\qed
\end{proof}

Also, if $\rho$ is not an $n$-cycle, then by the Condition 1 of the above proof and Equation~\ref{Eqn_calculation_nonzero_kth_power_P+D2}, we can say $|M^{n-2}|< n^2-n$. Hence, by combining the results of Theorem~\ref{th_DLS_power_need_nMDS} and Lemma~\ref{Lemma_Effect_full_cycle}, we have the result as follows.

\begin{corollary}\label{corollary_min_power_in_PD1+D2_if_P_not_full_cycle}
	For a DLS matrix $M=DLS(\rho;D_1,D_2)$ of order $n\geq 2$, if $\rho$ is not an $n$-cycle, then $M^k$ is not NMDS for $k\leq n$.
\end{corollary}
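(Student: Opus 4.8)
The plan is to assemble the corollary from pieces that are already in place, so the proof is essentially a bookkeeping argument over the range $k \leq n$. Since $M = DLS(\rho;D_1,D_2) \leqq P + D_2$, the nonzero-count bound $|M^k| \leq kn + n$ from Equation~\ref{Eqn_calculation_nonzero_kth_power_P+D2} is available for every $k$, and by Corollary~\ref{corollary_min_nonzero_nMDS} it suffices in each case to show $|M^k| < n^2 - n$. First I would split the range of $k$ into three zones: $k \leq n-3$, $k \in \{n-2, n-1\}$, and $k = n$.

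For $k \leq n-3 < n-2$, Theorem~\ref{th_DLS_power_need_nMDS} already gives that $M^k$ is not NMDS, with no hypothesis on $\rho$ needed. For $k = n-2$: since $\rho$ is not an $n$-cycle, the remark immediately following the proof of Lemma~\ref{Lemma_Effect_full_cycle} records that $|M^{n-2}| < n^2 - n$ (this follows from Condition~1 of that proof together with Equation~\ref{Eqn_calculation_nonzero_kth_power_P+D2}, because the diagonal blocks $P^{r_i}D$ overlap the diagonal of $D$), so $M^{n-2}$ is not NMDS. For $k = n-1$ and $k = n$: Lemma~\ref{Lemma_Effect_full_cycle} gives $|M^{n-1}| \leq n^2 - 2n$ and $|M^n| \leq n^2 - n - 2$, both of which are strictly less than $n^2 - n$ for every $n \geq 2$; hence by Corollary~\ref{corollary_min_nonzero_nMDS} neither $M^{n-1}$ nor $M^n$ can be NMDS.

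Combining the three zones, $M^k$ fails to be NMDS for all $k \leq n$, which is the claim. The only place where the hypothesis ``$\rho$ is not an $n$-cycle'' does real work is the $k \in \{n-2, n-1, n\}$ zone, where it is exactly what powers Lemma~\ref{Lemma_Effect_full_cycle} (if $\rho$ were an $n$-cycle, $M^{n-1}$ could already be NMDS or even MDS, as the matrices constructed elsewhere in the paper show). There is no real obstacle here: the genuinely substantive counting was done inside Theorem~\ref{th_DLS_power_need_nMDS} and Lemma~\ref{Lemma_Effect_full_cycle}, and this corollary is just the observation that together they cover the whole interval $0 \leq k \leq n$. The one point to be careful about is checking the trivial arithmetic inequalities $n^2 - 2n < n^2 - n$ and $n^2 - n - 2 < n^2 - n$ for small $n$ (both hold for all $n \geq 2$), so that Corollary~\ref{corollary_min_nonzero_nMDS} applies uniformly.
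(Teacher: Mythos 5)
Your proposal is correct and follows essentially the same route as the paper: the paper likewise obtains the corollary by combining Theorem~\ref{th_DLS_power_need_nMDS} for $k<n-2$, the observation (via Condition~1 of the proof of Lemma~\ref{Lemma_Effect_full_cycle} together with Equation~\ref{Eqn_calculation_nonzero_kth_power_P+D2}) that $|M^{n-2}|<n^2-n$ when $\rho$ is not an $n$-cycle, and the bounds of Lemma~\ref{Lemma_Effect_full_cycle} for $k=n-1$ and $k=n$, all fed through Corollary~\ref{corollary_min_nonzero_nMDS}. No gaps to report.
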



\noindent To this point, we have ignored the possibility that the diagonal of $D_2$ contains zero entries. We now look at the case in which $D_2$ is singular, i.e., its diagonal contains at least one zero.

\begin{lemma}\label{Lemma_nMDS_D2_singular_in_DLS}
	In a DLS matrix $M=DLS(\rho_1;D_1,D_2)$ of order $n\geq 2$, if $D_2$ is singular, then $M^k$ cannot be NMDS for $k\leq n-2$, even if $\rho$ is $n$-cycle.
\end{lemma}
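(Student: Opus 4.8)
The plan is to run the same support‑counting argument used for Theorem~\ref{th_DLS_power_need_nMDS}, but to squeeze out one additional forbidden nonzero position from the hypothesis that $D_2$ is singular. First I would record that, because $D_1$ is nonsingular and $\rho$ has no fixed point, the matrices $PD_1$ and $D_2$ have disjoint supports (off‑diagonal versus diagonal), so $M = PD_1 + D_2 \leqq P + D_2$, where $P$ is the permutation matrix of $\rho$.

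For $k \ge 2$, I would then invoke Lemma~\ref{DLS_lemma_power_of_P+D} applied to $P + D_2$ to get $M^k \leqq P^k + P^{k-1}D + P^{k-2}D + \cdots + PD + D_2^2$, where $D$ stands for a nonsingular diagonal matrix. Each of the $k$ terms $P^k, P^{k-1}D, \ldots, PD$ is a monomial matrix with exactly $n$ nonzero entries. The decisive point is the last term: $(D_2^2)_{i,i} = \big((D_2)_{i,i}\big)^2$, so $D_2^2$ has the same support as $D_2$, and since $D_2$ is singular at least one of these diagonal entries is zero; hence $|D_2^2| = |D_2| \le n-1$. Adding up, $|M^k| \le kn + (n-1)$. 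For $k \le n-2$ this gives $|M^k| \le (n-2)n + (n-1) = n^2 - n - 1 < n^2 - n$, so Corollary~\ref{corollary_min_nonzero_nMDS} forces $M^k$ not to be NMDS.

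It then remains to dispose of the case $k = 1$, which can only occur when $n \ge 3$ (for $n = 2$ the range $k \le n-2$ is empty). Here a direct count suffices: $|M| = |P| + |D_2| = n + |D_2| \le 2n - 1 < n(n-1) = n^2 - n$ for $n \ge 3$, so $M$ itself is not NMDS. Note that nowhere does the argument use the cycle structure of $\rho$ — only that $\rho$ is fixed‑point free, which is part of the definition of a DLS matrix — so the statement indeed holds even if $\rho$ is an $n$-cycle.

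I do not expect a genuine obstacle: this is a counting estimate, and the only delicate point is that the bound is tight at the endpoint $k = n-2$, where the single nonzero entry saved by the singularity of $D_2$ is precisely what extends the conclusion by one further power of $k$ relative to Theorem~\ref{th_DLS_power_need_nMDS}. One should, as in that proof, keep in mind that the $\leqq$ relations are used only to bound the \emph{number} of nonzero entries of $M^k$ (cancellations can only lower it) and that Lemma~\ref{DLS_lemma_power_of_P+D} is stated for exponents $\ge 2$, which is why $k = 1$ is treated on its own.
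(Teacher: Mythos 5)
Your proposal is correct and follows essentially the same route as the paper: both bound $|M^k|$ via Lemma~\ref{DLS_lemma_power_of_P+D} by $kn + (n-1)$, using the singularity of $D_2$ to shave one nonzero entry off the final diagonal term, and then invoke Corollary~\ref{corollary_min_nonzero_nMDS}. Your separate treatment of $k=1$ (where Lemma~\ref{DLS_lemma_power_of_P+D} does not apply) is a small point of extra care that the paper's proof elides.
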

\begin{proof}
	If $D_2$ is singular, having at least one zero in the diagonal then from Equation~\ref{Eqn_calculation_nonzeros_in_last_diagonal_matrix} we have,
	\begin{equation}\label{Equation_singularity_D2}
		\begin{aligned}
			|M^{k}|&\leq |P^{k}D|+|P^{k-1}D|+\cdots +|PD|+|D_{i=0}|\\
	&\leq \underbrace{n+n+\cdots +n}_{\mbox{$k$ times}}+(n-1)=kn+n-1.
		\end{aligned}
	\end{equation}
	Where $D_{i=0}$ be some diagonal matrix with a zero at the $i$-th diagonal position for some $i\in \set{1,2,\ldots,n}$. Thus, for $k\leq n-2$, we have $|M^{k}|< n^2-n.$ Hence, $M$ cannot be $k$-NMDS for $k\leq n-2$.\qed
\end{proof}

\begin{remark}
	When $D_2$ is singular, a DLS matrix $M=DLS(\rho;D_1,D_2)$ of order $n\geq 2$, can be a $k$-NMDS for $k=n-1$. For example, consider the DLS matrix $M=DLS(\rho; D_1, D_2)$ of order $4$ with $\rho=[4,1,2,3]$, $D_1=diag(1,1,1,1)$ and $D_2=diag(1,0,1,0)$ over $\mathbb{F}_{2^r}$.  It can be verified that the matrix
	\begin{equation*}
		\begin{aligned}
			M &= DLS(\rho; D_1, D_2)\\
			& =\begin{bmatrix}
				1 & 1 & 0 & 0 \\
				0 & 0 & 1 & 0 \\
				0 & 0 & 1 & 1 \\
				1 & 0 & 0 & 0
			\end{bmatrix}
		\end{aligned}
	\end{equation*}
	is 3-NMDS.
\end{remark}
\subsubsection{Discussion:}
From \cite[Theorem~4]{Rec_MDS_2022}, we know that for an $n$-MDS DLS matrix of order $n$, we must have $\mathcal{K} \geq \ceil{\frac{n}{2}}$. However, the minimum value of $\mathcal{K}$ may be less than $\ceil{\frac{n}{2}}$ for having at least $n^2-n$ nonzero elements when it is raised to power $n-1$. For example, consider the DLS matrix $B=DLS(\rho;D_1,D_2)$ of order $5$ over $\mathbb{F}_{2^4}$, where $\rho=[5,1,2,3,4]$, $D_1=diag(\alpha^4,\alpha^4,1,1,1)$, $D_2=diag(\alpha,0,0,1,0)$ and $\alpha$ is a root of the constructing polynomial $x^4+x+1=0$. Then $B^4$ have $21$ nonzero elements. However, $B^4$ is not an NMDS matrix. 
\par For NMDS matrices, we could not find the minimum value of $\mathcal{K}$ like we have for MDS matrices. 
In the next section, we will provide some theoretical results about NMDS matrices that will help us in determining the minimum value of $\mathcal{K}$ for DLS matrices of order $n$ that are $k$-NMDS with $k=n-1$ and $k=n$.

\begin{remark}\label{Remark_1_XOR_DLS_not_nMDS}
	In \cite[Theorem 2]{OUR_DSI}, authors prove that any matrix $M$ of order $n$ with $\mathcal{K}=1$ can have at most $\frac{n(n+3)}{2}-1$ nonzero elements when it raised to the power $n$. Hence, for $n\geq 5$, we have $|M^n|< n^2-n$. Thus, for $n\geq 5$, any matrix of order $n$ with $\mathcal{K}=1$ cannot be $n$-NMDS.
\end{remark}
For $n=4$, we have $\frac{n(n+3)}{2}-1 > n^2-n$. So it may seem that a matrix of order 4 with $\mathcal{K}=1$ can be 4-NMDS. However, in the following theorem, we will see that to be 4-NMDS, a matrix of order 4 must have $\mathcal{K}=2$.

\begin{theorem}\label{Th_4-nMDS_with_1_XOR}
	There does not exist any $4$-NMDS matrix of order $4$ with $\mathcal{K}=1$ over a field of characteristic $2$.
\end{theorem}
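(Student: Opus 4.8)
A matrix $M$ of order $4$ with $\mathcal{K}=1$ has exactly one row with two nonzero entries and all other rows with a single nonzero entry; thus $M$ has exactly $5$ nonzero entries. First I would argue structurally about what $M^4$ can look like: since the branch number is invariant under row and column permutations and under multiplication by nonsingular diagonal matrices (Lemma~\ref{Lemma_same_branch_perm}, Corollary~\ref{corollary_D1AD2_nMDS}, Fact~\ref{Fact_perm_similar_mds_nmds}), I can normalize $M$ up to diagonal and permutation similarity. Writing $M \leqq P + D_2$ for a permutation matrix $P$ and a diagonal $D_2$ with at most one nonzero entry (as in the proof of Theorem~\ref{th_DLS_power_need_nMDS}), I would split into cases according to the cycle type of $P$: either $P$ is a $4$-cycle, or $P$ is a product of a $3$-cycle and a fixed point, or a product of two transpositions, or a transposition and two fixed points, or a smaller structure. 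Actually $M$ need not be a DLS matrix here, so the cleaner starting point is: one row, say row $t$, has two nonzero entries and rows $i\neq t$ are scalar multiples of coordinate vectors $e_{\sigma(i)}$; the single "extra" entry lies in row $t$.

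The heart of the argument is a counting/support analysis of $M^k$. By Corollary~\ref{corollary_min_nonzero_nMDS}, for $M^4$ to be NMDS it must have at least $n^2-n = 12$ nonzero entries. From Remark~\ref{Remark_1_XOR_DLS_not_nMDS}, a matrix with $\mathcal{K}=1$ can have at most $\frac{n(n+3)}{2}-1 = 13$ nonzero entries in $M^4$, so we are in an extremely tight regime: $M^4$ must have $12$ or $13$ nonzero entries, and every zero entry is "forced." I would track the support of $M, M^2, M^3, M^4$ via the directed graph on $\{1,2,3,4\}$ with an edge $j\to i$ whenever $(M)_{i,j}\neq 0$; this graph has $5$ edges, exactly one vertex of out-degree... (in-degree, depending on convention) $2$ and the rest of degree $1$. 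The number of nonzero entries of $M^4$ is bounded by the number of length-$4$ walks in this graph. I would enumerate the possible such graphs (up to isomorphism there are only a handful) and for each compute the walk count, discarding those giving fewer than $12$ walks, and then for the survivors check the NMDS submatrix condition of Lemma~\ref{Lemma_nMDS_matrix_characterization} directly.

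The characteristic-$2$ hypothesis is where the real obstruction lies, and I expect this to be the main difficulty: in the surviving cases the support of $M^4$ may be large enough, but the actual entries can collide and cancel. Specifically, when two distinct walks of the same length connect $j$ to $i$, their contributions add in $\mathbb{F}_{2^r}$, and over characteristic $2$ equal contributions cancel to $0$, killing entries that a naive support count would deem nonzero. I would need to show that in each candidate configuration such a cancellation is forced — e.g., because the two parallel walks pass through the same pair of edges out of the unique branching vertex with coefficients that are equal (or can be made equal by the allowed diagonal normalization), so the $(i,j)$ entry of $M^4$ vanishes regardless of the field element chosen — thereby pushing the count of nonzero entries below $12$, or destroying a needed nonsingular $2\times2$ or $3\times3$ submatrix.

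This contrasts with the MDS case (\cite[Theorem 2]{OUR_DSI}) and with the odd-characteristic situation: the result is genuinely characteristic-$2$ specific, so the proof must exploit a parity/cancellation phenomenon rather than pure counting. As a corollary, combined with Theorem~\ref{th_DLS_power_need_nMDS} and the observation that $\mathcal{K}=2$ is achievable (cf.\ the $3$-NMDS and $2$-NMDS examples above, suitably adapted), one obtains that the minimum fixed XOR of a $k$-NMDS matrix of order $4$ with $k\leq 4$ is $2$, hence the lowest XOR count is $2r$ over $\mathbb{F}_{2^r}$.
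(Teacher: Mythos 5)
Your skeleton---normalize up to permutation and diagonal similarity, enumerate the few support structures of a five-entry matrix of order $4$, discard those whose fourth power has fewer than $n^2-n=12$ potentially nonzero entries, and finish the survivors by hand---is exactly the paper's route (the paper does the enumeration by computer, finding six surviving structures, all permutation similar). The gap is in what you call the heart of the argument. You predict that for the survivors the obstruction is a forced characteristic-$2$ cancellation among parallel walks that pushes the nonzero count of $M^4$ below $12$. That does not happen. Take the representative survivor
\[
M=\begin{bmatrix} a & x_1 & 0 & 0\\ 0 & 0 & x_2 & 0\\ 0 & 0 & 0 & x_3\\ x_4 & 0 & 0 & 0 \end{bmatrix},
\]
a $4$-cycle with a self-loop at the branching vertex. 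The only entry of $M^4$ carrying two parallel walks is $(M^4)_{1,1}=a^4+x_1x_2x_3x_4$, and since $a,x_1,\ldots,x_4$ are free parameters nothing forces it to vanish: generically $M^4$ has $13$ nonzero entries, above the $12$ threshold. So the cancellation mechanism you lean on is simply not available, and a proof organized around it stalls at exactly the six cases that matter.

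What actually kills these survivors is either (i) the \emph{placement} of the structural zeros of $M^4$---here $(M^4)_{2,3}=(M^4)_{2,4}=(M^4)_{3,4}=0$, so row $2$ (and column $4$) contains two zeros, already violating the $g=1$ case of Lemma~\ref{Lemma_nMDS_matrix_characterization}---or (ii) the paper's argument: the weight-$2$ input $v=[1,\,ax_1^{-1},\,0,\,0]^T$ satisfies $w(M^k v)=1$ for $k=1,2,3,4$ because the branching row collapses ($a\cdot 1+x_1\cdot ax_1^{-1}=a+a=0$), so $w(v)+w(M^kv)=3<4$ and $M^k$ fails to be NMDS for every $k\le 4$ at once. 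Your fallback sentence about checking the submatrix condition of Lemma~\ref{Lemma_nMDS_matrix_characterization} directly would in fact succeed via (i), but your framing of the result as genuinely characteristic-$2$ specific is also off: both (i) and (ii) (the latter with $-ax_1^{-1}$ in place of $ax_1^{-1}$) work in any characteristic, so the hypothesis is just the ambient setting rather than the engine of the proof. Two smaller omissions: you should first dispose of matrices having an all-zero row or \emph{column} (five entries concentrated in one branching row do not preclude an empty column), and your normalization should note, as the paper does, that the extra entry can be moved to the first row so that only the $72$ matrices of that shape---and after the support filter, essentially one of them---need to be examined.
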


\begin{proof}
	A matrix $M$ of order $n$ can never be recursive NMDS if its one row or column has all zero entries~\footnote{The theorem states about the matrices of order $n=4$ and the first part of the proof holds for any matrix of order $n$.}. Also, if $M$ contains $n$ many nonzero elements in such a way that no column or row has all zero entries, then $M$ is of the form $M=PD$, where $P$ is a permutation matrix and $D$ is a diagonal matrix. Then by Lemma~\ref{lemma_DP=PD_1}, any power of $M$ is again of the form $P'D'$, for some permutation matrix $P'$ and diagonal matrix $D'$. Hence, $M$ cannot be recursive NMDS.

	\par Let $\mathcal{S}$ be the set of all matrices $M$ that contain $n+1$ many nonzero elements with $\mathcal{K}=1$ and in such a way that no column or row has all zero entries. Then each $M\in \mathcal{S}$ can be written as $M=PD+A$, where $A$ has only one nonzero element. Let the nonzero element lies in the $i$-th row of $A$.

	\par Now consider the permutation matrix $P_1$ obtained from the identity matrix by permuting the row $i$ to row $1$. Now
	\begin{align*}
		P_1MP_1^{-1}&= P_1(PD+A)P_1^{-1}\\
		&=P_1PDP_1^{-1}+P_1AP_1^{-1}
	\end{align*}
	By Lemma~\ref{lemma_DP=PD_1}, we have $DP_1^{-1}=P_1^{-1}D'$ for some diagonal matrix $D'$. Thus we have
	\begin{align*}
		P_1MP_1^{-1}&=P_1PP_1^{-1}D'+P_1AP_1^{-1}\\
		&=QD'+A',
	\end{align*}
	where $Q= P_1PP_1^{-1}$, $A'=P_1AP_1^{-1}$ and $A'$ has the nonzero element in its first row. Therefore, $M$ is permutation similar to $QD'+A'$. Now let $\mathcal{S}' \subset \mathcal{S}$ be the set of all matrices with two nonzero elements in the first row. 
	\par Since from Fact~\ref{Fact_perm_similar_mds_nmds}, we know that a permutation similar to a recursive NMDS matrix is also a recursive NMDS matrix, we simply need to check from the set $\mathcal{S}'$ for finding all recursive NMDS matrices with $\mathcal{K}=1$.
	
	\par It can be checked that there are only six\footnote{For $n=4$, there are a total of $72$ elements in $\mathcal{S}'$, and by running a computer search, we have observed that there are only $6$ matrix structures that have at least 12 nonzero elements when raised to the power $4$.} matrix structures (See~(\ref{Eqn_6_matrix_with_4nMDS})) of order $n=4$ from the set $\mathcal{S}'$ that can potentially be NMDS (i.e. number of nonzero elements >$12$) when they are raised to power $4$. However, all the six structures 

	\begin{equation}\label{Eqn_6_matrix_with_4nMDS}
		\begin{bmatrix}
			* & * & 0 & 0\\
			0 & 0 & * & 0\\
			0 & 0 & 0 & *\\
			* & 0 & 0 & 0
		\end{bmatrix},
		\begin{bmatrix}
			* & 0 & 0 & *\\
			* & 0 & 0 & 0\\
			0 & * & 0 & 0\\
			0 & 0 & * & 0
		\end{bmatrix},
		\begin{bmatrix}
			* & * & 0 & 0\\
			0 & 0 & 0 & *\\
			* & 0 & 0 & 0\\
			0 & 0 & * & 0
		\end{bmatrix},
		\begin{bmatrix}
			* & 0 & * & 0\\
			* & 0 & 0 & 0\\
			0 & 0 & 0 & *\\
			0 & * & 0 & 0
		\end{bmatrix},
		\begin{bmatrix}
			* & 0 & 0 & *\\
			0 & 0 & * & 0\\
			* & 0 & 0 & 0\\
			0 & * & 0 & 0
		\end{bmatrix}~\text{and}~
		\begin{bmatrix}
			* & 0 & * & 0\\
			0 & 0 & 0 & *\\
			0 & * & 0 & 0\\
			* & 0 & 0 & 0
		\end{bmatrix}
	\end{equation}
	are also permutation similar.
	Now consider the first matrix structure and let 
	\begin{equation*}
		M=\begin{bmatrix}
			a & x_1 & 0 & 0\\
			0 & 0 & x_2 & 0\\
			0 & 0 & 0 & x_3\\
			x_4 & 0 & 0 & 0
		\end{bmatrix},
	\end{equation*}
	where $a,x_1,x_2,x_3$ and $x_4$ are some nonzero elements in the field. Now consider the input vector of $M$ as $[1,ax_1^{-1},0,0]^{T}$. The resultant vector after each iteration is 
	\begin{align*}
		&\begin{bmatrix}
			1\\
			ax_1^{-1}\\
			0\\
			0
		\end{bmatrix}
		\xrightarrow[\text{i=1}]{\text{}}
		\begin{bmatrix}
			0\\
			0\\
			0\\
			x_4
		\end{bmatrix}
		\xrightarrow[\text{i=2}]{\text{}}
		\begin{bmatrix}
			0\\
			0\\
			x_3x_4\\
			0
		\end{bmatrix}
		\xrightarrow[\text{i=3}]{\text{}}
		\begin{bmatrix}
			0\\
			x_2x_3x_4\\
			0\\
			0
		\end{bmatrix}
		\xrightarrow[\text{i=4}]{\text{}}
		\begin{bmatrix}
			x_1x_2x_3x_4\\
			0\\
			0\\
			0
		\end{bmatrix}
	\end{align*}
	The sum of nonzero elements of input vector and output vector in each iteration is less than 4 i.e. branch number of $M<4$. Therefore, $M$ is not $k$-NMDS for $k\leq 4$. Hence, there does not exist any $4$-NMDS matrix of order $4$ with $\mathcal{K}=1$ over a field of characteristic 2.\qed
\end{proof}


%
From~\cite[Lemma 3]{OUR_DSI}, we can easily check that for a matrix of order $n\geq 4$ with $\mathcal{K}=1$, $|M^{k}|< n^2-n$ for $k\leq n-1$. Thus, by using Remark~\ref{Remark_1_XOR_DLS_not_nMDS} and Theorem~\ref{Th_4-nMDS_with_1_XOR}, we have the following result.

\begin{theorem}\label{Th_1_XOR_matrix_not_NMDS}
	For $n\geq 4$, there does not exist any $k$-NMDS matrix of order $n$ with $\mathcal{K}=1$ and $k\leq n$ over a field of characteristic $2$.
\end{theorem}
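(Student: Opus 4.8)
The plan is to reduce the whole statement to the nonzero-entry count via Corollary~\ref{corollary_min_nonzero_nMDS}, which says that an NMDS matrix of order $n$ must contain at least $n^2-n$ nonzero entries; hence it suffices to show that for every $M$ of order $n\geq 4$ with $\mathcal{K}=1$ and every $k\leq n$, either $|M^k|<n^2-n$ or $M^k$ fails to be NMDS for another reason already available to us. I would split into cases on the exponent and, within the last exponent, on the order.

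First I would dispose of the small exponents $k\leq n-1$. Using \cite[Lemma~3]{OUR_DSI} one checks directly that for any matrix $M$ of order $n\geq 4$ with $\mathcal{K}=1$ we have $|M^k|<n^2-n$ for all $k\leq n-1$; combined with Corollary~\ref{corollary_min_nonzero_nMDS}, this immediately rules out $M^k$ being NMDS in this range. The only exponent left is $k=n$. Here I would separate $n\geq 5$ from $n=4$. For $n\geq 5$, Remark~\ref{Remark_1_XOR_DLS_not_nMDS} gives $|M^n|\leq \frac{n(n+3)}{2}-1$, and since $\frac{n(n+3)}{2}-1<n^2-n$ exactly when $n\geq 5$, Corollary~\ref{corollary_min_nonzero_nMDS} again forbids $M^n$ from being NMDS. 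For $n=4$ the entry count is no longer conclusive, since $\frac{n(n+3)}{2}-1=13>12=n^2-n$, so in this borderline case I would invoke Theorem~\ref{Th_4-nMDS_with_1_XOR} directly, which already establishes that no matrix of order $4$ with $\mathcal{K}=1$ can be $4$-NMDS over a field of characteristic $2$.

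Assembling the cases—$k\leq n-1$ for all $n\geq 4$ by the weight bound, $k=n$ for $n\geq 5$ by Remark~\ref{Remark_1_XOR_DLS_not_nMDS}, and $k=n=4$ by Theorem~\ref{Th_4-nMDS_with_1_XOR}—covers every $k\leq n$ for every $n\geq 4$, which is the claim. The only genuinely substantive ingredient is Theorem~\ref{Th_4-nMDS_with_1_XOR}, whose proof needs the computer-assisted reduction to the six matrix shapes in~(\ref{Eqn_6_matrix_with_4nMDS}) together with the explicit low-weight input vector tracing through the iterations; everything else here is bookkeeping with the weight inequalities. So the main obstacle is precisely that order-$4$, exponent-$4$ corner, which is exactly why it is handled by a dedicated theorem rather than folded into the counting argument.
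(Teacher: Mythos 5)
Your proof is correct and follows essentially the same route as the paper: the authors likewise dispose of $k\leq n-1$ via the weight bound from \cite[Lemma~3]{OUR_DSI} together with Corollary~\ref{corollary_min_nonzero_nMDS}, handle $k=n$ for $n\geq 5$ via Remark~\ref{Remark_1_XOR_DLS_not_nMDS}, and settle the remaining $k=n=4$ corner by invoking Theorem~\ref{Th_4-nMDS_with_1_XOR}. Your identification of the order-$4$ case as the only substantive obstacle, and your verification that $\frac{n(n+3)}{2}-1<n^2-n$ precisely for $n\geq 5$, match the paper's reasoning exactly.
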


\begin{remark}
	For $n<4$, there may exist a $k$-NMDS matrix of order $n$ with $\mathcal{K}=1$ and $k\leq n$. For example, the matrix $B=
	\begin{bmatrix}
		0 & 1 & 0 \\
		0 & 0 & 1 \\
		1 & 1 & 0
	\end{bmatrix}$ is $3$-NMDS.
\end{remark}

\begin{fact}
	Over a field of characteristic $2$, a DLS matrix of order $n$ with $\mathcal{K}=1$ cannot be $k$-NMDS for $n\geq 4$ and $k\leq n$.
\end{fact}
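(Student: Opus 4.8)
The plan is to read off the claim from Theorem~\ref{Th_1_XOR_matrix_not_NMDS}. First I would observe that in any DLS matrix $M=DLS(\rho;D_1,D_2)$ the product $PD_1$ already contributes exactly $n$ nonzero entries (one per row, since $D_1$ is nonsingular and $P$ is a permutation matrix), and since $\rho$ has no fixed point these entries all lie off the main diagonal, hence in positions disjoint from the diagonal entries of $D_2$. Therefore the hypothesis $\mathcal{K}=1$ is equivalent to saying that $D_2$ has exactly one nonzero diagonal entry, so $M$ is in particular a matrix of order $n$ with fixed XOR $\mathcal{K}=1$. Applying Theorem~\ref{Th_1_XOR_matrix_not_NMDS} verbatim then shows that $M$ cannot be $k$-NMDS for $n\geq 4$ and $k\leq n$. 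Along this route there is no real obstacle, since the whole content has already been assembled in Theorem~\ref{Th_1_XOR_matrix_not_NMDS}, which combines the counting bound of Remark~\ref{Remark_1_XOR_DLS_not_nMDS} for $n\geq 5$ with the branch-number argument of Theorem~\ref{Th_4-nMDS_with_1_XOR} for $n=4$.

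\textbf{A self-contained alternative.} If one prefers an argument that stays entirely within the DLS framework, the plan is as follows. As just noted, $\mathcal{K}=1$ forces $D_2$ to have a single nonzero diagonal entry, hence (for $n\geq 2$) a zero diagonal entry, so $D_2$ is singular; Lemma~\ref{Lemma_nMDS_D2_singular_in_DLS} then rules out $k$-NMDS for all $k\leq n-2$, and Corollary~\ref{corollary_min_power_in_PD1+D2_if_P_not_full_cycle} rules it out for all $k\leq n$ whenever $\rho$ is not an $n$-cycle. This leaves only the case $\rho$ an $n$-cycle and $k\in\{n-1,n\}$. For $n\geq 5$ one finishes by the nonzero-count estimates behind Remark~\ref{Remark_1_XOR_DLS_not_nMDS} (giving $|M^{k}|<n^2-n$ for all $k\leq n$ when $\mathcal{K}=1$), so $M^k$ fails even the necessary condition of Corollary~\ref{corollary_min_nonzero_nMDS}.

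\textbf{The obstacle.} The only genuinely delicate step of the self-contained route is the residual case $n=4$, $\rho$ a $4$-cycle, $k\in\{3,4\}$: here the counting bound is too weak, so one must instead track a single input vector through the successive iterations exactly as in the proof of Theorem~\ref{Th_4-nMDS_with_1_XOR}, using that the relevant DLS matrices are among the six structures displayed in~(\ref{Eqn_6_matrix_with_4nMDS}) and that they are all permutation similar, and checking that the branch number stays strictly below $4$. This small-order case is where the real work sits; everything else is bookkeeping with the already-established DLS lemmas.
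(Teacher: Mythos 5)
Your primary route is exactly what the paper intends: the Fact is stated as an immediate specialization of Theorem~\ref{Th_1_XOR_matrix_not_NMDS} (which itself combines Remark~\ref{Remark_1_XOR_DLS_not_nMDS}, the counting bound for $k\leq n-1$, and Theorem~\ref{Th_4-nMDS_with_1_XOR}), and your observation that a DLS matrix with $\mathcal{K}=1$ is just an order-$n$ matrix with $\mathcal{K}=1$ is the only bridging step needed. Your self-contained DLS-internal alternative is a correct and reasonable elaboration, but the paper's own justification is the one-line application of the theorem, so the two approaches are essentially the same.
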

Now we will discuss some equivalence classes of DLS matrices for the construction of recursive NMDS matrices.

\subsection{Equivalence classes of DLS matrices} 
If the DLS matrix $DLS(\rho;D_1,D_2)$ of order $n$ has fixed XOR $\mathcal{K}=l$, the diagonal of $D_2$ has $l$ nonzero elements. Therefore, there are $\Comb{n}{l}$ possible arrangements for distributing these $l$ nonzero elements along the diagonal of $D_2$.

Also, in a DLS matrix $DLS(\rho;D_1,D_2)$, the permutation $\rho=[i_1,i_2,\ldots,i_n]$ must satisfy $i_k\neq k$ for $k=1,2,\ldots,n$. In other words, $\rho$ represents a derangement of a set of $n$ elements. Therefore, there are $D(n)$\footnote{The formula for $D_n$ is given by $D_n=(n-1)[D_{n-1}+D_{n-2}]$ with initial conditions $D_1=1$ and $D_0=0$. For example, the values of $D(n)$ are $1,2,9,44$, and $265$ for $n=2,3,4,5$, and $6$, respectively.} possible choices for $\rho$ in a DLS matrix, where $D(n)$ denotes the number of derangements of a set of $n$ elements~\cite{combinatorics}.

As a result, the search space for finding a recursive NMDS matrix from the DLS matrices over the field $\mathbb{F}_{2^r}$ is  given by $D(n)\cdot \Comb{n}{l}\cdot (2^r)^{(n+l)}$. For example, the search space for finding a $6$-NMDS matrix from a DLS matrix of order $6$, with $\mathcal{K}=3$, over the field $\mathbb{F}_{2^4}$ is $265\cdot 20\cdot 2^{36}\approx 2^{48}$.

However, we have drastically reduced the search space by defining some equivalence classes of DLS matrices. 

\begin{theorem}\label{th_DLS_MDS_iff_P+D2_NMDS}
	Let $a,a_1,a_2,\ldots,a_n,a_1',a_2',\ldots,a_n'\in \FF_{2^r}^*$ such that $a=\prod_{i=1}^{n}{a_i}=\prod_{i=1}^{n}{a_i'}$. Then for any diagonal matrix $D_2$ over $\mathbb{F}_{2^r}$, the DLS matrix $M=DLS(\rho;D_1,D_2)$ of order $n$ is $k$-NMDS if and only if $M'=DLS(\rho;D_1',D_2)$ is $k$-NMDS, where $k\in \set{n-1,n}$, $D_1=diag(a_1,a_2,\ldots,a_n)$ and $D_1'=diag(\splitatcommas{a_1',a_2',\ldots,a_n'})$.
\end{theorem}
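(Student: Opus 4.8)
The plan is to realize $M'$ as a diagonal conjugate of $M$ whenever such a conjugation exists, and to dispose of the remaining case by invoking Corollary~\ref{corollary_min_power_in_PD1+D2_if_P_not_full_cycle}. So I would first split on whether $\rho$ is an $n$-cycle. If $\rho$ is \emph{not} an $n$-cycle, Corollary~\ref{corollary_min_power_in_PD1+D2_if_P_not_full_cycle} says that neither $M=DLS(\rho;D_1,D_2)$ nor $M'=DLS(\rho;D_1',D_2)$ is $k$-NMDS for any $k\leq n$; in particular, for $k\in\set{n-1,n}$ both statements ``$M$ is $k$-NMDS'' and ``$M'$ is $k$-NMDS'' are false, so the biconditional holds vacuously. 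This reduces everything to the case where $\rho$ is an $n$-cycle.

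When $\rho$ is an $n$-cycle, the goal is to produce a nonsingular diagonal matrix $E$ with $EME^{-1}=M'$; then applying Corollary~\ref{corollary_DAD^(-1)_MDS} once with $D=E$ and once with $D=E^{-1}$ yields the claimed equivalence of the $k$-NMDS property for $k\in\set{n-1,n}$ (in fact for every $k$). Write $M=PD_1+D_2$ with $P$ the permutation matrix of $\rho$. Since $E$ and $D_2$ are both diagonal they commute, so $EME^{-1}=EPD_1E^{-1}+D_2$, and it suffices to arrange $EPD_1E^{-1}=PD_1'$. Using Lemma~\ref{lemma_DP=PD_1} to write $EP=PE_1$ for the diagonal matrix $E_1=\mathrm{diag}(e_{\rho^{-1}(1)},\ldots,e_{\rho^{-1}(n)})$, this collapses to the scalar system $e_{\rho^{-1}(j)}\,a_j\,e_j^{-1}=a_j'$ for $j=1,\ldots,n$, i.e.\ $e_{\rho^{-1}(j)}=(a_j'a_j^{-1})\,e_j$. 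Because $\rho$ is a single $n$-cycle, one fixes any index, sets its $e$-value to $1$, and propagates around the cycle; each value so produced lies in $\FF_{2^r}^*$ since all $a_j,a_j'$ are nonzero. Traversing the cycle once, the only consistency condition that arises is $\prod_{i=1}^{n}(a_i'a_i^{-1})=1$, which is equivalent to $\prod_{i=1}^n a_i'=\prod_{i=1}^n a_i$ — exactly the hypothesis $a=\prod_{i=1}^n a_i=\prod_{i=1}^n a_i'$. Hence a valid $E$ exists, $EME^{-1}=M'$, and we are done.

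The step I expect to be the main obstacle — and the place where the $n$-cycle hypothesis earns its keep — is precisely this cyclic consistency condition. If $\rho$ split into several disjoint cycles, solvability of the conjugation system would force the product of the $a_i$ over \emph{each} cycle to equal the product of the $a_i'$ over that same cycle, which the global equality $\prod a_i=\prod a_i'$ does not guarantee; so a naive ``conjugate by a diagonal matrix'' argument cannot work in general. The point that unblocks the proof is the observation that the multi-cycle case never needs the conjugation argument at all, since Corollary~\ref{corollary_min_power_in_PD1+D2_if_P_not_full_cycle} already rules out $k$-NMDS for $k\leq n$ in that case, making both sides of the biconditional false. The rest is routine: verifying that $E$ and $D_2$ commute, checking the identity $EPD_1E^{-1}=PD_1'$ entrywise, and noting that $E$ is nonsingular because its entries are units.
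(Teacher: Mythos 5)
Your proposal is correct and follows essentially the same route as the paper's proof: reduce to the case where $\rho$ is an $n$-cycle via Corollary~\ref{corollary_min_power_in_PD1+D2_if_P_not_full_cycle}, realize $M'$ as a diagonal conjugate $DMD^{-1}$ of $M$ by solving the resulting scalar system around the cycle (with the hypothesis $\prod_{i=1}^{n} a_i=\prod_{i=1}^{n} a_i'$ serving as exactly the cyclic consistency condition), and conclude with Corollary~\ref{corollary_DAD^(-1)_MDS}. Your explicit up-front case split, which makes the non-$n$-cycle case vacuous, is just a slightly cleaner packaging of the same argument the paper makes implicitly mid-proof.
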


\begin{proof}
	Suppose $\rho=~ [\splitatcommas{i_1,~i_2,~i_3,~\ldots~,i_n}]$ and $P$ is the permutation matrix corresponding to $\rho$.
	
	Now for any nonsingular diagonal matrix $D=diag(d_1,d_2,\ldots,d_n)$, we have	
	\begin{align*}
	DMD^{-1} &= D(PD_1+D_2)D^{-1} = DPD_1D^{-1}+D_2.
	\end{align*}
	Now by Lemma~\ref{lemma_DP=PD_1}, we have $DP=PD'$ where $D'=diag(d_{i_1},d_{i_2},\ldots,d_{i_n})$. Thus, we have	
	$$DMD^{-1} = P(D'D_{1}D^{-1})+D_2.$$
	If $D'D_{1}D^{-1}=D_1'$, then we have 
	\begin{equation}\label{Eqn_D(aM)D^{-1}=P+D_2}
	\begin{aligned}
	DMD^{-1} &= PD_1'+D_2=M'.
	\end{aligned}
	\end{equation}	
	Now if $D'D_{1}D^{-1}=D_1'$, we have $D'D_{1}=D_1'D$. Therefore, we have	
	\begin{equation}\label{Eqn_a=a1a2...an}
	\left.
	\begin{array}{ll}
	a_1d_{i_1} & = a_1'd_1\\
	a_2d_{i_2} & = a_2'd_2\\
	&\vdots\\
	a_nd_{i_n} & = a_n'd_n
	\end{array}
	\right \}
	\quad ~\implies ~\quad 
	\left \{\
	\begin{array}{ll}
	d_1 & =a_1d_{i_1}(a_1')^{-1}\\
	d_2 & =a_2d_{i_2}(a_2')^{-1}\\
	&\vdots\\
	d_n & =a_nd_{i_n}(a_n')^{-1}.
	\end{array}
	\right.
	\end{equation}
	From Corollary~\ref{corollary_min_power_in_PD1+D2_if_P_not_full_cycle}, we know that a DLS matrix of order $n$ can be $k$-NMDS, with $k\leq n$, only when $\rho$ is $n$-cycle. Thus, since $\rho$ is $n$-cycle, we have 
	$$a_1a_2\ldots a_n=a_1'a_2'\ldots a_n'=a.$$
	Now choosing $d_1=1$, from Equation~\ref{Eqn_a=a1a2...an}, we get the values of other $d_j$'s in terms of $a_i$'s and $a_i'$'s for $j=2,3,\ldots,n$.
	Also, from Corollary~\ref{corollary_DAD^(-1)_MDS} and Equation~\ref{Eqn_D(aM)D^{-1}=P+D_2}, we can say that $M$ is $k$-NMDS if and only if $M^{'}$ is $k$-NMDS.\qed
\end{proof}

\begin{corollary}~\label{corollary_equivalence_DLS_NMDS}
	Let $a=\prod_{i=1}^{n}{a_i}$ for $a_1,a_2,\ldots,a_n \in \FF_{2^r}^*$. Then for any diagonal matrix $D_2$ over $\mathbb{F}_{2^r}$, the DLS matrix $M=DLS(\rho;D_1,D_2)$ of order $n$ is $k$-NMDS if and only if $M^{'}=DLS(\rho;D_1^{'},D_2)$ is $k$-NMDS, where $k\in \set{n-1,n}$, $D_1=diag(a_1,a_2,\ldots,a_n)$ and $D_1^{'}=diag(a,1,1,\ldots,1)$.
\end{corollary}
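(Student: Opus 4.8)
The plan is to obtain this corollary as the special case of Theorem~\ref{th_DLS_MDS_iff_P+D2_NMDS} in which the second tuple is chosen to be the sparsest representative of its multiplicative class. Concretely, I would set $a_1'=a$ and $a_2'=a_3'=\cdots=a_n'=1$, so that $D_1'=diag(a,1,1,\ldots,1)$. The only thing to verify before invoking the theorem is its hypothesis that the two tuples have equal product: indeed $\prod_{i=1}^{n}a_i'=a\cdot 1\cdots 1=a=\prod_{i=1}^{n}a_i$ by the definition of $a$, so $a=\prod_{i=1}^{n}a_i=\prod_{i=1}^{n}a_i'$ holds.

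With this hypothesis verified, Theorem~\ref{th_DLS_MDS_iff_P+D2_NMDS} applies verbatim: for every $k\in\set{n-1,n}$ and every diagonal matrix $D_2$ over $\FF_{2^r}$, the matrix $M=DLS(\rho;D_1,D_2)$ is $k$-NMDS if and only if $M'=DLS(\rho;D_1',D_2)$ is $k$-NMDS. This is exactly the claimed statement, so the proof is just this substitution followed by the citation of the theorem.

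I do not expect any genuine obstacle, since all the real work — building the diagonal conjugator $D$ from the linear system in the proof of Theorem~\ref{th_DLS_MDS_iff_P+D2_NMDS} and then applying Corollary~\ref{corollary_DAD^(-1)_MDS} — has already been done there. The only point worth a sentence is the boundary case where $\rho$ is not an $n$-cycle: by Corollary~\ref{corollary_min_power_in_PD1+D2_if_P_not_full_cycle} neither $M$ nor $M'$ can be $k$-NMDS for $k\leq n$, so both sides of the equivalence are false and the statement holds vacuously, while in the remaining case ($\rho$ an $n$-cycle) the condition $a_1a_2\cdots a_n=a$ is precisely what makes the conjugation argument of Theorem~\ref{th_DLS_MDS_iff_P+D2_NMDS} go through. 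Hence the corollary follows with essentially no additional computation.
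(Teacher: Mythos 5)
Your proposal is correct and matches the paper's treatment: the corollary is stated there as an immediate specialization of Theorem~\ref{th_DLS_MDS_iff_P+D2_NMDS} obtained by taking $a_1'=a$ and $a_2'=\cdots=a_n'=1$, exactly as you do. Your added remark about the vacuous case when $\rho$ is not an $n$-cycle is a harmless (and accurate) extra observation.
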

\begin{remark}
	For any $c\in \FF_{2^r}^*$, $M$ is $k$-NMDS implies $cM$ is also $k$-NMDS. Thus, if $\rho$ is an $n$-cycle permutation, $M= DLS(\rho;D_1,D_2)$ is diagonal similar to $M'=DLS(\rho;D_1',D_2')$, where $D_1=diag(a_1,a_2,\ldots,a_n)$, $D_1'=diag(c^n a,1,1,\ldots,1)$, $D_2'=c\cdot D_2$ and $a=\prod_{i=1}^{n}{a_i}$. We know that $x\mapsto x^{2^l}$ is an isomorphism over $\FF_{2^r}$. So when $n=2^l$, there exist an element $c=a^{-1/n} \in \FF_{2^r}^*$. Hence, when $n=2^l$, we can say that $M$ is diagonal similar to $M''=DLS(\rho;D_1'', D_2'')$, where $D_1''=diag(1,1,1,\ldots,1)$ and $D_2''$ is some diagonal matrix. Therefore, for $k\in \set{n-1,n}$, $M$ is $k$-NMDS if and only if $M''$ is also $k$-NMDS.
\end{remark}
We know that a permutation similar to an NMDS matrix is again an NMDS matrix. Thus, we can reduce the search space further by eliminating the permutation similar matrices from the search space. For this, we need the following lemma.

\begin{lemma}~\label{Lemma_single_n_cycle_for_DLS}
	Let $M_1=DLS(\rho_1;D_1,D_2)$ be a DLS matrix of order $n$ and $\rho_2 \in S_n$ is conjugate with $\rho_1$, then $M_1$ is $k$-NMDS if and only if $M_2=DLS(\rho_2;D_1^{'},D_2^{'})$ is $k$-NMDS, where $D_1^{'}$ and $D_2^{'}$ are some diagonal matrices.
\end{lemma}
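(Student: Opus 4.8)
The plan is to realize the conjugation $\rho_2=\sigma\rho_1\sigma^{-1}$ at the level of matrices and to observe that it carries a DLS matrix built from $\rho_1$ to a DLS matrix built from $\rho_2$, after which Corollary~\ref{corollary_PAP^(-1)_MDS} (equivalently Fact~\ref{Fact_perm_similar_mds_nmds}) immediately gives the claimed equivalence.

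First I would fix $\sigma\in S_n$ with $\rho_2=\sigma\rho_1\sigma^{-1}$, choosing $\sigma$ (rather than $\sigma^{-1}$) so that, under the row-permuted-identity convention of this paper, the permutation matrices satisfy $P_2=P_\sigma P_1 P_\sigma^{-1}$. Writing $M_1=P_1D_1+D_2$ as in Definition~\ref{def_DLS_matrix_NMDS}, I compute
\[
P_\sigma M_1 P_\sigma^{-1}=P_\sigma P_1 D_1 P_\sigma^{-1}+P_\sigma D_2 P_\sigma^{-1}.
\]
The second summand is again diagonal, since conjugating a diagonal matrix by a permutation matrix merely permutes its diagonal entries; set $D_2':=P_\sigma D_2 P_\sigma^{-1}$, which has the same number of nonzero diagonal entries as $D_2$. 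For the first summand, apply Lemma~\ref{lemma_DP=PD_1} to write $D_1P_\sigma^{-1}=P_\sigma^{-1}D_1'$ for some diagonal $D_1'$ with the same number of nonzero entries as $D_1$, hence nonsingular; then
\[
P_\sigma P_1 D_1 P_\sigma^{-1}=P_\sigma P_1 P_\sigma^{-1} D_1'=P_2 D_1',
\]
so that $P_\sigma M_1 P_\sigma^{-1}=P_2D_1'+D_2'$.

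Next I would check that this is a genuine DLS matrix, i.e.\ that $\rho_2$ is admissible as a DLS permutation: conjugate permutations share the same cycle type, and $\rho_1$ has no fixed points (being the permutation of the DLS matrix $M_1$), so $\rho_2$ has no fixed points either. Hence $M_2:=P_2D_1'+D_2'=DLS(\rho_2;D_1',D_2')$ with $D_1'$ nonsingular diagonal and $D_2'$ diagonal, exactly as required. Since $M_2=P_\sigma M_1 P_\sigma^{-1}$ is permutation similar to $M_1$, Corollary~\ref{corollary_PAP^(-1)_MDS} yields that $M_1$ is $k$-NMDS if and only if $M_2$ is $k$-NMDS.

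The only genuinely delicate point is the permutation-matrix bookkeeping: under the row-permuted-identity convention one must verify that $P_{\sigma\rho_1\sigma^{-1}}$ equals $P_\sigma P_{\rho_1}P_\sigma^{-1}$ (and not its transpose or the reversed product), which is what dictates whether one conjugates by $\sigma$ or by $\sigma^{-1}$. Everything else is the routine ``conjugation by a permutation permutes diagonal entries'' computation, already packaged in Lemma~\ref{lemma_DP=PD_1}.
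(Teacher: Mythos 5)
Your proposal is correct and follows essentially the same route as the paper's proof: conjugate $M_1=P_1D_1+D_2$ by the permutation matrix of $\sigma$, use Lemma~\ref{lemma_DP=PD_1} to push the diagonal factors past $P_\sigma^{-1}$, identify $P_\sigma P_1P_\sigma^{-1}$ with $P_2$, and invoke Corollary~\ref{corollary_PAP^(-1)_MDS}. The only differences are that you additionally verify the conjugate really is a DLS matrix (no fixed points, $D_1'$ nonsingular) and flag the orientation of the conjugation under the row-permutation convention — details the paper leaves implicit.
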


\begin{proof}
	Since $\rho_1$ and $\rho_2$ are conjugate, we have $\sigma \rho_1 \sigma^{-1}=\rho_2$,
	for some $\sigma \in S_n$. Let $P_1,P_2$ and $P$ be the permutation matrices related to $\rho_1,\rho_2$ and $\sigma$ respectively. Then we have 
	\begin{align*}
		PM_1P^{-1}&= P(P_1D_1+D_2)P^{-1} = PP_1D_1P^{-1}+PD_2P^{-1}\\
				  &= PP_1P^{-1}D_1^{'}+PP^{-1}D_2^{'}, 
	\end{align*}
	where $D_1P^{-1}=P^{-1}D_1^{'}$ and $D_2P^{-1}=P^{-1}D_2^{'}$ for some diagonal matrices $D_1^{'}$ and $D_2^{'}$. Thus, we have
	$PM_1P^{-1}= P_2D_1^{'}+D_2^{'}= M_2.$
	Since $PM_1P^{-1}=M_2$, from Corollary~\ref{corollary_PAP^(-1)_MDS} we can say that $M_1$ is $k$-NMDS if and only if $M_2$ is $k$-NMDS.\qed
\end{proof}

\begin{remark}~\label{Remark_single_n_cycle_for_DLS}
	If $D_2$ is singular, a DLS matrix $M=DLS(\rho_1;D_1,D_2)$ of order $n$ cannot be $k$-NMDS for $k\leq n-2$. Also, $\rho$ must be an $n$-cycle for $M$ to be $k$-NMDS with $k=n-1$ or $k=n$. In addition, the $n$-cycles in $S_n$ are conjugate with each other. Therefore, to find the $k$-NMDS (with $k = n -1$ and $k = n$) DLS matrices, we need to check only for the DLS matrices associated with one fixed $n$-cycle $\rho$.
\end{remark}
Now consider $\mathbb{D}(n,\mathbb{F}_{2^r})$ to be the set of all DLS matrices $DLS(\rho;D_1,D_2)$ of order $n$, with $\mathcal{K}=\ceil {\frac{n}{2}}$, over the field $\mathbb{F}_{2^r}$ and define
\begin{align*}
	\mathbb{D}^{'}(n,\mathbb{F}_{2^r})&= \{ B\in \mathbb{D}(n,\mathbb{F}_{2^r}):~B=P^{'}D_1^{'}+D_2^{'} \}, 
\end{align*}
where $P^{'}$ is the permutation matrix related to the $n$ length cycle~$[2,3,4,\ldots, n-1,n,1]$\footnote{By Remark~\ref{Remark_single_n_cycle_for_DLS}, any $n$ length cycle can be chosen for the set $\mathbb{D}^{'}(n,\mathbb{F}_{2^r})$.} and $D_1^{'}=diag(a,1,1,\ldots,1)$. 

Thus, to find the $k$-NMDS (with $k = n -1$ and $k = n$) DLS matrices over $\mathbb{F}_{2^r}$, we need to check only for the DLS matrices in the set $\mathbb{D}^{'}(n,\mathbb{F}_{2^r})$.

\begin{remark}
	From the discussion of~\cite[Section~3.3]{Rec_MDS_2022}, we know that if $\rho=[\splitatcommas{2,3,4,\ldots, n-1,n,1}]$ and $D_2$ has any two consecutive zero entries, then the DLS matrix of order $n$ cannot be $n$-MDS. However, this result is not true for NMDS matrices. For example, consider the DLS matrix $B=DLS(\rho;D_1,D_2)$ of order $4$ over $\FF_{2^4}$, where $\rho=[2, 3, 4, 1]$, $D_1=diag(1,1,1,1)$, $D_2=diag(1,\alpha,0,0)$ and $\alpha$ is a primitive element of $\FF_{2^4}$ with $\alpha^4+\alpha+1=0$. Then it can be checked that $B$ is $4$-NMDS.
\end{remark}
Thus, for finding $k$-NMDS DLS matrices with $k\in \set{n-1,n}$ and $\mathcal{K}=l$, we need to check for all the $\Comb{n}{l}$ arrangements for the $l$ nonzero elements in the diagonal of $D_2$. Thus, the search space for finding $k$-NMDS DLS matrices, with $k\in \set{n-1,n}$, over the field $\mathbb{F}_{2^r}$ has been reduced from $D(n)\cdot \Comb{n}{l}\cdot (2^r)^{(n+l)}$ to $\Comb{n}{l}\cdot (2^r)^{(1+l)}$. Then, by exhaustive search in the restricted domain, we have the results for the existence of $k$-NMDS DLS matrices over $\FF_{2^4}$ and $\FF_{2^8}$ for $n= 4, 5, 6, 7, 8$ listed in Table~\ref{table_search_all_k-nMDS}.


\begin{table}
	\begin{center}
		\begin{minipage}{\textwidth}
		\caption{$k$-NMDS DLS matrix of order $n$ over the field $\mathbb{F}_{2^r}$ with $k=n-1$ and $k=n$~(``\textbf{DNE}'' stands for does not exist).	\label{table_search_all_k-nMDS}}
		\begin{tabular}{ |c|c||c|c|c|c|c|c| }
			\hline
			& & \multicolumn{2}{c|}{$\mathcal{K}=2$} & \multicolumn{2}{c|}{$\mathcal{K}=3$} & \multicolumn{2}{c|}{$\mathcal{K}=4$} \\ \cline{3-8}
			Order $n$ & ~~$k$~~ & over $\mathbb{F}_{2^4}$ & over  $\mathbb{F}_{2^8}$ & over $\mathbb{F}_{2^4}$ & over $\mathbb{F}_{2^8}$ & over $\mathbb{F}_{2^4}$ & over $\mathbb{F}_{2^8}$ \\ \hline
			4 & 3 & Exists & Exists & -- & -- & -- & -- \\ 
			  & 4 & Exists & Exists & -- & -- & -- & -- \\ \hline
			  5 & 4 & DNE & DNE & Exists & Exists & -- & -- \\ 
			  & 5 & DNE & DNE & Exists & Exists & -- & -- \\ \hline
			  6 & 5 & DNE & DNE & Exists & Exists & -- & -- \\ 
			  & 6 & DNE & DNE & Exists & Exists & -- & -- \\ \hline
			  7 & 6 & DNE & DNE & DNE & * & Exists & Exists \\ 
			  & 7 & DNE & DNE & DNE & * & Exists & Exists \\ \hline
			  8 & 7 & DNE & DNE & DNE & DNE & DNE & Exists \\ 
			  & 8 & DNE & DNE & DNE & DNE & DNE & Exists \\ \hline
		\end{tabular}
		\vskip3pt
		$^*$ Over $\mathbb{F}_{2^8}$, we are unable to make a decision for $n = 7$ with $\mathcal{K} = 3$ since we were unable to perform an exhaustive search even in the restricted domain.
	\end{minipage}
	\end{center}
\end{table}

\section{Construction of Recursive NMDS matrices from GDLS matrices}\label{Section_Rec_NMDS_GDLS}
In this section, we present some lightweight recursive NMDS matrices of orders $4, 5, 6, 7$, and $8$ from the  GDLS matrices, introduced in~\cite{Rec_MDS_2022}. 

\begin{definition}\cite{Rec_MDS_2022}\label{def_GDLS_matrix_NMDS}
	Let $\rho_1=~ [i_1,~i_2,~i_3,~\ldots~,i_n]$ and $\rho_2=~ [j_1,~j_2,~j_3,~\ldots~,j_n]$ be two permutations such that $i_k\neq j_k$ for $k=1,2,\ldots,n$,  $D_1$ be a nonsingular diagonal matrix and $D_2$ be a diagonal matrix (may be singular). Then we will call the matrix 
		$$B=P_1D_1+P_2D_2$$
	as the generalized DLS (GDLS) matrix, where $P_1$ and $P_2$ are the permutation matrices of order $n$ related to the permutation $\rho_1$ and $\rho_2$ respectively. We will denote these matrices as $GDLS(\rho_1,\rho_2;D_1,D_2)$.
\end{definition}
From Lemma~\ref{Lemma_nMDS_D2_singular_in_DLS}, we know that if $D_2$ is singular, a DLS matrix $DLS(\rho;D_1,D_2)$ can never be $k$-NMDS for $k\leq n-2$. However, this result is not applicable to GDLS matrices. For instance, the GDLS matrix $M=GDLS(\rho_1,\rho_2;D_1,D_2)$ of order $7$ with $\rho_1=[6,7,4,5,2,3,1],~\rho_2=[3,2,1,4,7,6,5],~D_1=diag(\splitatcommas{1, 1, 1, 1, 1, 1, \alpha})$ and $D_2=diag(1, 0, \alpha^2, 0, \alpha, 0, \alpha^2)$ is 5-NMDS, where $\alpha$ is a primitive element of $\mathbb{F}_{2^4}$ with $\alpha^4+\alpha+1=0$.

Since GDLS matrices have the potential to generate NMDS matrices with fewer iterations, we select them for constructing recursive NMDS matrices. To find recursive NMDS matrix, we begin with $k=n-2$, and if this does not yield a result, we increase the value of $k$.

From the definition of GDLS matrices, it can be observed that the size of the set of all GDLS matrices with $\mathcal{K}=l$ over the field $\mathbb{F}_{2^r}$ is $n!\cdot D(n)\cdot \Comb{n}{l}\cdot (2^r)^{(n+l)}$, where $D(n)$ represents the number of derangements for $n$ distinct objects. This size is extremely large, making an exhaustive search impractical for obtaining a $k$-NMDS matrix of order $n\geq 5$ from the GDLS matrices.

To minimize the search space, in most cases, we arbitrarily select $\rho_1$ as the $n$-cycle $[\splitatcommas{n, 1, 2, \ldots, n-1}]$. However, it is important to note that there is no inherent advantage in choosing $\rho_1=[n, 1, 2, \ldots, n-1]$ for obtaining a recursive NMDS matrix. If we change $\rho_1=[n, 1, 2, \ldots, n-1]$ to any permutation from $S_n$, there is still a possibility of obtaining a recursive NMDS matrix.

Also, to find lightweight recursive NMDS matrices, we looked through the GDLS matrices of order $n$ with $\mathcal{K}=\ceil{\frac{n}{2}}$ whose entries are from the set $\{\splitatcommas{1,\alpha,\alpha^{-1},\alpha^{2},\alpha^{-2}}\}$, where $\alpha$ is a primitive element and a root of the constructing polynomial of the field $\FF_{2^r}$. The search space for finding $k$-NMDS matrices of order $n \geq 5$ remains large, even when considering the set $\{\splitatcommas{1,\alpha,\alpha^{-1},\alpha^{2},\alpha^{-2}}\}$. Therefore, to obtain $k$-NMDS matrices of order $n=5,6,7,8$, we conduct a random search. 

Also note that the implementation costs of the matrices presented in this section over a field are calculated by referring to the s-XOR count value of the corresponding field elements as provided in table of~\cite[App. B]{DSI}.

\subsection{Construction of $4\times 4$ Recursive NMDS matrices}\label{Section_4-near-MDS_GDLS}
In this section, we propose a GDLS matrix $B$ of order $4$ that yields a recursive NMDS matrix over the field $\mathbb{F}_{2^r}$ for $r\geq 1$. Based on Theorem~\ref{Th_1_XOR_matrix_not_NMDS}, it is known that there are no $k$-NMDS matrices of order $4$ with $\mathcal{K}=1$ and $k\leq 4$ over a field of characteristic 2. Therefore, to obtain recursive NMDS matrices of order 4, we must choose $\mathcal{K}\geq 2$.
The proposed GDLS matrix is constructed by the permutations $\rho_1= [2, 3, 4, 1],~\rho_2= [1, 2, 3, 4]$ and diagonal matrices $D_1=diag(1,1,1,1)$, $D_2=diag(0,1,0,1)$.

\begin{equation}\label{Eqn_4-nMDS_over_field_F_2^r}
	\begin{aligned}
		B&= GDLS(\rho_1,\rho_2;D_1,D_2)=
		\begin{bmatrix}
			0 & 0 & 0 & 1 \\
			1 & 1 & 0 & 0 \\
			0 & 1 & 0 & 0 \\
			0 & 0 & 1 & 1
		\end{bmatrix}
	\end{aligned}
\end{equation}
The matrix $B$ is a $3$-NMDS matrix with a XOR count of $2\cdot r=2r$ over the field $\mathbb{F}_{2^r}$.

\begin{lemma}\label{Lemma_lowest_XOR_4nMDS_8r}
	For $k$-NMDS matrix of orders $4$ with $k\leq 4$, the lowest XOR count is $2r$ over the field $\mathbb{F}_{2^r}$.
\end{lemma}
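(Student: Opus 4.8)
The plan is to combine a lower bound on the fixed XOR $\mathcal{K}$ of any $k$-NMDS matrix $B$ of order $4$ with $k\le 4$, together with the explicit matrix already produced in \eqref{Eqn_4-nMDS_over_field_F_2^r}, which attains that bound.

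First I would argue that such a $B$ must satisfy $\mathcal{K}\ge 2$. Recall that $\mathcal{K}=\sum_{i=1}^{4}(k_i-1)\ge 0$, where $k_i$ is the number of nonzero entries in row $i$, and that $XOR(B)=\sum_{i,j}XOR((B)_{i,j})+\mathcal{K}\cdot r$. If $\mathcal{K}=0$, then $\sum_i k_i=4$, so either some row (or, by the same count, some column) is identically zero, or every row has exactly one nonzero entry. In the first case $B$ is not recursive NMDS, as observed at the start of the proof of Theorem~\ref{Th_4-nMDS_with_1_XOR}; in the second case $B=PD$ for a permutation matrix $P$ and a diagonal matrix $D$, whence Lemma~\ref{lemma_DP=PD_1} gives $B^k=P'D'$ with at most $4$ nonzero entries, so $B^k$ is not NMDS by Corollary~\ref{corollary_min_nonzero_nMDS}. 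If $\mathcal{K}=1$, then Theorem~\ref{Th_1_XOR_matrix_not_NMDS} (applied with $n=4$) already shows $B$ is not $k$-NMDS for $k\le 4$. Hence $\mathcal{K}\ge 2$.

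Since every $XOR((B)_{i,j})\ge 0$, the formula for $XOR(B)$ then gives $XOR(B)\ge \mathcal{K}\cdot r\ge 2r$, which is the desired lower bound. For the matching upper bound I would point to the matrix $B=GDLS(\rho_1,\rho_2;D_1,D_2)$ of \eqref{Eqn_4-nMDS_over_field_F_2^r}: it is $3$-NMDS, hence $k$-NMDS with $k=3\le 4$, and all of its nonzero entries equal $1$, so $XOR(B)=0+\mathcal{K}\cdot r=2r$. Combining the two bounds shows that the minimum XOR count over all $k$-NMDS matrices of order $4$ with $k\le 4$ is exactly $2r$.

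There is no genuine difficulty here: the whole argument reduces to the already-proved non-existence of $k$-NMDS matrices of order $4$ with $\mathcal{K}\le 1$ and $k\le 4$, plus the witness already constructed in this section. The only point needing a little care is the split of the $\mathcal{K}=0$ case into ``$B$ has an all-zero row or column'' and ``$B=PD$''.
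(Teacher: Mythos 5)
Your proposal is correct and follows essentially the same route as the paper: the paper's proof likewise reduces to the non-existence of $k$-NMDS matrices of order $4$ with $\mathcal{K}=1$ (via Remark~\ref{Remark_1_XOR_DLS_not_nMDS} and Theorem~\ref{Th_4-nMDS_with_1_XOR}), concludes $\mathcal{K}\ge 2$ and hence $XOR(M)\ge 2r$, with the matrix of \eqref{Eqn_4-nMDS_over_field_F_2^r} serving as the witness. Your explicit treatment of the $\mathcal{K}=0$ case is a small completeness improvement over the paper's proof, which passes over it silently.
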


\begin{proof}
	From Remark~\ref{Remark_1_XOR_DLS_not_nMDS} and Theorem~\ref{Th_4-nMDS_with_1_XOR}, we know that any matrix $M$ of order $4$ with $\mathcal{K}=1$ cannot be $k$-NMDS for $k\leq 4$. Hence, we must have $\mathcal{K} \geq 2$. Therefore, we have $XOR(M)\geq 2\cdot r$ over the field $\mathbb{F}_{2^r}$. \qed
\end{proof}
\begin{remark}\label{Remark_lowest_XOR_4_rec_NMDS}
	For $k\leq 4$, the proposed matrix $B$ in (\ref{Eqn_4-nMDS_over_field_F_2^r}) has the lowest XOR count among the $k$-NMDS matrices of order 4 over the field $\mathbb{F}_{2^r}$ for $r\geq 1$.
\end{remark}
\subsection{Construction of $5\times 5$ Recursive NMDS matrices}\label{Section_5-near-MDS_GDLS}
This section presents two GDLS matrices, $A_1$ and $A_2$, of order $5$ that give NMDS matrices when raised to power $4$ and $5$, respectively, over the field $\mathbb{F}_{2^4}$.
We also looked for GDLS matrices $M$ of order $5$ such that $M^k$ is NMDS for $k\leq n-2$ and $\mathcal{K}=3$, but we were unable to find any over $\FF_{2^4}$.
Consider the GDLS matrices $A_1$ and $A_2$ of order $5$ which are constructed as follows:
\begin{enumerate}[(i)]
	\item $A_1$:~$\rho_1=[5, 1, 2, 3, 4], \rho_2=[3, 2, 5, 4, 1]$, $D_1=diag(1,1,1,1,1)$ and $D_2=diag(0, \alpha, 0, 1, \alpha^{-1})$
	\item $A_2$:~ $\rho_1=[5, 1, 2, 3, 4], \rho_2=[3, 4, 5, 1, 2]$, $D_1=diag(1,1,1,1,1)$ and $D_2=diag(0, 1, 0, 1, \alpha)$
\end{enumerate}
\begin{equation}\label{Equ_5-nMDS_over_field_F_2^4}
	\begin{aligned}
		A_1&=
		\begin{bmatrix}
			0 & 1 & 0 & 0 & \alpha^{-1} \\
			0 & \alpha & 1 & 0 & 0 \\
			0 & 0 & 0 & 1 & 0 \\
			0 & 0 & 0 & 1 & 1 \\
			1 & 0 & 0 & 0 & 0
		\end{bmatrix}~~
		A_2=
		\begin{bmatrix}
			0 & 1 & 0 & 1 & 0 \\
			0 & 0 & 1 & 0 & \alpha \\
			0 & 0 & 0 & 1 & 0 \\
			0 & 1 & 0 & 0 & 1 \\
			1 & 0 & 0 & 0 & 0
		\end{bmatrix},
	\end{aligned}
\end{equation}
where $\alpha$ is a primitive element of~$\FF_{2^4}$ and a root of $x^4+x+1$. It is easy to verify that the matrix $A_1$ is a $4$-NMDS matrix with a XOR count of $(1+1)+3\cdot 4=14$ and $A_2$ is a $5$-NMDS matrix with a XOR count of $1+3\cdot 4=13$.

\noindent In Lemma~\ref{Lemma_lowest_XOR_count_5-nMDS}, we discuss the lowest XOR count of recursive NMDS matrices of order $n\geq 5$. For this, we need the following result from~\cite{Choy_2008}.

\begin{theorem}\cite{Choy_2008}\label{Th_max_branch_number_binary_matrix}
	A matrix of order $n$, with 0 and 1 as entries, has a maximum branch number of $\frac{2n+4}{3}$.
\end{theorem}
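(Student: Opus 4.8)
The plan is to prove the statement in the form of an upper bound: I would show that $\beta_d(M)\le \frac{2n+4}{3}$ for every $M\in M_n(\mathbb{F}_2)$, which also settles $\beta_l$ since $\beta_l(M)=\beta_d(M^T)$ and $M^T$ is again a binary matrix of order $n$. I may assume $n\ge 3$ (for $n\le 2$ one checks by hand that the branch number is at most $2$) and also $\beta:=\beta_d(M)\ge 3$, otherwise there is nothing to prove. Let $c_1,\dots,c_n\in\mathbb{F}_2^n$ be the columns of $M$, with supports $C_1,\dots,C_n\subseteq\{1,\dots,n\}$. The whole argument runs on two elementary consequences of the definition of $\beta_d$: taking $x=e_i$ (the $i$-th standard basis vector) gives $w(c_i)\ge\beta-1$ for every $i$, and taking $x=e_i+e_j$ gives $w(c_i+c_j)\ge\beta-2$ for all $i\ne j$.

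The core step is a "three vectors summing to zero" observation applied to triples of columns. Fix distinct indices $i,j,k$ and set $u_1=c_i+c_j$, $u_2=c_j+c_k$, $u_3=c_i+c_k$, so that $u_1+u_2+u_3=\mathbf{0}$ over $\mathbb{F}_2$. At any coordinate $r$, the triple $(u_1(r),u_2(r),u_3(r))$ equals $(0,0,0)$ exactly when $c_i(r)=c_j(r)=c_k(r)$, and otherwise has precisely two entries equal to $1$; hence
\[
w(u_1)+w(u_2)+w(u_3)\;=\;2\bigl(n-N_{ijk}\bigr),
\]
where $N_{ijk}$ counts the coordinates on which columns $i,j,k$ all agree, so in particular $N_{ijk}\ge |C_i\cap C_j\cap C_k|$. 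Combining this identity with $w(u_1),w(u_2),w(u_3)\ge\beta-2$ yields
\[
3(\beta-2)\;\le\;2\bigl(n-|C_i\cap C_j\cap C_k|\bigr).
\]

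The last step is to feed the column-weight bound back in. By a coordinate-by-coordinate count one gets $|C_i|+|C_j|+|C_k|\le 2\,|C_i\cup C_j\cup C_k|+|C_i\cap C_j\cap C_k|\le 2n+|C_i\cap C_j\cap C_k|$, hence $|C_i\cap C_j\cap C_k|\ge w(c_i)+w(c_j)+w(c_k)-2n\ge 3(\beta-1)-2n$. If $3(\beta-1)-2n\le 0$ then already $\beta\le\frac{2n+3}{3}\le\frac{2n+4}{3}$; otherwise, substituting $|C_i\cap C_j\cap C_k|\ge 3\beta-3-2n$ into the displayed inequality and simplifying gives $9\beta\le 6n+12$, i.e. $\beta\le\frac{2n+4}{3}$ again. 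I expect the only delicate point to be exactly this case split: the three-vector estimate on its own delivers only $3(\beta-2)\le 2n$, i.e. the weaker $\beta\le\frac{2n+6}{3}$, and it is the re-use of the per-column weight lower bound — which forces any three "heavy" columns to share a common $1$ and thereby strictly lowers the right-hand side — that pins the constant down to $\frac{2n+4}{3}$. Everything else is a routine unwinding of the definition of the differential branch number.
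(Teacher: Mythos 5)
The paper does not actually prove this statement; it imports it verbatim from \cite{Choy_2008} and only ever uses it as an upper bound (to rule out NMDS binary matrices of order $n\geq 5$). So there is no in-paper proof to compare against, and your proposal has to stand on its own --- which it does: I checked each step and the argument is correct. Taking $x=e_i$ and $x=e_i+e_j$ does give $w(c_i)\geq\beta-1$ and $w(c_i+c_j)\geq\beta-2$; the identity $w(u_1)+w(u_2)+w(u_3)=2(n-N_{ijk})$ for three binary vectors summing to zero is right (each coordinate contributes $0$ or $2$); the counting bound $|C_i|+|C_j|+|C_k|\leq 2|C_i\cup C_j\cup C_k|+|C_i\cap C_j\cap C_k|$ is a valid coordinate-by-coordinate estimate; and the final substitution does yield $9\beta\leq 6n+12$. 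Two small remarks. First, your case split is unnecessary: the chain $3(\beta-2)\leq 2(n-|C_i\cap C_j\cap C_k|)\leq 2\bigl(n-(3\beta-3-2n)\bigr)$ is valid whatever the sign of $3\beta-3-2n$, and it gives $9\beta\leq 6n+12$ directly. Second, the usual derivation in the literature is shorter but less elementary: over $\mathbb{F}_2$ the matrix $[I\,|\,M]$ generates a binary $[2n,n,d]$ code with $d=\beta_d(M)$, and the Griesmer bound $2n\geq\sum_{i=0}^{n-1}\lceil d/2^i\rceil\geq d+\lceil d/2\rceil+(n-2)$ gives $d\leq\frac{2n+4}{3}$ in two lines; your triple-of-columns argument is essentially a hand-rolled Plotkin/Griesmer step and has the virtue of needing no coding-theoretic machinery. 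Note also that if one reads the theorem as asserting that the value $\lfloor\frac{2n+4}{3}\rfloor$ is \emph{attained}, your proof (like the paper's usage) addresses only the upper bound; attainability would need separate constructions.
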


\begin{lemma}\label{Lemma_lowest_XOR_count_5-nMDS}
	Given a recursive NMDS matrix $B$ of orders $n\geq 5$, with $\mathcal{K}=l$, the lowest XOR count of $B$ is $XOR(\beta)+l\cdot r$ over the field $\mathbb{F}_{2^r}$, where $\beta$ ($\neq 1$) is a nonzero element in $\mathbb{F}_{2^r}$ with the lowest XOR count value in that field.
\end{lemma}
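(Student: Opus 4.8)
The plan is to argue that a recursive NMDS matrix of order $n\geq 5$ cannot have all its nonzero entries equal to $1$ (over $\mathbb{F}_2$, i.e.\ it cannot be a $0/1$ matrix), and therefore it must contain at least one field element $\neq 1$; combining this with the fixed XOR cost $l\cdot r$ gives the claimed lower bound $\mathrm{XOR}(\beta)+l\cdot r$. Concretely, suppose $B$ of order $n\geq 5$ has all nonzero entries equal to $1$. Since $B$ is $k$-NMDS, $B^k$ is NMDS for some $k$, so by definition $\beta_d(B^k)=\beta_l(B^k)=n$. But $B^k$ is again a matrix with entries in $\mathbb{F}_2$ (powers of a $0/1$ matrix over a characteristic-$2$ field stay $0/1$ only if there is no overlap — one must be slightly careful here, see below), and by Theorem~\ref{Th_max_branch_number_binary_matrix} (Choy et al.), a $0/1$ matrix of order $n$ has branch number at most $\frac{2n+4}{3}$. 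For $n\geq 5$ we have $\frac{2n+4}{3}<n$, contradicting $\beta_d(B^k)=n$. Hence $B$ has an entry $\beta\neq 1$, and the XOR count of $B$ is at least $\mathrm{XOR}(\beta)+\mathcal{K}\cdot r = \mathrm{XOR}(\beta)+l\cdot r$, where $\beta$ may be taken to be the nonzero non-identity element of smallest XOR count in $\mathbb{F}_{2^r}$.

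The step that needs genuine care is the claim that $B^k$ is itself a $0/1$ matrix when $B$ is. Over a field of characteristic $2$, the $(i,j)$ entry of $B^k$ is a sum of products of $1$'s, which could collapse to $0$ by cancellation; it is never a value outside $\{0,1\}$ when $B$ is $0/1$ and the field is $\mathbb{F}_2$-generated in this way — indeed $B\in M_n(\mathbb{F}_2)\subseteq M_n(\mathbb{F}_{2^r})$, and $M_n(\mathbb{F}_2)$ is closed under multiplication, so $B^k\in M_n(\mathbb{F}_2)$ always. Thus $B^k$ is a genuine $0/1$ matrix and Theorem~\ref{Th_max_branch_number_binary_matrix} applies directly to it. So the only real content is the inequality $\frac{2n+4}{3}<n$ for $n\geq 5$, which is immediate: $3n>2n+4 \iff n>4$.

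I would then record the small bookkeeping: for a matrix of order $n$ with fixed XOR $\mathcal{K}=l$, the implementation cost is $\sum_{i,j}\mathrm{XOR}((M)_{i,j})+l\cdot r$ by the formula established in the preliminaries (XOR count section); since at least one entry is a non-identity field element, $\sum_{i,j}\mathrm{XOR}((M)_{i,j})\geq \mathrm{XOR}(\beta)$ for the cheapest such $\beta$, and all the remaining nonzero entries contribute nonnegatively (entries equal to $1$ cost $0$). This yields $\mathrm{XOR}(B)\geq \mathrm{XOR}(\beta)+l\cdot r$, and the matrices $A_1,A_2$ exhibited above (together with the analogous constructions for $n=6,7,8$) show this bound is attained, so it is the lowest XOR count. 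The main obstacle is purely expository — making the ``$B^k$ stays binary'' observation cleanly and invoking Choy's bound — rather than any deep difficulty; once those are in place the lemma follows in a couple of lines.
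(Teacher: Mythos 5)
Your proposal is correct and follows essentially the same route as the paper's own proof: rule out $0/1$ matrices via Choy's bound $\frac{2n+4}{3}<n$ for $n\geq 5$, note that powers of a $0/1$ matrix stay in $M_n(\mathbb{F}_2)$, conclude $B$ must contain an element outside $\{0,1\}$, and add the fixed cost $l\cdot r$. You in fact justify the ``$B^k$ stays binary'' step more explicitly than the paper does, but the argument is otherwise identical.
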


\begin{proof}
	An NMDS matrix of order $n$ has branch number of $n$. So from Theorem~\ref{Th_max_branch_number_binary_matrix}, we can say that a matrix of order $n$, with entries from the set $\set{0,1}\subseteq \mathbb{F}_{2^r}$, cannot be NMDS for $n\geq 5$. If we take a matrix $B$ with entries of 0 or 1, then the entries of $B^k$ will remain in the set $\set{0,1}$ for any power $k$. So $B$ must have an element $\gamma \not \in \set{0,1}$. Therefore, $XOR(B)\geq $ $XOR(\beta)+l\cdot r$, where $\beta$ ($\neq 1$) is a nonzero element in $\mathbb{F}_{2^r}$ with the lowest XOR count value in that field.\qed
\end{proof}

\begin{remark}
	Over the field $\mathbb{F}_{2^4}$, the matrix $A_2$ in (\ref{Equ_5-nMDS_over_field_F_2^4}) has the lowest XOR count among the $5$-NMDS matrices of order $5$ and $\mathcal{K}=3$.
\end{remark}

\subsection{Construction of $6\times 6$ Recursive NMDS matrices}\label{Section_6-near-MDS_GDLS}
In this section, we introduce two lightweight GDLS matrices, $B_1$ and $B_2$, of order $6$ with $\mathcal{K}=3$. These matrices can be implemented with $14$ and $13$ XORs over the field $\mathbb{F}_{2^4}$, respectively, and yield NMDS matrices when raised to the power of $5$ and $6$, respectively.
The matrices $B_1$ and $B_2$ of order $6$ are constructed as follows:
\begin{enumerate}[(i)]
	\item $B_1$:~$\rho_1=[6, 1, 2, 3, 4, 5], \rho_2=[1, 2, 3, 4, 5, 6]$, $D_1=diag(1,1, 1,1, 1,1)$ and $D_2=diag(0,\alpha, 0,\alpha^{-1}, 0,1)$
	\item $B_2$:~ $\rho_1=[6, 1, 2, 3, 4, 5], \rho_2=[3, 4, 5, 2, 6, 1]$, $D_1=diag(1,1, 1,1, 1,1)$ and $D_2=diag(0,\alpha, 0,1, 0,1)$
\end{enumerate}

\begin{equation}\label{Equ_6-nMDS_over_field_F_2^4}
	\begin{aligned}
		B_1&=
		\begin{bmatrix}
			0 & 1 & 0 & 0 & 0 & 0 \\
			0 & \alpha & 1 & 0 & 0 & 0 \\
			0 & 0 & 0 & 1 & 0 & 0 \\
			0 & 0 & 0 & \alpha^{-1} & 1 & 0 \\
			0 & 0 & 0 & 0 & 0 & 1 \\
			1 & 0 & 0 & 0 & 0 & 1
		\end{bmatrix}~~
		B_2=
		\begin{bmatrix}
			0 & 1 & 0 & 0 & 0 & 1 \\
			0 & 0 & 1 & 1 & 0 & 0 \\
			0 & 0 & 0 & 1 & 0 & 0 \\
			0 & \alpha & 0 & 0 & 1 & 0 \\
			0 & 0 & 0 & 0 & 0 & 1 \\
			1 & 0 & 0 & 0 & 0 & 0
		\end{bmatrix},
	\end{aligned}
\end{equation}
where $\alpha$ is a primitive element of~$\FF_{2^4}$ and a root of $x^4+x+1$. It can be checked that the matrix $B_1$ is a $5$-NMDS matrix with a XOR count of $(1+1)+3\cdot 4=14$ and $B_2$ is a $6$-NMDS matrix with a XOR count of $1+3\cdot 4=13$.

\noindent We also searched for GDLS matrices $M$ of order $6$ such that $M^k$ is NMDS for $k\leq n-2$ and $\mathcal{K}=3$, but we could not find such matrices over $\mathbb{F}_{2^4}$.

\begin{remark}\label{Example_k-nMDS_order_5_6_ring_F_2^8}
	It is not possible to have elements with XOR count $1$ in $\FF_{2^8}$ due to the absence of trinomial irreducible polynomial of degree $8$ over $\FF_{2}$~\cite[Theorem~2]{BKL2016}. However, it is possible to have elements with XOR count of $1$ over rings.

	Consider the binary matrix $C=[[2],[3],[4],[5],[6],[7],[8],[1,3]]$ which is the companion matrix of $x^8+x^2+1$ over $\FF_{2}$. If we replace $\alpha$ by $C$, then the matrices $A_1$ and $A_2$ in (\ref{Equ_5-nMDS_over_field_F_2^4}) and $B_1$ and $B_2$ in (\ref{Equ_6-nMDS_over_field_F_2^4}) will be $4$-NMDS, $5$-NMDS, $5$-NMDS, and $6$-NMDS over $GL(8,\FF_{2})$, respectively. In addition, the implementation cost of $C$ and $C^{-1}$ is $1$ XOR. Hence, the implementation cost of $A_1$, $A_2$, $B_1$ and $B_2$ over $GL(8,\FF_{2})$ are $26,25,26$ and $25$ XORs, respectively.
\end{remark}

\begin{remark}
	Over the field $\mathbb{F}_{2^4}$, the matrix $B_2$ in (\ref{Equ_6-nMDS_over_field_F_2^4}) has the lowest XOR count among the $6$-NMDS matrices of order $6$ and $\mathcal{K}=3$.
\end{remark}

\subsection{Construction of $7\times 7$ Recursive NMDS matrices}\label{Section_7-near-MDS_GDLS}
In this section, we propose three GDLS matrices of order $7$ that yield NMDS matrices over the field $\mathbb{F}_{2^4}$ for $\mathcal{K}=4$. Consider the GDLS matrices $B_1,~B_2$ and $B_3$ of order $7$ which are constructed as follows:
\begin{enumerate}[(i)]
	\item $B_1$:~$\rho_1=[6,7,4,5,2,3,1],~\rho_2=[3,2,1,4,7,6,5],~D_1=diag(\splitatcommas{1, 1, 1, 1, 1, 1, \alpha})$ and $D_2=diag(1, 0, \alpha^2, 0, \alpha, 0, \alpha^2)$
	\item $B_2$:~$\rho_1=[7, 1, 2, 3, 4, 5, 6],~\rho_2=[6, 7, 5, 4, 1, 3, 2],~D_1=diag(\splitatcommas{1, \alpha^{-1}, \alpha, 1, \alpha, \alpha, 1})$ and $D_2=diag(0, 1, 0, 1, 0, 1, 1)$
	\item $B_3$:~$\rho_1=[7, 1, 2, 3, 4, 5, 6],~\rho_2=[5, 2, 6, 7, 3, 1, 4],~D_1=diag(\splitatcommas{1, 1, 1, 1, 1, 1, 1})$ and $D_2=diag(0, \alpha^{-1}, 0, 1, 0, \alpha^{-1}, 1)$
\end{enumerate}

\begin{equation}\label{Equ_7-nMDS_over_field_F_2^4}
	\begin{aligned}
		B_1&=
		\begin{bmatrix}
			0 & 0 & \alpha^{2} & 0 & 0 & 0 & \alpha \\
			0 & 0 & 0 & 0 & 1 & 0 & 0 \\
			1 & 0 & 0 & 0 & 0 & 1 & 0 \\
			0 & 0 & 1 & 0 & 0 & 0 & 0 \\
			0 & 0 & 0 & 1 & 0 & 0 & \alpha^{2} \\
			1 & 0 & 0 & 0 & 0 & 0 & 0 \\
			0 & 1 & 0 & 0 & \alpha & 0 & 0
		\end{bmatrix}
		B_2=
		\begin{bmatrix}
			0 & \alpha^{-1} & 0 & 0 & 0 & 0 & 0 \\
			0 & 0 & \alpha & 0 & 0 & 0 & 1 \\
			0 & 0 & 0 & 1 & 0 & 1 & 0 \\
			0 & 0 & 0 & 1 & \alpha & 0 & 0 \\
			0 & 0 & 0 & 0 & 0 & \alpha & 0 \\
			0 & 0 & 0 & 0 & 0 & 0 & 1 \\
			1 & 1 & 0 & 0 & 0 & 0 & 0
		\end{bmatrix}
		B_3=
		\begin{bmatrix}
			0 & 1 & 0 & 0 & 0 & \alpha^{-1} & 0 \\
			0 & \alpha^{-1} & 1 & 0 & 0 & 0 & 0 \\
			0 & 0 & 0 & 1 & 0 & 0 & 0 \\
			0 & 0 & 0 & 0 & 1 & 0 & 1 \\
			0 & 0 & 0 & 0 & 0 & 1 & 0 \\
			0 & 0 & 0 & 0 & 0 & 0 & 1 \\
			1 & 0 & 0 & 1 & 0 & 0 & 0
		\end{bmatrix},
	\end{aligned}
\end{equation}
where $\alpha$ is a primitive element of~$\FF_{2^4}$ and a root of $x^4+x+1$. 
It can be verified that matrix $B_1$ is a $5$-NMDS matrix with an XOR count of $(1+2+1+2) + 4\cdot 4 = 22$, $B_2$ is a $6$-NMDS matrix with an XOR count of $(1+1+1+1) + 4\cdot 4 = 20$, and $B_3$ is a $7$-NMDS matrix with an XOR count of $(1+1) + 4\cdot 4 = 18$.

\begin{remark}\label{Example_k-nMDS_order_7_ring_F_2^8}
	If we replace $\alpha$ by $C$ (the binary matrix in Remark~\ref{Example_k-nMDS_order_5_6_ring_F_2^8}), then the matrices $B_1,~B_2$ and $B_3$ in (\ref{Equ_7-nMDS_over_field_F_2^4}) will be $5$-NMDS, $6$-NMDS and $7$-NMDS over $GL(8,\FF_{2})$, respectively. The binary matrix $C^2$ can be implemented with $2$ XORs. Hence, $B_1,~B_2$ and $B_3$ can be implemented with $38,36$ and $34$ XORs, respectively, over $GL(8,\FF_{2})$.
\end{remark}

\begin{remark}\label{Remark_5-nMDS_order_7}
	We know that in a DLS matrix $M=DLS(\rho_1;D_1,D_2)$ of order $n\geq 2$, if $D_2$ is singular, then $M^k$ cannot be NMDS for $k\leq n-2$. However, the result is not true for GDLS matrices. For example the matrix $B_1$ of order 7 in (\ref{Equ_7-nMDS_over_field_F_2^4}) is $5$-NMDS.
\end{remark}
%

\subsection{Construction of $8\times 8$ Recursive NMDS matrices}\label{Section_8-near-MDS_GDLS}
As $4$ and $8$ are the most commonly used diffusion layer matrix sizes, we look for a $k$-NMDS GDLS matrix of order $8$ over $\FF_{2^4}$. However, we were unable to find a GDLS matrix of order $8$, which corresponds to $7$-NMDS or $8$-NMDS. We have proposed two GDLS matrices of order $8$ that yield NMDS matrices over the field $\mathbb{F}_{2^8}$ with $\mathcal{K}=4$. Consider the GDLS matrices $B_1$ and $B_2$ of order $8$ which are constructed as follows:
\begin{enumerate}[(i)]
	\item $B_1:~\rho_1=[2, 3, 4, 5, 6, 7, 8, 1],~\rho_2=[3, 8, 5, 2, 1, 4, 6, 7],~D_1=diag(\splitatcommas{1, 1, 1, 1, 1, \alpha^{-2}, 1, 1})$ and $D_2=diag(1, 0, \alpha, 0, 1, 0, \alpha^{-1}, 0)$
	\item $B_2:~\rho_1=[2, 3, 4, 5, 6, 7, 8, 1],~\rho_2=[3, 8, 5, 2, 1, 4, 6, 7],~D_1=diag(\splitatcommas{1, 1, 1, \alpha^2, 1, 1, 1, 1})$ and $D_2=diag(\alpha, 0, 1, 0, 1, 0, 1, 0)$
\end{enumerate}

\begin{equation}\label{Equ_8-nMDS_over_field_F_2^4}
	\begin{aligned}
		B_1&=
		\begin{bmatrix}
			0 & 0 & 0 & 0 & 1 & 0 & 0 & 1 \\
			1 & 0 & 0 & 0 & 0 & 0 & 0 & 0 \\
			1 & 1 & 0 & 0 & 0 & 0 & 0 & 0 \\
			0 & 0 & 1 & 0 & 0 & 0 & 0 & 0 \\
			0 & 0 & \alpha & 1 & 0 & 0 & 0 & 0 \\
			0 & 0 & 0 & 0 & 1 & 0 & \alpha^{-1} & 0 \\
			0 & 0 & 0 & 0 & 0 & \alpha^{-2} & 0 & 0 \\
			0 & 0 & 0 & 0 & 0 & 0 & 1 & 0
		\end{bmatrix}~~
		B_2=
		\begin{bmatrix}
			0 & 0 & 0 & 0 & 1 & 0 & 0 & 1 \\
			1 & 0 & 0 & 0 & 0 & 0 & 0 & 0 \\
			\alpha & 1 & 0 & 0 & 0 & 0 & 0 & 0 \\
			0 & 0 & 1 & 0 & 0 & 0 & 0 & 0 \\
			0 & 0 & 1 & \alpha^{2} & 0 & 0 & 0 & 0 \\
			0 & 0 & 0 & 0 & 1 & 0 & 1 & 0 \\
			0 & 0 & 0 & 0 & 0 & 1 & 0 & 0 \\
			0 & 0 & 0 & 0 & 0 & 0 & 1 & 0
		\end{bmatrix},
	\end{aligned}
\end{equation}
where $\alpha$ is a primitive element of~$\FF_{2^8}$ and a root of $x^8+x^7+x^6+x+1$. The matrix $B_1$ is a $7$-NMDS matrix with a XOR count of $(4+3+3)+4\cdot 8=42$ and $B_2$ is a $8$-NMDS matrix with a XOR count of $(3+4)+4\cdot 8=39$.

\begin{remark}
	Consider the binary matrix $C_{8}=[[8],[1,2],[2,8],[3],[4],[5],[6],[7]]$ whose minimal polynomial is $x^8 + x^7 + x^2 + x + 1$. Then by replacing $\alpha$ by $C_{8}$, the matrices $B_1$, and $B_2$ in (\ref{Equ_8-nMDS_over_field_F_2^4}) will be $7$-NMDS and $8$-NMDS over $GL(8,\FF_{2})$, respectively. In addition, the implementation cost of $C_{8}$ is $2$ XORs. Also, $C_{8}^{-1}$, $C_{8}^{2}$ and $C_{8}^{-2}$ can be implemented with $2$, 4 and $4$ XORs respectively. Hence, $B_1$ and $B_2$ can be implemented with $40$ and $38$ XORs, respectively, over $GL(8,\FF_{2})$.
\end{remark}

\begin{table}
	\centering
	\caption{Comparison of recursive NMDS matrices of order $n$.}\label{table_comparison_k-NMDS}
	\begin{tabular}{|cccccc|}
	\hline
	Order $n$~~ & Input &~ Iterations &~~ Field/Ring  &~~ XOR count 	&~~ References \\ \hline
		$4$	& 4-bit & 34 & $M_4(\mathbb{F}_2)$ & 	\textbf{1} 	& \cite{Recursive_nMDS_2021} \\
		$4$	& 4-bit & 16 & $M_4(\mathbb{F}_2)$ & 2 & \cite{Recursive_nMDS_2021} \\
		$4$	& 4-bit & 10 & $M_4(\mathbb{F}_2)$ & 3 & \cite{Recursive_nMDS_2021} \\
		$4$	& 4-bit & 7  & $M_4(\mathbb{F}_2)$ & 4 & \cite{Recursive_nMDS_2021} \\
		$4$	& 4-bit & 5  & $M_4(\mathbb{F}_2)$ & 7 & \cite{Recursive_nMDS_2021} \\
		$4$	& 4-bit & 3  & $M_4(\mathbb{F}_2)$ & 8 & \cite{Recursive_nMDS_2021} \\
		$4$ & 4-bit & 3 & $\FF_{2^4}$ &	$8$	& Section~\ref{Section_4-near-MDS_GDLS}\\
		$4$ & 4-bit & 3 & $\FF_{2^4}$ &	$8$	& \cite{SM19}\\
		$4$	& 4-bit & \textbf{2} & $M_4(\mathbb{F}_2)$ & 12 & \cite{Recursive_nMDS_2021} \\ \hline
		$4$	& 8-bit & 66 & $M_8(\mathbb{F}_2)$ & 	\textbf{1} 	& \cite{Recursive_nMDS_2021} \\
		$4$	& 8-bit & 34 & $M_8(\mathbb{F}_2)$ & 2 & \cite{Recursive_nMDS_2021} \\
		$4$	& 8-bit & 16 & $M_8(\mathbb{F}_2)$ & 4 & \cite{Recursive_nMDS_2021} \\
		$4$	& 8-bit & 10 & $M_8(\mathbb{F}_2)$ & 6 & \cite{Recursive_nMDS_2021} \\
		$4$	& 8-bit & 7  & $M_8(\mathbb{F}_2)$ & 8 & \cite{Recursive_nMDS_2021} \\
		$4$	& 8-bit & 3  & $M_8(\mathbb{F}_2)$ & 16 & \cite{Recursive_nMDS_2021} \\
		$4$ & 8-bit & 3 & $\FF_{2^8}$ &	$16$	& Section~\ref{Section_4-near-MDS_GDLS}\\
		$4$ & 8-bit & 3 & $\FF_{2^8}$ &	$16$	& \cite{SM19}\\
		$4$	& 8-bit & \textbf{2} & $M_4(\mathbb{F}_2)$ & 24 & \cite{Recursive_nMDS_2021} \\ \hline \hline
		$5$	& 4-bit & 86 & $M_8(\mathbb{F}_2)$ & 	\textbf{1} 	& \cite{Recursive_nMDS_2021} \\
		$5$	& 4-bit & 46 & $M_8(\mathbb{F}_2)$ & 	2 	& \cite{Recursive_nMDS_2021} \\
		$5$	& 4-bit & 20 & $M_8(\mathbb{F}_2)$ & 	3 	& \cite{Recursive_nMDS_2021} \\
		$5$	& 4-bit & 15 & $M_8(\mathbb{F}_2)$ & 	4 	& \cite{Recursive_nMDS_2021} \\
		$5$	& 4-bit & 8 & $M_8(\mathbb{F}_2)$ & 	8 	& \cite{Recursive_nMDS_2021} \\
		$5$ & 4-bit & 5 & $\FF_{2^4}/0$x$13$ &	$13$	& Section~\ref{Section_5-near-MDS_GDLS}\\
		$5$ & 4-bit & \textbf{4} & $\FF_{2^4}/0$x$13$ &	$14$	& Section~\ref{Section_5-near-MDS_GDLS}\\ \hline
		$5$	& 8-bit & 120 & $M_8(\mathbb{F}_2)$ & 	\textbf{1} 	& \cite{Recursive_nMDS_2021} \\
		$5$	& 8-bit & 86 & $M_8(\mathbb{F}_2)$ & 	2 	& \cite{Recursive_nMDS_2021} \\
		$5$	& 8-bit & 46 & $M_8(\mathbb{F}_2)$ & 	4 	& \cite{Recursive_nMDS_2021} \\
		$5$	& 8-bit & 20 & $M_8(\mathbb{F}_2)$ & 	6 	& \cite{Recursive_nMDS_2021} \\
		$5$	& 8-bit & 15 & $M_8(\mathbb{F}_2)$ & 	8 	& \cite{Recursive_nMDS_2021} \\
		$5$	& 8-bit & 8 & $M_8(\mathbb{F}_2)$ & 	16 	& \cite{Recursive_nMDS_2021} \\
		$5$ & 8-bit & 5 & $GL(8,\mathbb{F}_2)$ &	$25$  & Remark~\ref{Example_k-nMDS_order_5_6_ring_F_2^8}\\
		$5$ & 8-bit & \textbf{4} & $GL(8,\mathbb{F}_2)$ &	$26$& Remark~\ref{Example_k-nMDS_order_5_6_ring_F_2^8}\\ \hline \hline
		$6$ & 4-bit & 6 & $\FF_{2^4}/0$x$13$ &	\textbf{13}	& Section~\ref{Section_6-near-MDS_GDLS}\\ 
		$6$ & 4-bit & \textbf{5} & $\FF_{2^4}/0$x$13$ &	$14$	& Section~\ref{Section_6-near-MDS_GDLS}\\ \hline
		$6$ & 8-bit & 6 & $GL(8,\mathbb{F}_2)$ & \textbf{25} & Remark~\ref{Example_k-nMDS_order_5_6_ring_F_2^8}\\
		$6$ & 8-bit & \textbf{5} & $GL(8,\mathbb{F}_2)$ &	$26$& Remark~\ref{Example_k-nMDS_order_5_6_ring_F_2^8}\\ \hline \hline
		$7$ & 4-bit & 7 & $\FF_{2^4}/0$x$13$ &	\textbf{18}	& Section~\ref{Section_7-near-MDS_GDLS}\\ 
		$7$ & 4-bit & 6 & $\FF_{2^4}/0$x$13$ &	20	& Section~\ref{Section_7-near-MDS_GDLS}\\ 
		$7$ & 4-bit & \textbf{5} & $\FF_{2^4}/0$x$13$ &	$22$	& Section~\ref{Section_7-near-MDS_GDLS}\\ \hline
		$7$ & 8-bit & 7 & $GL(8,\mathbb{F}_2)$ &	\textbf{34}	& Remark~\ref{Example_k-nMDS_order_7_ring_F_2^8}\\ 
		$7$ & 8-bit & 6 & $GL(8,\mathbb{F}_2)$ &	36	& Remark~\ref{Example_k-nMDS_order_7_ring_F_2^8}\\ 
		$7$ & 8-bit & \textbf{5} & $GL(8,\mathbb{F}_2)$ &	38	& Remark~\ref{Example_k-nMDS_order_7_ring_F_2^8}\\ \hline \hline
		$8$ & 8-bit & 8 & $\FF_{2^8}/0$x$1c3$ & 39	& Section~\ref{Section_8-near-MDS_GDLS}\\ 
		$8$ & 8-bit & 8 & $GL(8,\mathbb{F}_2)$ &	\textbf{38}	& Section~\ref{Section_8-near-MDS_GDLS}\\ 
		$8$ & 8-bit & \textbf{7} & $\FF_{2^8}/0$x$1c3$ & 42 & Section~\ref{Section_8-near-MDS_GDLS}\\ 
		$8$ & 8-bit & \textbf{7} & $GL(8,\mathbb{F}_2)$ & 40 & Section~\ref{Section_8-near-MDS_GDLS}\\ \hline
	\end{tabular}
\end{table}
\noindent Until now, we have discussed NMDS matrices in a recursive setup. While these matrices have a low hardware cost, they do require some clock cycles. To use recursive NMDS (say, $k$-NMDS) matrices in an unrolled implementation, we have to add $k$ copies of the matrix to the circuit in sequence, which may increase the cost of the diffusion layer. Thus, they might not be suitable for block ciphers such as PRINCE~\cite{PRINCE}, MANTIS~\cite{SKINNY}, and FUTURE~\cite{FUTURE}, which operate within a single clock cycle. From the next section on, we will discuss nonrecursive constructions of NMDS matrices.

\section{Construction of nonrecursive NMDS matrices}\label{Section_Single_clock_NMDS}
The construction of nonrecursive MDS matrices is typically based on specific matrix types such as circulant matrices, Hadamard matrices, Cauchy matrices, Vandermonde matrices, and Toeplitz matrices. A brief summary of such constructions is presented in~\cite{MDS_Survey}. Circulant and Hadamard matrices of order $n$ can have at most $n$ distinct elements; thus, these matrices are used to reduce the search space. Also, circulant matrices have the flexibility to be implemented in both round-based and serialized implementations~\cite{CYCLICM}. In~\cite{Li_Wang_2017}, the authors have studied the construction of NMDS matrices using circulant and Hadamard matrices and present some generic NMDS matrices of order $n$ for the range of $5\leq n \leq 9$.

\begin{definition}\label{CIRCULANT}
    An $n\times n$ matrix $M$ is said to be a right circulant (or circulant) matrix if its elements are determined by the elements of its first row $x_1,x_2,\ldots,x_n$ as
	$$M=Circ(x_1,x_2,\ldots,x_{n})
	=\begin{bmatrix}
	x_1 & x_2 & \ldots & x_{n}\\
	x_{n} & x_1 & \ldots & x_{n-1}\\
	\vdots & \vdots &\vdots &\vdots \\
	x_2 & x_3 & \ldots & x_1\\
	\end{bmatrix}.$$
\end{definition}

In the context of implementing block ciphers, we know that if an efficient matrix $M$ used in encryption is involutory, then its inverse $M^{-1}=M$ applied for decryption will also be efficient. Therefore, it is particularly important to locate NMDS matrices that are also involutory. In this regard, Li et al.~\cite{Li_Wang_2017} show that for $n > 4$, no circulant matrices of order $n$ over $\mathbb{F}_{2^r}$ can simultaneously be involutory and NMDS. We recall it in the following theorem.

\begin{theorem}\cite{Li_Wang_2017}\label{Th_circulant_NMDS_involutory}
    Over the field $\mathbb{F}_{2^r}$, circulant involutory matrices of order $n> 4$ are not NMDS.
\end{theorem}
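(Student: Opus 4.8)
The plan is to translate the involution identity $M^{2}=I$ into explicit constraints on the first row of the circulant matrix and then produce a short low-weight codeword that rules out the NMDS property. Reindex the coordinates by $\mathbb{Z}/n\mathbb{Z}$ and write $M=\mathrm{Circ}(c_0,c_1,\dots,c_{n-1})$, so that $(M)_{i,j}=c_{j-i}$ (indices mod $n$). Identifying the commutative ring of circulant matrices over $\mathbb{F}_{2^r}$ with $\mathbb{F}_{2^r}[x]/(x^{n}-1)$ via $M\leftrightarrow f(x)=\sum_{i}c_i x^{i}$, the condition $M^{2}=I$ becomes $f(x)^{2}\equiv 1\pmod{x^{n}-1}$. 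In characteristic $2$ the Frobenius identity gives $f(x)^{2}=\sum_{i}c_i^{2}x^{2i}$, so the whole problem reduces to reading off the coefficients of this polynomial modulo $x^{n}-1$.

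First I would dispose of the case $n$ odd: here $i\mapsto 2i$ is a bijection of $\mathbb{Z}/n\mathbb{Z}$, so $f(x)^{2}\bmod(x^{n}-1)=\sum_i c_i^{2}x^{2i\bmod n}$ has the $c_i^{2}$ as its coefficients in permuted order; equating it to $1$ forces $c_0=1$ and $c_i=0$ for $i\neq 0$, i.e. $M=I_n$, which has $\beta_{d}(M)=2<n$ and is therefore not NMDS. For $n=2m$ even, the exponents $2i\bmod 2m$ hit only the even residues, with $2t$ reached by exactly $i=t$ and $i=t+m$; hence the coefficient of $x^{2t}$ in $f^{2}\bmod(x^{n}-1)$ equals $c_t^{2}+c_{t+m}^{2}=(c_t+c_{t+m})^{2}$ and all odd-degree coefficients vanish automatically. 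Matching with the polynomial $1$ then shows that $M^{2}=I$ is equivalent to the pair of conditions $c_0+c_m=1$ and $c_t=c_{t+m}$ for $t=1,\dots,m-1$.

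With these relations in hand, for even $n>4$ I would simply test the vector $v=e_0+e_m$ (standard basis vectors, indices mod $n$). Its image has $i$-th coordinate $(Mv)_i=c_{-i}+c_{m-i}$, and using the relations above one checks that $c_{s}+c_{s+m}$ equals $1$ when $s\equiv 0$ or $s\equiv m$ and equals $0$ otherwise; consequently $(Mv)_i=1$ exactly for $i\in\{0,m\}$, i.e. $Mv=v$. Thus $w(v)+w(Mv)=2+2=4$, so $\beta_{d}(M)\le 4<n$, contradicting the requirement $\beta_{d}(M)=n$ for an NMDS matrix. Together with the odd case this covers all $n>4$.

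I expect the only delicate point to be the clean bookkeeping that turns $f(x)^{2}\equiv 1\pmod{x^{n}-1}$ into the explicit constraints on $(c_0,\dots,c_{n-1})$; once the circulant--polynomial dictionary and the characteristic-$2$ squaring identity are in place this is routine, and the final step via the fixed vector $e_0+e_m$ is immediate. One could instead argue combinatorially via Lemma~\ref{Lemma_nMDS_matrix_characterization} and Corollary~\ref{corollary_min_nonzero_nMDS}, forcing at most one $c_i$ to vanish and then exhibiting a $g\times(g+1)$ block with no nonsingular $g\times g$ submatrix, but the eigenvector argument above is both shorter and more transparent.
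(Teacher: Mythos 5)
Your proof is correct. Each step checks out: the circulant--polynomial dictionary $M\leftrightarrow f(x)\in\mathbb{F}_{2^r}[x]/(x^n-1)$ is a ring isomorphism onto the circulant subring, so $M^2=I$ is indeed equivalent to $f(x)^2\equiv 1$; the Frobenius expansion correctly forces $M=I_n$ for odd $n$ (branch number $2<n$) and the relations $c_0+c_m=1$, $c_t=c_{t+m}$ for even $n=2m$, under which $v=e_0+e_m$ is a fixed vector of weight $2$, giving $\beta_d(M)\le 4<n$. Note, however, that the paper does not prove this statement at all --- it quotes it from the cited reference --- and instead supplies its own proof only for the strictly more general Toeplitz version (Theorem~\ref{Th_Toeplitz_NMDS_involutory}), of which the circulant case is a corollary. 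That proof is structurally cousin to your even case: it extracts relations among the entries from selected coordinates of $M^2=I$ (e.g.\ $(M^2)_{1,n-1}=0$, $(M^2)_{1,n-2}=0$) and then exhibits a weight-$2$ input vector whose image has at least three zero coordinates, while the odd case is killed by forcing two zeros into the first row. Your argument buys a cleaner and stronger conclusion for circulants --- an exact eigenvector with $w(v)+w(Mv)=4$, and complete classification $M=I$ in the odd case --- because it exploits the commutative ring structure of circulants; the price is that it does not extend to Toeplitz matrices, where no such polynomial algebra is available, which is presumably why the paper's generalization proceeds entry by entry.
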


\begin{remark}\label{Remark_Circulant_NMDS_Involutory}
	For $n<4$, there may exist circulant involutory NMDS matrices over $\mathbb{F}_{2^r}$. For example, the circulant matrix $Circ(0,1,1,1)$ of order $4$ is both involutory and NMDS over the field $\mathbb{F}_{2^r}$.
\end{remark}

\begin{remark}
    According to~\cite{MDS_Survey,GR15}, the above result is also true for circulant MDS matrices of order $n$ with a modified lower bound of $n\geq 3$.
\end{remark}
For symmetric cryptography, having an orthogonal matrix as the linear diffusion layer simplifies decryption because the transpose of an orthogonal matrix is its inverse. This makes orthogonal matrices ideal for constructing the linear diffusion layer. Matrices of order $2^n$ are particularly important in cryptography. However, as stated in~\cite[Lemma~5]{GR15}, for $n \geq 2$, we know that any orthogonal circulant matrix of order $2^n$ over the field $\mathbb{F}_{2^r}$ is not MDS. But circulant MDS matrices of different orders may be orthogonal over $\mathbb{F}_{2^r}$~\cite[Remark~24]{MDS_Survey}.

\begin{remark}\label{Remark_circulant_NMDS_orthogonal}
	NMDS circulant orthogonal matrices of any order may exist over the field $\mathbb{F}_{2^r}$. For example, consider the circulant matrices $Circ(0, \alpha^{3} + \alpha + 1, \alpha^{3} + \alpha^{2} + \alpha, \alpha^{3} + 1, \alpha^{3} + \alpha^{2} + 1)$, $Circ(0, 1, \alpha, \alpha^{2} + \alpha + 1, \alpha^{3} + \alpha + 1, \alpha^{3} + \alpha^{2} + \alpha)$, and $Circ(0, 1, \alpha, \alpha + 1, \alpha + 1, \alpha^{3} + \alpha^{2} + \alpha + 1, 1, \alpha^{3} + \alpha^{2})$ of order $5,6$, and $8$, respectively, where $\alpha$ is a primitive element of~$\FF_{2^4}$ and a root of the polynomial $x^4+x+1$. It can be checked that these matrices are both orthogonal and NMDS.
\end{remark}

In~\cite{CYCLICM}, the authors suggest a new category of matrices known as left-circulant matrices. These matrices retain the advantages of circulant matrices.

\begin{definition}
    An $n\times n$ matrix $M$ is said to be a left-circulant matrix if each successive row is obtained by a left shift of the previous row i.e.
    $$M=l\text{-}Circ(x_1,x_2,\ldots,x_{n})
	=\begin{bmatrix}
	x_1 & x_2 & \ldots & x_{n}\\
	x_{2} & x_3 & \ldots & x_{1}\\
	\vdots & \vdots &\vdots &\vdots \\
	x_n & x_1 & \ldots & x_{n-1}\\
	\end{bmatrix}.$$
\end{definition}

Note that a left-circulant matrix is symmetric; consequently, if the matrix is orthogonal, then it is involutory, and vice versa.
\begin{remark}\label{Remark_left_circulant_involutory_NMDS}
	From Lemma~\ref{Lemma_same_branch_perm}, we know that if $M$ is an NMDS matrix, then for any permutation matrix $P$, $PM$ is also an NMDS matrix. Additionally, as per Remark \ref{Remark_circulant_NMDS_orthogonal}, it is possible to obtain a circulant NMDS matrix $M=Circ(x_1, x_2,\ldots,x_{n})$ of any order $n$ over $\mathbb{F}_{2^r}$ that is orthogonal. Now, consider the permutation matrix $P$ of order $n$ as follows:
	\begin{center}
		$P=~\begin{bmatrix}
		1 & 0 & 0  & \ldots & 0 & 0 & 0\\
		0 & 0 & 0 & \ldots & 0 & 0 & 1\\
		0 & 0 & 0 &\ldots & 0 &  1 & 0\\
		0 & 0 & 0 & \ldots & 1 & 0 & 0\\
		\vdots &\vdots &\vdots & \ldots & \vdots & \vdots & \vdots\\
		0 & 1 & 0 & \ldots & 0 & 0 & 0\\
		\end{bmatrix}.$
	\end{center}
	It can be easily verified that $PM$ =~$l$-$Circ(x_1,x_2,\ldots,x_{n})$. Also, since $M$ is orthogonal and $P$ is a permutation matrix, we have $$(PM)^T=M^TP^T=M^{-1}P^{-1}=(PM)^{-1}.$$
	Thus, pre-multiplying $M$ with $P$ will not alter its NMDS property and orthogonality. Consequently, the resulting matrix $PM=l$-$Circ(x_1,x_2,\ldots,x_{n})$ will be both orthogonal and NMDS, making it an involutory NMDS matrix. Therefore, NMDS left-circulant involutory (orthogonal) matrices of any order may exist over the field $\mathbb{F}_{2^r}$.
\end{remark}

\begin{definition}
	A $2^n \times 2^n$ matrix $M$ in $\mathbb{F}_{2^r}$ is considered a Hadamard matrix if it can be written in the following form:
    $$
    M=\begin{bmatrix}
    H_1 & H_2 \\
    H_2 & H_1
    \end{bmatrix}
    $$
    where $H_1$ and $H_2$ are also Hadamard.
\end{definition}

The most significant advantage of Hadamard matrices is the potential for constructing involutory matrices. If the elements of the matrix are chosen so that the first row sums to one, the resulting matrix will be involutory~\cite{MDS_Survey}.

\noindent The absence of any zero entries is a necessary condition for matrices such as Hadamard, circulant and left-circulant matrices to be MDS. Therefore, these matrices result in a high implementation cost due to $\mathcal{K}=n(n-1)$. Having zero entries (with a maximum of one zero per row or column) does not affect the NMDS property of these matrices, leading to a low implementation cost with $\mathcal{K}=n(n-2)$. Taking advantage of this, the authors in~\cite{Li_Wang_2017} provided some generic lightweight involutory NMDS matrices of order $8$ from Hadamard matrices.

\begin{theorem}
	For a Hadamard, circulant, or left-circulant NMDS matrix of order $n$ over $\mathbb{F}_{2^r}$ with $n\geq 5$, the XOR count is at least $XOR(\beta)\cdot n+n(n-2)\cdot r$ over the field $\mathbb{F}_{2^r}$, where $\beta$ ($\neq 1$) is a nonzero element in $\mathbb{F}_{2^r}$ with the lowest XOR count value in that field.
\end{theorem}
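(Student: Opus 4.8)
The plan is to split the cost as $XOR(M)=\sum_{i,j}XOR((M)_{i,j})+\mathcal{K}\cdot r$ and bound the two pieces separately, using the common structural feature of the three families: in a circulant, left‑circulant, or Hadamard matrix of order $n$, every row is a rearrangement of the first row $(x_1,\ldots,x_n)$ (a cyclic shift in the circulant and left‑circulant cases; an index‑XOR permutation in the Hadamard case, by a straightforward induction on the recursive definition). Hence every row contains each distinct value of the first row with the same multiplicity it has in the first row. In particular, the number of zeros is the same in every row, and likewise for every nonzero value.

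\textbf{The fixed‑XOR term.} Applying Lemma~\ref{Lemma_nMDS_matrix_characterization} with $g=1$ (equivalently Corollary~\ref{corollary_min_nonzero_nMDS}), an NMDS matrix has at most one zero entry in each row. By the structural fact above, the first row therefore has at most one zero, so every row has at least $n-1$ nonzero entries. Writing $k_i$ for the number of nonzero entries in row $i$, we get $\mathcal{K}=\sum_{i=1}^{n}(k_i-1)\geq n\big((n-1)-1\big)=n(n-2)$, which contributes the term $n(n-2)\cdot r$.

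\textbf{The entry‑cost term.} Here I would invoke Theorem~\ref{Th_max_branch_number_binary_matrix}: a matrix of order $n$ with entries in $\{0,1\}$ has branch number at most $\frac{2n+4}{3}$, which is strictly less than $n$ for $n\geq 5$; such a matrix cannot be NMDS. Thus, for $n\geq 5$, an NMDS matrix from any of the three families has some first‑row entry $\gamma\notin\{0,1\}$. By the structural fact, $\gamma$ occurs in every row, hence at least $n$ times in $M$; each occurrence contributes $XOR(\gamma)\geq XOR(\beta)$ by minimality of $XOR(\beta)$ over nonzero elements different from $1$, and the remaining entries contribute a nonnegative amount, so $\sum_{i,j}XOR((M)_{i,j})\geq n\cdot XOR(\beta)$. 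Adding the two bounds gives $XOR(M)\geq n\cdot XOR(\beta)+n(n-2)\cdot r$.

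\textbf{Where the care lies.} The only non‑mechanical point is confirming that each row of a finite‑field Hadamard matrix is a permutation of its first row; this follows from the recursive block form (equivalently, the $(i,j)$ entry depends only on the bitwise XOR of the zero‑based indices). One should also note the degenerate case $r=1$: then no nonzero $\beta\neq 1$ exists, but Theorem~\ref{Th_max_branch_number_binary_matrix} already rules out any NMDS circulant, left‑circulant, or Hadamard matrix of order $n\geq 5$ over $\mathbb{F}_2$, so the statement holds vacuously. Finally, repeated first‑row values only inflate the count of occurrences of $\gamma$, so the bound $\geq n$ is never violated.
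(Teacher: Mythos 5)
Your proposal is correct and follows essentially the same route as the paper's proof: invoke Theorem~\ref{Th_max_branch_number_binary_matrix} to force an entry $\gamma\notin\{0,1\}$, use the fact that every row of a circulant, left-circulant, or Hadamard matrix is a rearrangement of the first row so that $\gamma$ appears at least $n$ times, and combine this with $\mathcal{K}\geq n(n-2)$ from the at-most-one-zero-per-row property of NMDS matrices. You simply spell out the details (the Hadamard permutation fact, the $r=1$ degeneracy) that the paper leaves implicit.
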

\begin{proof}
	An NMDS matrix $B$ of order $n$ has branch number of $n$. Therefore, according to Theorem~\ref{Th_max_branch_number_binary_matrix}, a matrix of order $n$ with $n \geq 5$ and entries from the set $\set{0,1}\subseteq \mathbb{F}_{2^r}$ cannot be NMDS. This means that $B$ must contain an element $\gamma \not \in \set{0,1}$. Additionally, for an NMDS matrix, we must have $\mathcal{K} \geq n(n-2)$. Also, each row in a Hadamard, circulant, or left-circulant matrix is a rearrangement of the first row. Hence, for these matrices to be NMDS over the field $\mathbb{F}_{2^r}$, the minimum XOR count must be $XOR(\beta)\cdot n+n(n-2)\cdot r$.\qed
\end{proof}

\noindent The lowest XOR count value (of an element) in the field $\mathbb{F}_{2^4}$ is one, which allows us to obtain the lowest possible XOR count of Hadamard, circulant, or left-circulant NMDS matrices of various orders over $\mathbb{F}_{2^4}$ as shown in Table~\ref{Table_lowest_XOR_NMDS}.

\begin{table}
	\centering
	\caption{Lowest possible XOR count of Hadamard, circulant, or left-circulant NMDS matrices of order $n$ over~$\mathbb{F}_{2^4}$.}\label{Table_lowest_XOR_NMDS}
	\begin{tabular}{|c||cccc|}
		\hline
		order $n$ & ~5~ & ~6~ & ~7~ & ~8~ \\ \hline
		Lowest XOR count &~ 65~ & ~102~ & ~147~ & ~200~ \\ \hline
	\end{tabular}
\end{table}
\noindent The use of Toeplitz matrices for the construction of MDS matrices has been explored in the literature~\cite{XORM,PXOR2}, and we will discuss them for the construction of NMDS matrices.
\begin{definition}\label{def:toeplitz_NMDS}
	The $n\times n$ matrix 
	\begin{center}
		\begin{equation*}
		M=\begin{bmatrix}
		x_1 & x_2 & x_3 & \ldots & x_{n-2} & x_{n-1} & x_{n}\\
		y_1 & x_1 & x_2 & \ldots & x_{n-3} & x_{n-2} & x_{n-1}\\
		y_2 & y_1 & x_1 & \ldots & x_{n-4} & x_{n-3} & x_{n-2}\\
		\vdots & \vdots & \vdots & \vdots & \vdots & \vdots & \vdots\\
		y_{n-1} & y_{n-2} & y_{n-3} & \ldots & y_{2} & y_{1} & x_1
		\end{bmatrix}
		\end{equation*}
	\end{center}
	is called a Toeplitz matrix of order $n$.
\end{definition}

\noindent It is easy to check that circulant matrices are a special type of Toeplitz matrices. Also, like circulant matrices, it is not possible for Toeplitz matrices of order $n>4$ to be both NMDS and involutory over a field of characteristic 2.

\begin{theorem}\label{Th_Toeplitz_NMDS_involutory}
    Over the field $\mathbb{F}_{2^r}$, Toeplitz involutory matrices of order $n> 4$ are not NMDS.
\end{theorem}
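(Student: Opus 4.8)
The plan is to argue by contradiction. Suppose $M=(t_{i-j})$ is a Toeplitz matrix of order $n>4$ over $\mathbb{F}_{2^r}$ that is simultaneously involutory and NMDS, where $t_{-(n-1)},\dots,t_{n-1}$ are the $2n-1$ band parameters (so $t_0=x_1$, $t_{-k}=x_{k+1}$, $t_k=y_k$ in Definition~\ref{def:toeplitz_NMDS}). The argument rests on two facts that pull against each other. On the sparsity side, Lemma~\ref{Lemma_nMDS_matrix_characterization} with $g=1$ forces at most one zero in every row and every column; since a Toeplitz matrix is constant along diagonals, a single vanishing entry kills an entire diagonal, and a short bookkeeping over which rows and columns a diagonal meets shows that the only zero patterns compatible with NMDS are: (i) no zero entry at all; (ii) only the main diagonal vanishes ($t_0=0$); (iii) exactly one sub-diagonal vanishes; (iv) exactly one super-diagonal vanishes; or (v) one sub-diagonal $t_k$ and one super-diagonal $t_{-l}$ vanish, with $k+l\ge n$. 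On the algebraic side, writing $(M^2)_{i,j}=\sum_{p=i-n}^{i-1}t_p\,t_{(i-j)-p}$ and using that, in characteristic $2$, the complete convolution $\sum_{p}t_p\,t_{d-p}$ collapses to $t_{d/2}^2$ when $d$ is even and to $0$ when $d$ is odd (symmetric terms cancel pairwise), every identity $(M^2)_{i,j}=\delta_{i,j}$ becomes a short quadratic relation among the $t$'s once the boundary ("corner") terms of the truncated sum are recorded.

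First I would dispose of the case $n$ odd, which is quick. In $(M^2)_{1,n}$ and $(M^2)_{n,1}$ the two corner summands cancel and the remainder is organised by the involution $a\mapsto 1-n-a$ on the relevant index range; its unique fixed point yields $(M^2)_{1,n}=t_{-(n-1)/2}^2$ and $(M^2)_{n,1}=t_{(n-1)/2}^2$. Since these must be $0$, the $\tfrac{n-1}{2}$-th super- and sub-diagonals are identically zero, but $\tfrac{n-1}{2}+\tfrac{n-1}{2}=n-1<n$ contradicts pattern (v) (the middle row then carries two zeros). For $n$ even these corner entries vanish automatically, so the plan is to descend to the near-corner entries $(M^2)_{1,n-1}$, $(M^2)_{n-1,1}$, $(M^2)_{2,n}$, $\dots$, to the diagonal increments $(M^2)_{i+1,i+1}-(M^2)_{i,i}$, which give the product-symmetry $t_i t_{-i}=t_{n-i}\,t_{-(n-i)}$ — the Toeplitz analogue of the repetition $c_j=c_{j+m}$ used for circulant matrices in Theorem~\ref{Th_circulant_NMDS_involutory} — and to the first off-diagonals of $M^2$. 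In each of the finitely many admissible zero patterns I would then combine these relations with the nondegeneracy clause of Lemma~\ref{Lemma_nMDS_matrix_characterization} for $g=2$: the forced equalities among the band parameters make two columns (or two rows) proportional once restricted to a suitable pair of rows (columns), so some $2\times 3$ or $3\times 2$ submatrix has all of its $2\times 2$ minors equal to zero, contradicting NMDS. The hypothesis $n>4$ enters precisely here, ensuring there are enough rows and columns to avoid the two exceptional indices while selecting the proportional lines.

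The hard part is the even case $n\ge 6$. Unlike circulant matrices, Toeplitz matrices do not form a commutative polynomial ring, so there is no short classification of the involutory ones analogous to the circulant case — an involutory Toeplitz matrix need not even be circulant (already for $n=2$ a unipotent upper-triangular Toeplitz matrix is involutory) — and passing from $M^2=I$ to usable identities must keep track of the truncation corrections that are absent in the circulant setting. Consequently the proof is a genuine case analysis, and in the dense pattern (i) and the single-zero-diagonal patterns (iii)--(iv) the delicate point is extracting a rank-one pair of lines from the quadratic identities. Finally, the bound $n>4$ is sharp: for $n=4$ the Toeplitz matrix with $0$ on the main diagonal and $1$ everywhere else — which is exactly $\mathrm{Circ}(0,1,1,1)$ of Remark~\ref{Remark_Circulant_NMDS_Involutory} — is both involutory and NMDS, so the statement admits no extension below $n=5$.
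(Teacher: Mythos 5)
Your odd-$n$ argument is sound and essentially the paper's: the paper pairs $(M^2)_{1,n}=x_{(n+1)/2}^2$ with $(M^2)_{2,n-1}=x_{(n-1)/2}^2$ to put two zeros in the first row, while you pair $(M^2)_{1,n}$ with $(M^2)_{n,1}$ to put two zeros in the middle row; both are two-line contradictions with the at-most-one-zero-per-row consequence of Lemma~\ref{Lemma_nMDS_matrix_characterization}. Your classification of the zero patterns a Toeplitz NMDS matrix can carry (at most one vanishing sub-diagonal and one vanishing super-diagonal, with offsets summing to at least $n$) is also correct.

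The genuine gap is the even case, which is the substance of the theorem: you lay out a plan (truncated-convolution identities from $M^2=I$, the relation $t_it_{-i}=t_{n-i}t_{-(n-i)}$, then a case analysis over the admissible zero patterns ending in a $2\times 3$ submatrix all of whose $2\times 2$ minors vanish) but you never execute it. You yourself flag "extracting a rank-one pair of lines from the quadratic identities" as the delicate point and then leave it undone, so nothing is actually proved for $n$ even. For comparison, the paper's even case needs neither the zero-pattern classification nor the $g=2$ clause of Lemma~\ref{Lemma_nMDS_matrix_characterization}: it extracts just three relations, $(M^2)_{1,n-1}=x_{n/2}^2+x_ny_1=0$, $(M^2)_{1,n-2}=x_{n-1}y_1+x_ny_2=0$, and $(M^2)_{3,n-1}=x_{(n-2)/2}^2+x_{n-1}y_2=0$, which (when $x_n\neq 0$; if $x_n=0$ the first relation forces $x_{n/2}=0$ and the first row already has two zeros) combine to $x_{(n-2)/2}x_n=x_{n/2}x_{n-1}$, and then the explicit input vector $v$ with $v_{n/2}=x_n$, $v_n=x_{n/2}$ satisfies $w(v)+w(Mv)\leq 2+(n-3)<n$, directly violating $\beta_d(M)=n$. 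If you want to salvage your route, you must actually produce the degenerate $2\times 3$ (or $3\times 2$) submatrix in each admissible pattern — in particular in the dense pattern (i), where your quadratic relations alone do not obviously force two proportional partial columns; as written, the claim that they do is an assertion, not an argument.
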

\begin{proof}
	Let $M$ be a Toeplitz matrix (as in Definition~\ref{def:toeplitz_NMDS}) of order $n$ which is both involutory and NMDS over the field $\mathbb{F}_{2^r}$, where $n>4$. We will examine two scenarios: when $n$ is even and when $n$ is odd.
	
	\vspace{4pt}
	\noindent
	\textbf{Case 1:} $n$ is even.\newline
	In an NMDS matrix, there may be a zero entry. So this case splits into two subcases: $x_n\neq 0$ and $x_n= 0.$
	
	\noindent \textbf{Case 1.1:}~When $x_n\neq 0$.\newline
	The $(n-1)$-th element in the 1st row of $M^2$ is
	\begin{equation*}
	    \begin{aligned}
        	(M^2)_{1,n-1} &= M_{row(1)}\cdot M_{column(n-1)}\\
        	&=x_{1}x_{n-1}+x_{2}x_{n-2}+\cdots +x_{\frac{n}{2}}x_{\frac{n}{2}}+\cdots +x_{n-1}x_{1}+x_{n}y_{1}\\
        	&=x_{\frac{n}{2}}^2+x_{n}y_{1}.
	    \end{aligned}
	\end{equation*}
	Since $M$ is involutory, we have $(M^2)_{1,n-1}=0$. Therefore, from above we have 
	\begin{align}
	    & x_{\frac{n}{2}}^2+x_{n}y_{1}=0 \label{Eqn_proof_Toeplitz_1a} \\
	    \implies & y_{1}= x_{\frac{n}{2}}^2x_{n}^{-1} \label{Eqn_proof_Toeplitz_1b}.
	\end{align}
	We have
	\begin{equation*}
	    \begin{aligned}
        	(M^2)_{1,n-2} &= M_{row(1)}\cdot M_{column(n-2)}\\
        	&=x_{1}x_{n-2}+x_{2}x_{n-3}+\cdots +x_{\frac{n-2}{2}}x_{\frac{n}{2}}+x_{\frac{n}{2}}x_{\frac{n-2}{2}}+\cdots\\
        	&~~~~+x_{n-3}x_{2}+x_{n-2}x_{1}+x_{n-1}y_{1}+x_{n}y_{2}\\
        	&=x_{n-1}y_{1}+x_{n}y_{2}.
	    \end{aligned}
	\end{equation*}
	Also, $(M^2)_{1,n-2}=0$, which results in
	\begin{equation}\label{Eqn_proof_Toeplitz_2}
    	\begin{aligned}
    	    & x_{n-1}y_1+x_{n}y_{2}=0
    	\end{aligned}
	\end{equation}
	Now, from Equation~\ref{Eqn_proof_Toeplitz_1b} and Equation~\ref{Eqn_proof_Toeplitz_2}, we have
	\begin{equation}\label{Eqn_proof_Toeplitz_3}
    	\begin{aligned}
    	    y_{2} &= x_{\frac{n}{2}}^2x_{n-1}x_{n}^{-2}.
    	\end{aligned}
	\end{equation}
	Also, from $(M^2)_{3,n-1}=0$, we have
	\begin{align}
	    &x_{\frac{n-2}{2}}^2+x_{n-1}y_{2} =0 \notag \\
	    \implies & x_{\frac{n-2}{2}}^2 + x_{n-1} \cdot x_{\frac{n}{2}}^2 x_{n-1} x_{n}^{-2} =0 && [\text{From Equation~\ref{Eqn_proof_Toeplitz_3}}] \notag \\
	    \implies & x_{\frac{n-2}{2}}^2x_{n}^{2} = x_{\frac{n}{2}}^2 x_{n-1}^{2} \notag \\
	    \implies & x_{\frac{n-2}{2}} x_{n} = x_{\frac{n}{2}} x_{n-1} && [\text{Since characteristic of $\mathbb{F}_{2^r}$ is 2}] \label{Eqn_proof_Toeplitz_4}
	\end{align}
    Now consider the input vector $v=[0,0,\ldots,\underbrace{x_{n}}_{\text{$\frac{n}{2}$-th}},0,\ldots,x_{\frac{n}{2}}]^T$ of $M$. Therefore, we have
	\begin{equation*}
		\begin{aligned}
			M\cdot v &= [x_{\frac{n}{2}}x_n+x_n x_{\frac{n}{2}},~x_{\frac{n-2}{2}} x_{n} + x_{n-1} x_{\frac{n}{2}},~*,\ldots,*,\underbrace{y_1x_n+x_{\frac{n}{2}}^2}_{\text{$(\frac{n}{2}+1)$-th}},*,\ldots,*]^T \notag \\
        &= [0,0,*,\ldots,\underbrace{0}_{\text{$(\frac{n}{2}+1)$-th}},*,\ldots,*]^T,
		\end{aligned}
	\end{equation*}
    where $*$ denotes some entry may or may not be zero. Here the second and $(\frac{n}{2}+1)$-th coordinates of $M\cdot v$ are zero by Equation~\ref{Eqn_proof_Toeplitz_4} and Equation~\ref{Eqn_proof_Toeplitz_1a}, respectively. Thus, the sum of nonzero elements of input vector ($v$) and output vector ($M\cdot v$) is $\leq 2+(n-3)<n$ i.e. branch number of $M<n$. This contradicts that $M$ is NMDS.
    
	\noindent \textbf{Case 1.2:}~When $x_n= 0$.\newline
	If $x_n= 0$, then from Equation~\ref{Eqn_proof_Toeplitz_1a}, we conclude that $x_{\frac{n}{2}}^2=0$ which implies $x_{\frac{n}{2}}=0$. Therefore, the Toeplitz matrix $M$ has two zero entries in its first row, which contradicts the fact that $M$ is NMDS.
	
	\vspace{4pt}
	\noindent
	\textbf{Case 2:} $n$ is odd.\newline
	The $n$-th element in the 1st row of $M^2$ is
	\begin{equation*}
	    \begin{aligned}
        	(M^2)_{1,n} &= M_{row(1)}\cdot M_{column(n)}\\
        	&=x_{1}x_{n}+x_{2}x_{n-1}+\cdots +x_{\frac{n+1}{2}}x_{\frac{n+1}{2}}+\cdots +x_{n-1}x_{2}+x_{n}x_{1}\\
        	&=x_{\frac{n+1}{2}}^2.
	    \end{aligned}
	\end{equation*}
	Also, we have $(M^2)_{2,n-1}=x_{\frac{n-1}{2}}^2$.
	Therefore, since $M$ is involutory, it follows that $(M^2)_{1,n}=(M^2)_{2,n-1}=0$, implying that $x_{\frac{n-1}{2}}=x_{\frac{n+1}{2}}=0$. This means that $M$ has two zero entries in its first row, which contradicts that $M$ is an NMDS matrix. Hence, the proof.\qed
\end{proof}

\begin{remark}
	Circulant matrices are a particular type of Toeplitz matrices, and thus, from Remark~\ref{Remark_Circulant_NMDS_Involutory}, we can say that for $n<4$, there may exist Toeplitz involutory NMDS matrices over $\mathbb{F}_{2^r}$.
\end{remark}

\begin{remark}
    From~\cite[Theorem~2]{XORM}, we know that for $n \geq 2$, any orthogonal Toeplitz matrix of order $2^n$ over the field $\mathbb{F}_{2^r}$ is not MDS. However, this result does not hold for NMDS matrices. Circulant matrices are a particular type of Toeplitz matrices, and thus, from Remark~\ref{Remark_circulant_NMDS_orthogonal}, we can say that Toeplitz orthogonal NMDS matrices of any order may exist over the field $\mathbb{F}_{2^r}$.
\end{remark}

\noindent Hankel matrices, introduced in~\cite{MDS_Survey} for MDS matrix construction, are similar to Toeplitz matrices in that each ascending skew diagonal from left to right is constant.
\begin{definition}
	The $n\times n$ matrix \begin{center}
		\begin{equation*}
		M=\begin{bmatrix}
		x_{1}~ & x_{2}~ & x_{3}~ & \ldots & x_{n-1} & x_{n}\\
		x_{2}~ & x_{3}~ & x_{4}~ & \ldots & x_{n} & y_{1}\\
		x_{3}~ & x_{4}~ & x_{5}~ & \ldots & y_{1} & y_{2}\\
		\vdots~ & \vdots~ & \vdots~ & \vdots & \vdots & \vdots\\
		x_{n}~ & y_{1}~ & y_{2}~ & \ldots & y_{n-2} & y_{n-1}
		\end{bmatrix}
		\end{equation*}
	\end{center}
	is called a Hankel matrix.
\end{definition}

It is important to note that a left-circulant matrix is a special case of Hankel matrix. Hankel matrices are symmetric and may be described by their first row and last column. Thus an involutory (orthogonal) Hankel matrix is orthogonal (involutory).

\begin{remark}
    From~\cite[Theorem~7.5]{MDS_Survey}, we know that for $n \geq 2$, any involutory (orthogonal) Hankel matrix of order $2^n$ over the field $\mathbb{F}_{2^r}$ is not MDS. However, this result does not hold for NMDS matrices. Left-circulant matrices are a particular type of Hankel matrices, and thus, from Remark~\ref{Remark_left_circulant_involutory_NMDS}, we can say that Hankel involutory (orthogonal) NMDS matrices of any order may exist over the field $\mathbb{F}_{2^r}$.
\end{remark}

\noindent We close this section by presenting Table~\ref{Table_Comparision_MDS_NMDS}, which compares the involutory and orthogonal properties of MDS and NMDS matrices constructed from the circulant, left-circulant, Toeplitz and Hankel families.

\begin{table}
	\begin{center}
		\caption{Comparison of involutory and orthogonal properties of MDS and NMDS matrices over a finite field $\mathbb{F}_{2^r}$~(``\textbf{DNE}'' stands for does not exist).	\label{Table_Comparision_MDS_NMDS}}
		\begin{tabular}{|p{2cm}|p{2cm}|p{3cm}|p{2cm}|p{2cm}|}
			\hline
			{Type} & {Property} & {Dimension} & {MDS}  & {NMDS} \\ \hline \hline
			\multirow{4}{*}{Circulant} & \multirow{1}{*}{Involutory} &  $n \times n$ & DNE  & DNE \\ \cline{2-5}
			& \multirow{3}{*}{Orthogonal} &  $2^n \times 2^n$ & DNE  &  may exist \\ \cline{3-5}
			& & $2n \times 2n$    & may exist  & may exist  \\ \cline{3-5}
			& & $(2n+1) \times (2n+1)$ & may exist & may exist\\ \hline \hline
			\multirow{3}{*}{left-Circulant} & \multirow{3}{*}{Involutory} & $2^n \times 2^n$ & DNE  & may exist\\
			\cline{3-5}
			& & $2n \times 2n$    & may exist & may exist \\ \cline{3-5}
			& & $(2n+1) \times (2n+1)$ & may exist & may exist \\ \hline \hline
			\multirow{4}{*}{Toeplitz} & \multirow{1}{*}{Involutory} &  $n \times n$ & DNE  & DNE \\ \cline{2-5}
			& \multirow{3}{*}{Orthogonal} &  $2^n \times 2^n$ & DNE  &  may exist  \\ \cline{3-5}
			& & $2n \times 2n$    & may exist  & may exist  \\ \cline{3-5}
			& & $(2n+1) \times (2n+1)$ & may exist & may exist\\ \hline \hline
			\multirow{3}{*}{Hankel} & \multirow{3}{*}{Involutory} & $2^n \times 2^n$ & DNE  & may exist\\
			\cline{3-5}
			& & $2n \times 2n$    & may exist & may exist \\ \cline{3-5}
			& & $(2n+1) \times (2n+1)$ & may exist & may exist \\ \hline
		\end{tabular}
	\end{center}
\end{table}

\section{Construction of nonrecursive NMDS matrices from GDLS matrices}\label{Section_nonrecursive_NMDS_GDLS}

Constructing NMDS matrices from circulant, left-circulant, Hadamard, Toeplitz or Hankel matrices of order $n$ may result in a high implementation cost due to the requirement of having $\mathcal{K}\geq n(n-2)$. To address this issue, in this section, we present some lightweight nonrecursive NMDS matrices through the composition of various GDLS (see Definition~\ref{def_GDLS_matrix_NMDS}) matrices, similar to the method used by the authors in \cite{FUTURE,SM19} for constructing MDS matrices.

To minimize the search space, in most cases, we arbitrarily select $\rho_1$ as the $n$-cycle $[\splitatcommas{n, 1, 2, \ldots, n-1}]$. However, it is important to note that there is no inherent advantage in choosing $\rho_1=[n, 1, 2, \ldots, n-1]$ for obtaining an NMDS matrix. If we change $\rho_1=[n, 1, 2, \ldots, n-1]$ to any permutation from $S_n$, there is still a possibility of obtaining an NMDS matrix.

To search for lightweight nonrecursive NMDS matrices, we examine GDLS matrices of order $n$ with $\mathcal{K}=\ceil{\frac{n}{2}}$ and entries from the set $\set{1,\alpha,\alpha^{-1},\alpha^{2},\alpha^{-2}}$, where $\alpha$ is a primitive element and a root of the constructing polynomial of the field $\FF_{2^r}$. The search space for finding nonrecursive NMDS matrices of order $n \geq 5$ remains large, even when considering the set $\{\splitatcommas{1,\alpha,\alpha^{-1},\alpha^{2},\alpha^{-2}}\}$. Therefore, to obtain nonrecursive NMDS matrices of order $n=5,6,7,8$, we conduct a random search. In addition, to construct nonrecursive NMDS matrices of order $n$, we typically chose $n-2$ GDLS matrices of the same structure. If this does not yield result, we use $n-1$ matrices instead. 

Also note that the implementation costs of the matrices presented in this section over a field are calculated by referring to the s-XOR count value of the corresponding field elements as provided in table of~\cite[App. B]{DSI}.
In Table~\ref{table_comparison_nonrec_NMDS}, we compare our results for nonrecursive NMDS matrices with the existing results.

\begin{table}[h!]
	\centering
	\caption{Comparison of nonrecursive NMDS matrices of order $n$.}\label{table_comparison_nonrec_NMDS}
	\begin{tabular}{|ccccc|}
	\hline
	Order $n$~~ & Input &~~~ Field/Ring  &~~ XOR count 	&~~ References \\ \hline
		$4$	& 4-bit & $\mathbb{F}_{2^4}$ & 24 	& \cite{SM19} \\
		$4$ & 4-bit & $\mathbb{F}_{2^4}$ & 24   & Section~\ref{Section_4_nonrec_NMDS_GDLS}\\
		$4$	& 8-bit & $\mathbb{F}_{2^8}$ & 48 	& \cite{SM19} \\
		$4$ & 8-bit & $\mathbb{F}_{2^8}$ & 48   & Section~\ref{Section_4_nonrec_NMDS_GDLS}\\  \hline
		$5$	& 4-bit & $\FF_{2^4}/0$x$13$ & 65 	& \cite{Li_Wang_2017} \\
		$5$ & 4-bit & $\FF_{2^4}/0$x$13$ &  \textbf{50}  & Section~\ref{Section_5_nonrec_NMDS_GDLS}\\
		$5$	& 8-bit & $\FF_{2^8}/0$x$11$b & 130	& \cite{Li_Wang_2017} \\
		$5$ & 8-bit & $GL(8,\FF_{2})$ &  \textbf{98}  & Remark~\ref{Example_5_nonrec_NMDS_ring_F_2^8}\\ \hline
		$6$	& 4-bit & $\FF_{2^4}/0$x$13$ & 108 	& \cite{Li_Wang_2017} \\
		$6$ & 4-bit & $\FF_{2^4}/0$x$13$ &  \textbf{65}  & Section~\ref{Section_6_nonrec_NMDS_GDLS}\\
		$6$	& 8-bit & $\FF_{2^8}/0$x$11$b & 216	& \cite{Li_Wang_2017} \\
		$6$ & 8-bit & $GL(8,\FF_{2})$ &  \textbf{125}  & Remark~\ref{Example_6_nonrec_NMDS_ring_F_2^8}\\ \hline
		$7$	& 4-bit & $\FF_{2^4}/0$x$13$ & 154 	& \cite{Li_Wang_2017} \\
		$7$ & 4-bit & $\FF_{2^4}/0$x$13$ &  \textbf{96}  & Section~\ref{Section_7_nonrec_NMDS_GDLS}\\
		$7$	& 8-bit & $\FF_{2^8}/0$x$11$b & 308	& \cite{Li_Wang_2017} \\
		$7$ & 8-bit & $GL(8,\FF_{2})$ &  \textbf{176}  & Remark~\ref{Example_7_nonrec_NMDS_ring_F_2^8}\\ \hline
		$8$	& 4-bit & $\FF_{2^4}/0$x$13$ & 216 	& \cite{Li_Wang_2017} \\
		$8$	& 4-bit & $\FF_{2^4}/0$x$13$ & 108 	& \cite{SM19} \\
		$8$ & 4-bit & $\FF_{2^4}/0$x$13$ &  108  & Section~\ref{Section_8_nonrec_NMDS_GDLS}\\
		$8$	& 8-bit & $\FF_{2^8}/0$x$11$b & 432	& \cite{Li_Wang_2017} \\
		$8$	& 8-bit & $GL(8,\FF_{2})$  & 204 	& \cite{SM19} \\
		$8$ & 8-bit & $GL(8,\FF_{2})$ &  204  & Remark~\ref{Example_8_nonrec_NMDS_ring_F_2^8}\\ \hline
	\end{tabular}
\end{table}

\subsection{Construction of $4\times 4$ nonrecursive NMDS matrices}\label{Section_4_nonrec_NMDS_GDLS}
From Remark~\ref{Remark_lowest_XOR_4_rec_NMDS}, we know that the matrix $B$ given in~\ref{Eqn_4-nMDS_over_field_F_2^r} has the lowest XOR count among all $k$-NMDS matrices with $k\leq 4$ over $\mathbb{F}_{2^r}$. 
The proposed GDLS matrix 
\begin{equation*}
	\begin{aligned}
		B&=
		\begin{bmatrix}
			0 & 0 & 0 & 1 \\
			1 & 1 & 0 & 0 \\
			0 & 1 & 0 & 0 \\
			0 & 0 & 1 & 1
		\end{bmatrix}
	\end{aligned}
\end{equation*}
is $3$-NMDS over a field $\mathbb{F}_{2^r}$.
Therefore, we can obtain a nonrecursive NMDS matrix of order $4$ by composing the matrix $B$ with itself three times. This results in an implementation cost of $3\cdot (2\cdot r)=6r$ over a field $\mathbb{F}_{2^r}$.

\subsection{Construction of $5\times 5$ nonrecursive NMDS matrices}\label{Section_5_nonrec_NMDS_GDLS}
In this section, we propose three GDLS matrices, $B_1,B_2$ and $B_3$, which are constructed by the permutations $\rho_1= [5, 1, 2, 3, 4]$, $\rho_2=[3, 2, 5, 4, 1]$ and the following diagonal matrices.
\begin{enumerate}[(i)]
	\item $B_1$:~$\rho_1, \rho_2,~ D_1=diag(1, 1, 1, 1, 1)$ and $D_2=diag(0, \alpha, 0, 1, 1)$
	\item $B_2$:~$\rho_1, \rho_2,~D_1=diag(1, 1, 1, 1, 1)$ and $D_2=diag(0, 1, 0, 1, 1)$
	\item $B_3$:~$\rho_1, \rho_2,~D_1=diag(1, 1, 1, 1, 1)$ and $D_2=diag(0, 1, 0, \alpha, 1),$
\end{enumerate}
where $\alpha$ is a primitive element of~$\FF_{2^4}$ and a root of $x^4+x+1$.
Using these three GDLS matrices, we propose a $5\times 5$ NMDS matrix as follows:
\begin{equation}\label{Example_5_nonrec_NMDS}
    \begin{aligned}
        M = B_2B_3B_1B_2 &= 
        \begin{bmatrix}
			0 & 1 & 0 & 0 & 1 \\
			0 & 1 & 1 & 0 & 0 \\
			0 & 0 & 0 & 1 & 0 \\
			0 & 0 & 0 & 1 & 1 \\
			1 & 0 & 0 & 0 & 0
		\end{bmatrix}
		\begin{bmatrix}
			0 & 1 & 0 & 0 & 1 \\
			0 & 1 & 1 & 0 & 0 \\
			0 & 0 & 0 & 1 & 0 \\
			0 & 0 & 0 & \alpha & 1 \\
			1 & 0 & 0 & 0 & 0
		\end{bmatrix}
		\begin{bmatrix}
			0 & 1 & 0 & 0 & 1 \\
			0 & \alpha & 1 & 0 & 0 \\
			0 & 0 & 0 & 1 & 0 \\
			0 & 0 & 0 & 1 & 1 \\
			1 & 0 & 0 & 0 & 0
		\end{bmatrix}
		\begin{bmatrix}
			0 & 1 & 0 & 0 & 1 \\
			0 & 1 & 1 & 0 & 0 \\
			0 & 0 & 0 & 1 & 0 \\
			0 & 0 & 0 & 1 & 1 \\
			1 & 0 & 0 & 0 & 0
		\end{bmatrix}\\
		&= 
		\begin{bmatrix}
		    1 & \alpha + 1 & \alpha + 1 & 0 & 1 \\
            1 & \alpha & \alpha & 1 & 0 \\
            \alpha & 1 & 0 & \alpha & \alpha + 1 \\
            \alpha + 1 & 0 & 1 & \alpha & \alpha + 1 \\
            0 & \alpha + 1 & \alpha & 1 & 1
		\end{bmatrix}.
    \end{aligned}
\end{equation}
Now, $XOR(M)=XOR(B_1)+2\cdot XOR(B_2)+ XOR(B_3)$. Therefore, $M$ can be implemented with $(1+3\cdot 4)+2\cdot (0+3\cdot 4)+ (1+3\cdot 4)=50$ XORs over the field $\FF_{2^4}/0$x$13$.

\begin{remark}\label{Example_5_nonrec_NMDS_ring_F_2^8}
    As discussed in Remark~\ref{Example_k-nMDS_order_5_6_ring_F_2^8}, if $\alpha$ is replaced with $C$, the matrix $M$ from (\ref{Example_5_nonrec_NMDS}) will be NMDS over $GL(8,\FF_{2})$, with an implementation cost of $(1+3\cdot 8)+2\cdot (0+3\cdot 8)+ (1+3\cdot 8)=98$ XORs.
\end{remark}

\subsection{Construction of $6\times 6$ nonrecursive NMDS matrices}\label{Section_6_nonrec_NMDS_GDLS}
In this section, we propose a lightweight $6\times 6$ NMDS matrix $M$ that can be implemented with $65$~XORs over the field $\mathbb{F}_{2^4}$. The matrix $M$ is constructed from three GDLS matrices, $B_1$, $B_2$, and $B_3$, of order $6$, as $M=B_2^2B_1B_3B_2$. These GDLS matrices are constructed using the permutations $\rho_1 = [6, 1, 2, 3, 4, 5]$ and $\rho_2 = [5, 6, 1, 2, 3, 4]$ and by the following diagonal matrices as follows:

\begin{enumerate}[(i)]
	\item $B_1$:~$\rho_1, \rho_2,~D_1=diag(1, 1, 1, \alpha, 1, \alpha)$ and $D_2=diag(0, \alpha, 0, 1, 0, 1)$
	\item $B_2$:~$\rho_1, \rho_2,~D_1=diag(1, 1, 1, 1, 1,1)$ and $D_2=diag(0, 1, 0, 1, 0, 1)$
	\item $B_3$:~$\rho_1, \rho_2,~D_1=diag(\alpha, 1, 1, \alpha^{-1}, 1, 1)$ and $D_2=diag(0, 1, 0, 1, 0, 1)$
\end{enumerate}

\begin{equation}\label{Example_6_nonrec_NMDS}
	\begin{aligned}
		B_1=&\begin{bmatrix}
			0 & 1 & 0 & 0 & 0 & 0 \\
			0 & 0 & 1 & 1 & 0 & 0 \\
			0 & 0 & 0 & \alpha & 0 & 0 \\
			0 & 0 & 0 & 0 & 1 & 1 \\
			0 & 0 & 0 & 0 & 0 & \alpha \\
			1 & \alpha & 0 & 0 & 0 & 0
		\end{bmatrix}~~
		B_2=
		\begin{bmatrix}
			0 & 1 & 0 & 0 & 0 & 0 \\
			0 & 0 & 1 & 1 & 0 & 0 \\
			0 & 0 & 0 & 1 & 0 & 0 \\
			0 & 0 & 0 & 0 & 1 & 1 \\
			0 & 0 & 0 & 0 & 0 & 1 \\
			1 & 1 & 0 & 0 & 0 & 0
		\end{bmatrix}~~
		B_3=
		\begin{bmatrix}
			0 & 1 & 0 & 0 & 0 & 0 \\
			0 & 0 & 1 & 1 & 0 & 0 \\
			0 & 0 & 0 & \alpha^{-1} & 0 & 0 \\
			0 & 0 & 0 & 0 & 1 & 1 \\
			0 & 0 & 0 & 0 & 0 & 1 \\
			\alpha & 1 & 0 & 0 & 0 & 0
		\end{bmatrix},
	\end{aligned}
\end{equation}
where $\alpha$ is a primitive element of~$\FF_{2^4}$ and a root of $x^4+x+1$. Now it can be checked that $M$ is NMDS over $\FF_{2^4}/0$x$13$ with an implementation cost of $65$ XORs, calculated as $XOR(M)=XOR(B_1)+3\cdot XOR(B_2)+XOR(B_3)=(1+1+1+3\cdot 4)+ 3\cdot (0+3\cdot 4)+(1+1+3\cdot 4)=65$.

\begin{remark}\label{Example_6_nonrec_NMDS_ring_F_2^8}
    As discussed in Remark~\ref{Example_k-nMDS_order_5_6_ring_F_2^8}, if $\alpha$ is replaced with $C$, the matrix $M$ constructed from $B_1,B_2$ and $B_3$ in (\ref{Example_6_nonrec_NMDS}) will be NMDS over $GL(8,\FF_{2})$, with an implementation cost of $125$ XORs.
\end{remark}

\subsection{Construction of $7\times 7$ nonrecursive NMDS matrices}\label{Section_7_nonrec_NMDS_GDLS}
This section presents three GDLS matrices, $B_1, B_2$, and $B_3$, of order $7$. These matrices are constructed using the permutations $\rho_1=[6,7,4,5,2,3,1]$ and $\rho_2=[3,2,1,4,7,6,5]$, along with specific diagonal matrices as follows:
\begin{enumerate}[(i)]
	\item $B_1$:~$\rho_1, \rho_2,~D_1=diag(1, \alpha^{-1}, 1, \alpha^{-2}, 1, \alpha^2, 1)$ and $D_2=diag(1, 0, 1, 0, 1, 0, 1)$
	\item $B_2$:~$\rho_1, \rho_2,~D_1=diag(1, 1, 1, 1, 1,1,1)$ and $D_2=diag(1, 0, 1, 0, 1, 0, 1)$
	\item $B_3$:~$\rho_1, \rho_2,~D_1=diag(1, 1, 1, 1, 1, 1, \alpha^{-1})$ and $D_2=diag(1, 0, 1, 0, \alpha^{-2}, 0, 1)$
\end{enumerate}

\begin{equation}\label{Eqn_Example_nonrec_7_NMDS}
	\begin{aligned}
		B_1&=
		\begin{bmatrix}
			0 & 0 & 1 & 0 & 0 & 0 & 1 \\
			0 & 0 & 0 & 0 & 1 & 0 & 0 \\
			1 & 0 & 0 & 0 & 0 & \alpha^{2} & 0 \\
			0 & 0 & 1 & 0 & 0 & 0 & 0 \\
			0 & 0 & 0 & \alpha^{-2} & 0 & 0 & 1 \\
			1 & 0 & 0 & 0 & 0 & 0 & 0 \\
			0 & \alpha^{-1} & 0 & 0 & 1 & 0 & 0
		\end{bmatrix}~~
		B_2=
		\begin{bmatrix}
			0 & 0 & 1 & 0 & 0 & 0 & 1 \\
			0 & 0 & 0 & 0 & 1 & 0 & 0 \\
			1 & 0 & 0 & 0 & 0 & 1 & 0 \\
			0 & 0 & 1 & 0 & 0 & 0 & 0 \\
			0 & 0 & 0 & 1 & 0 & 0 & 1 \\
			1 & 0 & 0 & 0 & 0 & 0 & 0 \\
			0 & 1 & 0 & 0 & 1 & 0 & 0
		\end{bmatrix}~~
		B_3=
		\begin{bmatrix}
			0 & 0 & 1 & 0 & 0 & 0 & \alpha^{-1} \\
			0 & 0 & 0 & 0 & 1 & 0 & 0 \\
			1 & 0 & 0 & 0 & 0 & 1 & 0 \\
			0 & 0 & 1 & 0 & 0 & 0 & 0 \\
			0 & 0 & 0 & 1 & 0 & 0 & 1 \\
			1 & 0 & 0 & 0 & 0 & 0 & 0 \\
			0 & 1 & 0 & 0 & \alpha^{-2} & 0 & 0
		\end{bmatrix},
	\end{aligned}
\end{equation}
where $\alpha$ is a primitive element in $\FF_{2^4}$ and a root of the polynomial $x^4+x+1$.
Using these three GDLS matrices, we propose a $7\times 7$ matrix $M$ given by $M=B_3B_1^2B_3B_2$. It can be verified that $M$ is an NMDS matrix over $\FF_{2^4}/0$x$13$ with an implementation cost of $96$ XORs, which is calculated as $XOR(M)=~2\cdot(1+2+2+4\cdot4)+(0+4\cdot4)+2\cdot(1+2+4\cdot4)=96$.

\begin{remark}\label{Example_7_nonrec_NMDS_ring_F_2^8}
    By replacing $\alpha$ with $C$, as discussed in Remark~\ref{Example_k-nMDS_order_5_6_ring_F_2^8}, the matrix $M$ constructed from $B_1,B_2$ and $B_3$ in (\ref{Eqn_Example_nonrec_7_NMDS}) becomes an NMDS over $GL(8,\FF_{2})$. Furthermore, the binary matrices $C^2$ and $C^{-2}$ can be implemented with $2$ XORs. Consequently, the implementation cost of the matrix $M$ is $176$ XORs over $GL(8,\FF_{2})$.
\end{remark}

\subsection{Construction of $8\times 8$ nonrecursive NMDS matrices}\label{Section_8_nonrec_NMDS_GDLS}
In this section, we present a lightweight $8\times 8$ matrix $M$ over the field $\mathbb{F}_{2^4}$ that can be implemented with $108$ XORs, which meets the best known result.
To construct the matrix $M$, we use three GDLS matrices, $B_1$, $B_2$, and $B_3$, of order $8$, by $M=B_2 B_1 B_3 B_2^3$. These GDLS matrices are generated using the permutations $\rho_1 = [4, 5, 2, 3, 8, 1, 6, 7]$ and $\rho_2 = [5, 4, 3, 6, 1, 8, 7, 2]$, along with the following diagonal matrices.

\begin{enumerate}[(i)]
	\item $B_1$:~$\rho_1, \rho_2,D_1=diag(1, \alpha, 1, \alpha,~ 1, \alpha, 1, \alpha)$ and 
	$D_2=diag(\splitatcommas{1, 0, 1, 0,~ 1, 0, 1, 0})$
	\item $B_2$:~$\rho_1, \rho_2,D_1=diag(1, 1, 1, 1,~ 1, 1, 1, 1)$, $D_2=diag(1, 0, 1, 0,~ 1, 0, 1, 0)$
	\item $B_3$:~$\rho_1, \rho_2,D_1=diag(1, 1, 1, 1,~ 1, 1, 1, 1)$ and  $D_2=diag(\splitatcommas{\alpha^{-2}, 0, \alpha^{-2}, 0,~ \alpha^{-2}, 0, \alpha^{-2},0})$
\end{enumerate}
\begin{equation}\label{Eqn_Example_nonrecursive_8_NMDS}
	\begin{aligned}
		B_1&{=}
		\begin{bmatrix}
			0 & 0 & 0 & 0 & 1 & \alpha & 0 & 0 \\
			0 & 0 & 1 & 0 & 0 & 0 & 0 & 0 \\
			0 & 0 & 1 & \alpha & 0 & 0 & 0 & 0 \\
			1 & 0 & 0 & 0 & 0 & 0 & 0 & 0 \\
			1 & \alpha & 0 & 0 & 0 & 0 & 0 & 0 \\
			0 & 0 & 0 & 0 & 0 & 0 & 1 & 0 \\
			0 & 0 & 0 & 0 & 0 & 0 & 1 & \alpha \\
			0 & 0 & 0 & 0 & 1 & 0 & 0 & 0
		\end{bmatrix}
		B_2{=}
		\begin{bmatrix}
			0 & 0 & 0 & 0 & 1 & 1 & 0 & 0 \\
			0 & 0 & 1 & 0 & 0 & 0 & 0 & 0 \\
			0 & 0 & 1 & 1 & 0 & 0 & 0 & 0 \\
			1 & 0 & 0 & 0 & 0 & 0 & 0 & 0 \\
			1 & 1 & 0 & 0 & 0 & 0 & 0 & 0 \\
			0 & 0 & 0 & 0 & 0 & 0 & 1 & 0 \\
			0 & 0 & 0 & 0 & 0 & 0 & 1 & 1 \\
			0 & 0 & 0 & 0 & 1 & 0 & 0 & 0
		\end{bmatrix}
		B_3{=}
		\begin{bmatrix}
		    0 & 0 & 0 & 0 & \alpha^{-2} & 1 & 0 & 0 \\
			0 & 0 & 1 & 0 & 0 & 0 & 0 & 0 \\
			0 & 0 & \alpha^{-2} & 1 & 0 & 0 & 0 & 0 \\
			1 & 0 & 0 & 0 & 0 & 0 & 0 & 0 \\
			\alpha^{-2} & 1 & 0 & 0 & 0 & 0 & 0 & 0 \\
			0 & 0 & 0 & 0 & 0 & 0 & 1 & 0 \\
			0 & 0 & 0 & 0 & 0 & 0 & \alpha^{-2} & 1 \\
			0 & 0 & 0 & 0 & 1 & 0 & 0 & 0
		\end{bmatrix},
	\end{aligned}
\end{equation}
where $\alpha$ is a primitive element of~$\FF_{2^4}$ and a root of $x^4+x+1$. Therefore, $M$ can be implemented with $(1+1+1+1+4\cdot 4)+4\cdot (0+4\cdot 4) + (2+2+2+2+4\cdot 4)=108$ XORs.

\begin{remark}\label{Example_8_nonrec_NMDS_ring_F_2^8}
    As discussed in Remark~\ref{Example_k-nMDS_order_5_6_ring_F_2^8}, if we substitute $\alpha$ with $C$, the matrix $M$ that is formed from $B_1, B_2,$ and $B_3$ in (\ref{Eqn_Example_nonrecursive_8_NMDS}) becomes an NMDS matrix over $GL(8,\FF_{2})$. Also, the binary matrix $C^{-2}$ can be implemented with only $2$ XORs. Therefore, the implementation cost of the matrix $M$ becomes $204$ XORs over $GL(8,\FF_{2})$.
\end{remark}

\begin{remark}
	From Table~\ref{table_comparison_nonrec_NMDS}, we can observe that the proposed nonrecursive NMDS matrices of order $4$ and $8$ have the same cost as given in the paper~\cite{SM19} for $\mathbb{F}_{2^4}$ and $\mathbb{F}_{2^8}$. However, it should be noted that the paper~\cite{SM19} does not provide NMDS matrices for the orders $n=5$, $6$, and $7$. In contrast, our proposed nonrecursive NMDS matrices for orders 5, 6, and 7 not only fulfill this gap but also have a lower hardware cost compared to the existing nonrecursive NMDS matrices in the literature~\cite{Li_Wang_2017} (as shown in Table~\ref{table_comparison_nonrec_NMDS}). Also, the GDLS matrix structure is not limited to even orders, unlike the GFS or EGFS matrix structure used in~\cite{SM19}. The GDLS matrix structure is applicable to matrices of all orders, enabling improvements in several parameters that are not achievable with the GFS or EGFS matrix structure.
\end{remark}

\begin{table}[h!]
	\centering
	\caption{A summary of results on NMDS matrices of this paper.}\label{table_comparison_our_results}
	\begin{tabular}{|ccccc|}
	\hline
	Order $n$~~ & Input &~~ Type &~~ Iterations &~~ XOR count \\ \hline
	$4$	& 4-bit & recursive & 3 & 8\\
	$4$	& 4-bit & nonrecursive  & - & 24\\
	$4$	& 8-bit & recursive & 3 & 16\\
	$4$	& 8-bit & nonrecursive  & - & 48\\ \hline
	$5$	& 4-bit & recursive & 4 & 14\\
	$5$	& 4-bit & recursive & 5 & 13\\
	$5$	& 4-bit & nonrecursive  & - & 50\\
	$5$	& 8-bit & recursive & 4 & 26\\
	$5$	& 8-bit & recursive & 5 & 25\\
	$5$	& 8-bit & nonrecursive  & - & 98\\
	\hline
	$6$	& 4-bit & recursive & 5 & 14\\
	$6$	& 4-bit & recursive & 6 & 13\\
	$6$	& 4-bit & nonrecursive  & - & 65\\
	$6$	& 8-bit & recursive & 4 & 26\\
	$6$	& 8-bit & recursive & 5 & 25\\
	$6$	& 8-bit & nonrecursive  & - & 125\\
	\hline
	$7$	& 4-bit & recursive & 5 & 22\\
	$7$	& 4-bit & recursive & 6 & 20\\
	$7$	& 4-bit & recursive & 7 & 18\\
	$7$	& 4-bit & nonrecursive  & - & 96\\
	$7$	& 8-bit & recursive & 5 & 38\\
	$7$	& 8-bit & recursive & 6 & 36\\
	$7$	& 8-bit & recursive & 7 & 34\\
	$7$	& 8-bit & nonrecursive  & - & 176\\
	\hline
	$8$	& 4-bit & nonrecursive  & - & 108\\
	$8$	& 8-bit & recursive & 7 & 40\\
	$8$	& 8-bit & recursive & 8 & 38\\
	$8$	& 8-bit & nonrecursive  & - & 204\\ \hline
	\end{tabular}
\end{table}

\section{Conclusion and Future Work}\label{Section_NMDS_conclusion_future_work}
This paper examines the construction of NMDS matrices using both recursive and nonrecursive approaches, investigates various theoretical results, and presents some lightweight NMDS matrices of various orders in both approaches.
Table~\ref{table_comparison_our_results} presents a summary of the implementation cost of the NMDS matrices given in this paper. We explore the DLS matrices and derive some theoretical results for the construction of recursive NMDS matrices.
We prove that for $n\geq 4$, there does not exist any $k$-NMDS sparse matrix of order $n$ with $\mathcal{K}=1$ and $k\leq n$ over a field of characteristic $2$.
For the nonrecursive NMDS matrices, we examine the circulant, left-circulant, Toeplitz, and Hankel families of matrices. We prove that Toeplitz matrices of order $n> 4$ cannot be simultaneously NMDS and involutory over a field of characteristic 2.
To compare with MDS matrices, we examine some well-known results of MDS matrices and apply them to NMDS matrices. For instance, Table~\ref{Table_Comparision_MDS_NMDS} compares the involutory and orthogonal properties of MDS and NMDS matrices constructed from circulant, left-circulant, Toeplitz, and Hankel matrices.
We use GDLS matrices to provide some lightweight NMDS matrices that can be computed in one clock cycle. The proposed nonrecursive NMDS matrices of orders 4, 5, 6, 7, and 8 can be implemented with 24, 50, 65, 96, and 108 XORs over $\mathbb{F}_{2^4}$, respectively. These results match the best-known lightweight NMDS matrices of orders 4 and 8, and outperform the best-known matrices of orders 5, 6, and 7.

\par In the literature, there has been an extensive study of the direct construction of MDS matrices using both recursive and nonrecursive methods. Nonrecursive direct constructions are mainly obtained from Cauchy and Vandermonde based constructions, while recursive direct constructions are derived from companion matrices by some coding theoretic techniques. However, there is no direct construction available for the NMDS matrices. Therefore, finding a direct construction method for NMDS matrices using both recursive and nonrecursive approaches could be a potential area for future work.

\medskip

\bibliographystyle{plain}
\bibliography{full_NMDS}
\end{document}